\pdfoutput=1
\documentclass[12pt,fleqn]{article}
\usepackage[utf8]{inputenc}
\usepackage[T1]{fontenc}

\usepackage{comment}
\usepackage{fullpage}

\usepackage{amsmath,amsthm,amssymb}
\usepackage{mathtools,thm-restate}
\usepackage{authblk}

\usepackage{fouriernc}
\usepackage[cal=pxtx,scr=rsfs,bb=txof,frak=pxtx]{mathalpha}
\let\newmathbb\mathbb
\AtBeginDocument{%
    \let\mathbb\relax\newcommand{\mathbb}[1]{\bm{\newmathbb{#1}}}
}
\linespread{1.05}
\usepackage{esvect} 

\usepackage{enumitem}
\setlist{itemsep=1mm}

\newcommand{\uppsi}{\otherpsi}

\usepackage[style=alphabetic,natbib=true,sortcites=true,backref=true,maxnames=99,maxalphanames=99]{biblatex}

\DefineBibliographyStrings{english}{%
  backrefpage = {$\Lsh$~p.},
  backrefpages = {$\Lsh$~pp.},
}
\addbibresource{main.bib}

\usepackage[colorlinks=true,citecolor=blue!65!black,linkcolor=red!70!black]{hyperref}
\usepackage[nameinlink]{cleveref}

\usepackage{subfig}
\usepackage{booktabs,colortbl,multirow}
\usepackage{diagbox,enumitem}
\usepackage{ascmac}
\usepackage[normalem]{ulem}

\usepackage[thinlines]{easytable}  
\usepackage{pifont}  
\usepackage{wrapfig}  

\usepackage{comment}
\usepackage{url}

\usepackage[colorinlistoftodos,prependcaption]{todonotes}

\usepackage{xspace}
\usepackage{bm}

\usepackage{tikz}
\usepackage{tikz-3dplot}
\usetikzlibrary{math}
\usetikzlibrary{positioning}
\usetikzlibrary{decorations.markings}
\usetikzlibrary{shapes.geometric}
\usetikzlibrary{arrows.meta}
\usetikzlibrary{arrows,automata,decorations.pathmorphing,fit,positioning,arrows.meta,calc}

\crefname{theorem}{Theorem}{Theorems}
\crefname{proposition}{Proposition}{Propositions}
\crefname{lemma}{Lemma}{Lemmas}
\crefname{claim}{Claim}{Claims}
\crefname{corollary}{Corollary}{Corollaries}
\crefname{observation}{Observation}{Observations}
\crefname{remark}{Remark}{Remarks}
\crefname{example}{Example}{Examples}
\crefname{hypothesis}{Hypothesis}{Hypotheses}
\crefname{definition}{Definition}{Definitions}
\crefname{problem}{Problem}{Problems}

\crefname{section}{Section}{Sections}
\crefname{appendix}{Appendix}{Appendices}
\crefname{equation}{Eq.}{Eqs.}
\crefname{table}{Table}{Tables}
\crefname{figure}{Figure}{Figures}

\newcommand{\hl}[1]{\textcolor{red}{\uwave{\textcolor{black}{#1}}}}

\renewcommand{\geq}{\geqslant}
\renewcommand{\leq}{\leqslant}
\renewcommand{\epsilon}{\varepsilon}
\renewcommand{\rho}{\varrho}
\renewcommand{\phi}{\varphi}
\renewcommand{\bar}{\overline}
\renewcommand{\hat}{\widehat}

\newcommand{\prb}[1]{\textup{\textsc{#1}}\xspace}
\renewcommand{\prb}[1]{\textup{\textsf{#1}}\xspace}
\newcommand{\nth}[1]{#1\textsuperscript{th}\xspace}

\renewcommand{\vec}[1]{\mathbf{\bm{#1}}}
\newcommand{\mat}[1]{\mathbf{\bm{#1}}}

\newcommand{\reco}{\leftrightsquigarrow}
\newcommand{\rme}{\mathrm{e}}

\DeclareMathOperator*{\argmax}{argmax}
\let\Pr\relax\DeclareMathOperator*{\Pr}{\mathbb{Pr}}
\DeclareMathOperator{\E}{\mathbb{E}}
\DeclareMathOperator{\bigO}{\mathcal{O}}
\DeclareMathOperator{\val}{\mathsf{val}}
\DeclareMathOperator{\cost}{\mathsf{cost}}
\DeclareMathOperator{\PLR}{\mathsf{POP}}
\DeclareMathOperator{\OPT}{\mathsf{opt}}
\DeclareMathOperator{\Enc}{\mathsf{Enc}}

\newcommand{\ini}{\mathsf{ini}}
\newcommand{\tar}{\mathsf{tar}}
\newcommand{\ttt}{t}
\newcommand{\TTT}{T}
\newcommand{\aaa}{\mathtt{a}}
\newcommand{\bbb}{\mathtt{b}}
\newcommand{\ccc}{\mathtt{c}}
\newcommand{\rep}{\mathsf{r}}
\newcommand{\Rep}{\mathsf{R}}
\newcommand{\approxone}{\vartheta}

\newcommand{\sqpsi}{\vec{\uppsi}}

\newcommand{\YES}{\textup{\textsc{yes}}\xspace}
\newcommand{\NO}{\textup{\textsc{no}}\xspace}

\newcommand{\cP}{\textup{\textbf{P}}\xspace}
\newcommand{\NP}{\textup{\textbf{NP}}\xspace}
\newcommand{\PSPACE}{\textup{\textbf{PSPACE}}\xspace}

\newcommand{\BCSP}[1][]{\prb{2-CSP{#1}}}
\newcommand{\BCSPReconf}[1][]{\prb{2-CSP{#1} Reconfiguration}}
\newcommand{\MaxminBCSPReconf}[1][]{\prb{Maxmin 2-CSP{#1} Reconfiguration}}
\newcommand{\QCSPReconf}[1][]{\prb{4-CSP{#1} Reconfiguration}}
\newcommand{\MaxminQCSPReconf}[1][]{\prb{Maxmin 4-CSP{#1} Reconfiguration}}
\newcommand{\SetCoverReconf}{\prb{Set Cover Reconfiguration}}
\newcommand{\MinmaxSetCoverReconf}{\prb{Minmax Set Cover Reconfiguration}}
\newcommand{\DominatingSetReconf}{\prb{Dominating Set Reconfiguration}}
\newcommand{\MinmaxDominatingSetReconf}{\prb{Minmax Dominating Set Reconfiguration}}

\newcommand{\blackred}{red!60!black}
\newcommand{\blackgreen}{green!50!black}
\newcommand{\blackblue}{blue!70!black}
\newcommand{\whitered}{red!50!white}
\newcommand{\whitegreen}{green!60!white}
\newcommand{\whiteblue}{blue!50!white}
\newcommand{\RRR}{\textcolor{\blackred}{\mathtt{r}}}
\newcommand{\GGG}{\textcolor{\blackgreen}{\mathtt{g}}}
\newcommand{\BBB}{\textcolor{\blackblue}{\mathtt{b}}}

\usepackage[ruled]{algorithm}
\usepackage{algpseudocodex}

\algrenewcommand\textproc{\textsl}


\newcommand{\calC}{\mathcal{C}}

\newcommand{\calF}{\mathcal{F}}

\newcommand{\calU}{\mathcal{U}}


\newcommand{\bbN}{\mathbb{N}}

\newcommand{\scrC}{\mathscr{C}}
\newcommand{\scrD}{\mathscr{D}}

\newcommand{\scrS}{\mathscr{S}}

\let\Pr\relax\DeclareMathOperator*{\Pr}{\mathbb{P}}

\usepackage{footnote}
\makesavenoteenv{algorithmic}

\newtheorem{theorem}{Theorem}[section]

\newtheorem{lemma}[theorem]{Lemma}
\newtheorem{claim}[theorem]{Claim}
\newtheorem{corollary}[theorem]{Corollary}
\newtheorem{observation}[theorem]{Observation}
\newtheorem{remark}[theorem]{Remark}

\theoremstyle{definition}
\newtheorem{definition}[theorem]{Definition}
\newtheorem{problem}[theorem]{Problem}

\numberwithin{equation}{section}

\title{Gap Amplification for Reconfiguration Problems\footnote{
A preliminary version of this paper appeared in
\emph{Proc.~35th Annu.~ACM-SIAM Symp.~on Discrete Algorithms} (SODA 2024) \cite{ohsaka2024gap}.
}}
\author{Naoto Ohsaka\thanks{
    CyberAgent, Inc., Tokyo, Japan.
    \href{mailto:ohsaka\_naoto@cyberagent.co.jp}{\texttt{ohsaka\_naoto@cyberagent.co.jp}}; 
    \href{mailto:naoto.ohsaka@gmail.com}{\texttt{naoto.ohsaka@gmail.com}}
}}
\date{\today}

\begin{document}

\maketitle
\thispagestyle{empty}
\begin{abstract}\noindent\emph{Combinatorial reconfiguration} is
a brand-new field in theoretical computer science that studies
the reachability and connectivity over the solution space of a particular combinatorial problem.
We study the hardness of accomplishing
``approximate'' reconfigurability,
which affords to relax the feasibility of intermediate solutions.
For example,
in \prb{Minmax Set Cover Reconfiguration},
given a set system $\mathcal{F}$ over a universe $\mathcal{U}$ and
its two covers $\mathcal{C}^\mathsf{ini}$ and $\mathcal{C}^\mathsf{tar}$,
we are required to transform $\mathcal{C}^\mathsf{ini}$ into $\mathcal{C}^\mathsf{tar}$
by repeatedly adding or removing a single set of $\mathcal{F}$,
while covering $\mathcal{U}$ anytime,
so as to minimize the \emph{maximum size} of covers during transformation.
The recent study by {Ohsaka}~(STACS 2023)~\cite{ohsaka2023gap} gives
evidence that a host of reconfiguration problems
are \PSPACE-hard to approximate assuming the \emph{Reconfiguration Inapproximability Hypothesis} (RIH),
which postulates that a gap version of \prb{Maxmin CSP Reconfiguration} is \PSPACE-hard.
One limitation of this approach is that
inapproximability factors are not explicitly shown, so that
even a $1.00 \cdots 001$-approximation algorithm for \prb{Minmax Set Cover Reconfiguration}
may not be ruled out, whereas it admits a $2$-factor approximation algorithm as per
{Ito, Demaine, Harvey, Papadimitriou, Sideri, Uehara, and Uno}~(Theor.~Comput.~Sci., 2011)~\cite{ito2011complexity}.

In this paper, we demonstrate \emph{gap amplification} for reconfiguration problems.
In particular,
we prove an explicit factor of \PSPACE-hardness of approximation 
for three popular reconfiguration problems only assuming RIH.
Our main result is that under RIH,
\prb{Maxmin 2-CSP Reconfiguration} is \PSPACE-hard to approximate within a factor of $0.9942$.
Moreover, the same result holds even if the constraint graph
is restricted to $(d,\lambda)$-expander for arbitrarily small $\frac{\lambda}{d}$.
The crux of its proof is
an alteration of the gap amplification technique due to {Dinur}~(J.~ACM, 2007)~\cite{dinur2007pcp},
which amplifies the $1$~vs.~$1-\varepsilon$ gap for arbitrarily small $\varepsilon \in (0,1)$
up to the $1$~vs.~$1-0.0058$ gap.
As an application of the main result, we demonstrate that
\prb{Minmax Set Cover Reconfiguration} and
\prb{Minmax Dominating Set Reconfiguration}
are \PSPACE-hard to approximate within a factor of $1.0029$ under RIH.
Our proof is based on
a gap-preserving reduction from \prb{Label Cover} to \prb{Set Cover}
due to {Lund and Yannakakis}~(J.~ACM, 1994)~\cite{lund1994hardness}.
Unlike {Lund--Yannakakis}' reduction,
the expander mixing lemma is essential to use.
We highlight that all results hold unconditionally
as long as ``\PSPACE-hard'' is replaced by ``\NP-hard,'' and
are the first explicit inapproximability results for reconfiguration problems without resorting to the parallel repetition theorem.
We finally complement the main result by showing that
it is \NP-hard to approximate \prb{Maxmin 2-CSP Reconfiguration}
within a factor better than~$\frac{3}{4}$.
\end{abstract}
\tableofcontents

\clearpage

\section{Introduction}
\subsection{Background}
\emph{Combinatorial reconfiguration} is a brand-new field in theoretical computer science that studies
the reachability and connectivity over the solution space of
a particular combinatorial problem, referred to as a \emph{source problem}.
Especially,
we aim to find a ``step-by-step'' transformation between a pair of feasible solutions,
called a \emph{reconfiguration sequence},
while maintaining the feasibility of any intermediate solution.
One of the reconfiguration problems we study in this paper is
\SetCoverReconf~\cite{ito2011complexity}:
given a set system $\calF$ over a universe $\calU$ and
its two covers $\calC^\ini$ and $\calC^\tar$ of size $k$,
we are required to transform $\calC^\ini$ into $\calC^\tar$
by repeatedly adding or removing a single set of $\calF$,
while only passing through covers of size at most $k+1$.
The concept of reconfiguration arises in problems involving transformation and movement,
which may date back to
motion planning \cite{hopcroft1984complexity} and
classical puzzles, including the 15 puzzle \cite{johnson1879notes} and Rubik's Cube.
Since the establishment of the reconfiguration framework due to
\citet*{ito2011complexity},
numerous reconfiguration problems have been defined from
Boolean satisfiability, constraint satisfaction problems, graph problems, and others.

Of particular interest has been to elucidate their computational complexity,
which exhibits the following trend:
\begin{itemize}
\item
    On the intractable side,
    a reconfiguration problem generally becomes \PSPACE-complete if its source problem is intractable (say, \NP-complete);
    e.g., reconfiguration problems of
    \prb{3-SAT} \cite{gopalan2009connectivity},
    \prb{Independent Set} \cite{hearn2005pspace,hearn2009games}, and
    \prb{Set Cover} \cite{ito2011complexity} are all \PSPACE-complete.
\item
    On the tractable side,
    a source problem in \cP frequently leads to a reconfiguration problem that also belongs to \cP, e.g.,
    \prb{Matching} \cite{ito2011complexity} and
    \prb{2-SAT} \cite{gopalan2009connectivity}.
\end{itemize}
Some exceptions are known:
whereas \prb{3-Coloring} is \NP-complete,
its reconfiguration problem is solvable in polynomial time \cite{cereceda2011finding};
\prb{Shortest Path} on a graph is tractable,
but its reconfiguration problem is \PSPACE-complete \cite{bonsma2013complexity}.
Recent studies have also focused on the restricted graph class and the parameterized complexity.
We refer the readers to the surveys by
\citet{heuvel2013complexity,nishimura2018introduction}, and
the Combinatorial Reconfiguration wiki \cite{hoang2024combinatorial}
for more algorithmic and hardness results of reconfiguration problems.

In this paper, we study the hardness of accomplishing ``approximate'' reconfigurability,
which affords to \emph{relax} the feasibility of intermediate solutions.
For example, in \MinmaxSetCoverReconf \cite{ito2011complexity}
--- an \emph{optimization version} of \prb{Set Cover Reconfiguration} --- 
we are enabled to employ any covers of size greater than $k+1$,
but required to minimize the \emph{maximum size} of covers in the reconfiguration sequence.
Solving this optimization problem accurately,
we may find an approximate reconfiguration sequence for \SetCoverReconf, say,
comprising covers whose size is at most $1$\% larger than $k+1$;
in fact, it admits a $2$-factor approximation algorithm as per \citet[Theorem~6]{ito2011complexity}.
Such optimization versions would naturally emerge if
we were informed of the nonexistence of a reconfiguration sequence for the original decision version, or
we already knew a problem of interest to be intractable.

In their seminal work,
\citet[Theorems~4 and~5]{ito2011complexity}
demonstrated \NP-hardness of approximating
optimization versions of \prb{Clique Reconfiguration} and \prb{SAT Reconfiguration}, and
posed \PSPACE-hardness of approximation for reconfiguration problems as an open problem.
Note that their results do not bring \PSPACE-hardness of approximation
because of relying on the \NP-hardness of approximating the corresponding optimization problems
(i.e., \prb{Max Clique}~\cite{hastad1999clique} and \prb{Max SAT}~\cite{hastad2001some}, respectively).
The significance of showing \PSPACE-hardness is that
it is tight because many reconfiguration problems are \PSPACE-complete, and
it disproves the existence of a witness (especially a reconfiguration sequence) of polynomial length under
\NP~$\neq$~\PSPACE.

The recent study by \citet{ohsaka2023gap},
in the spirit of \PSPACE-hardness of approximation,
gives evidence that a host of (optimization versions of) reconfiguration problems are \PSPACE-hard to approximate,
including those of 
\prb{2-CSP},
\prb{3-SAT},
\prb{Independent Set},
\prb{Vertex Cover}, and
\prb{Clique}, as well as
\prb{Nondeterministic Constraint Logic} \cite{hearn2005pspace,hearn2009games}.
Their approach is to design a series of gap-preserving reductions starting from
the \emph{Reconfiguration Inapproximability Hypothesis} (RIH),
a working hypothesis postulating
that a gap version of \prb{Maxmin CSP Reconfiguration} is \PSPACE-hard.\footnote{See \cref{sec:pre} for the precise definition.}
Our study builds on \cite{ohsaka2023gap} and delves into the \emph{degree} of inapproximability.

One limitation of {Ohsaka}'s approach \cite{ohsaka2023gap} is
that inapproximability factors are not explicitly shown:
although \MinmaxSetCoverReconf
is found to be $(1+\epsilon)$-factor inapproximable for some $\epsilon \in (0,1)$ under RIH,
the value of such $\epsilon$ might be so small that
even a $1.00 \cdots 001$-factor approximation algorithm may not be ruled out.
Hence, we seek gap amplifiability of reconfiguration problems.
\emph{Gap amplification} is a (polynomial-time) reduction
that makes a tiny gap for a particular gap problem into a larger gap.
In the \NP regime,
the \emph{parallel repetition theorem} due to \citet{raz1998parallel}
serves as gap amplification, deriving strong (or even optimal) inapproximability results;
e.g., \prb{Set Cover} is \NP-hard to approximate within a factor better than $\ln n$
(where $n$ is the universe size) \cite{lund1994hardness,feige1998threshold,dinur2014analytical}.
At the core of this theorem is to imply
along with the PCP theorem~\cite{arora1998probabilistic,arora1998proof}
that
\prb{Max 2-CSP} is \NP-hard to approximate within any constant factor.
Unfortunately,
for \MaxminBCSPReconf, which is a reconfiguration analogue of \prb{Max 2-CSP},
a naive parallel repetition fails to decrease the soundness error;
it is further known to be approximable within a factor of $\approx \frac{1}{4}$ for some graph classes \cite{ohsaka2023approximate}.
Our research question is thus the following:
\begin{center}
\emph{
Can we derive an explicit factor of \PSPACE-hardness of approximation \\
for reconfiguration problems only assuming RIH by gap amplification?
}
\end{center}

\subsection{Our Results}
We provide a partially positive answer to the above question by demonstrating
gap amplification for reconfiguration problems.
In particular, as summarized in \cref{tab:summary},
\MaxminBCSPReconf,
\MinmaxSetCoverReconf, and
\MinmaxDominatingSetReconf
are \PSPACE-hard to approximate within some universal constant factor under RIH,
even though RIH itself does not specify any value of the gap parameter.
Our reductions start from RIH and preserve the \emph{perfect completeness}; i.e.,
\YES instances have a solution to the decision counterpart.
We highlight that all \PSPACE-hardness results hold unconditionally
as long as ``\PSPACE-hard'' is replaced by ``\NP-hard'' (because so is RIH), and
are the first explicit inapproximability results for reconfiguration problems without resorting to the parallel repetition theorem.

\begin{table}[t]
    \centering
    \small
    \begin{tabular}{c|ll}
    \toprule
    & \multirow{2}{*}{\MaxminBCSPReconf} & \multicolumn{1}{c}{\MinmaxSetCoverReconf and} \\
    &  & \multicolumn{1}{c}{\MinmaxDominatingSetReconf}
    \\
    \midrule
    \PSPACE-hardness &
    \multirow{2}{*}{$\hl{0.9942}$ (\cref{thm:amp})} &
    $1.0029$ (\cref{cor:sc,cor:ds})
    \\
    (under RIH) & &
    $\hl{2-\frac{1}{\operatorname{polyloglog} n}}$ \cite{hirahara2024optimal}$^*$
    \\
    \midrule
    \multirow{4}{*}{\NP-hardness} &
    $0.75+\epsilon$ (\cref{thm:NP-BCSPR}) &
    \multirow{1}{*}{$1.0029$ (\cref{cor:sc,cor:ds})}
    \\
    & $\hl{0.5+\epsilon}$ \cite{karthik2023inapproximability}$^*$ &
    $2-\epsilon$ \cite{karthik2023inapproximability}$^*$
    \\
    & $\frac{143}{144}$ \cite{ito2011complexity,ohsaka2023gap} &
    $\hl{2-\frac{1}{\operatorname{polyloglog} n}}$ \cite{hirahara2024optimal}$^*$
    \\
    \midrule
    \multirow{2}{*}{approximability} &
    $0.25-\bigO\left(\frac{1}{\text{ave.~deg.}}\right)$ \cite{ohsaka2023approximate} &
    \multirow{2}{*}{$\hl{2}$ \cite{ito2011complexity}}
    \\
    & $\hl{0.5-\epsilon}$ \cite{karthik2023inapproximability}$^*$ &
    \\
    \bottomrule
    \end{tabular}
    \caption{
        Summary of the inapproximability results obtained in this paper, where 
        $\epsilon$ indicates any small positive number in $(0,1)$ and
        $n$ indicates the size of the universe or the number of vertices.
        See \cref{sec:pre,sec:sc} for the precise definition of 
        \MaxminBCSPReconf and \MinmaxSetCoverReconf, respectively.
        The best result for each case is \hl{wavy-lined}, and
        references marked with ``$^*$'' represent follow-up work (see \cref{subsec:intro:follow-up}).
    }
    \label{tab:summary}
\end{table}

Our main result is that under RIH,
\MaxminBCSPReconf is \PSPACE-hard to approximate within a factor of $0.9942$ (\cref{thm:amp}).
Succinctly speaking, for a constraint graph $G$ and its two satisfying assignments $\psi^\ini$ and $\psi^\tar$,
\MaxminBCSPReconf requires to
transform $\psi^\ini$ into $\psi^\tar$ by repeatedly changing a single value 
so as to maximize the \emph{minimum fraction} of satisfied edges during the transformation.
Moreover, the same result holds even if
$G$ is restricted to $(d,\lambda)$-expander for arbitrarily small $\frac{\lambda}{d}$,
which is critical for further reducing to \MinmaxSetCoverReconf.
The crux of the proof of \cref{thm:amp} is an alteration of
the \emph{gap amplification} technique due to \citet{dinur2007pcp},
which amplifies the $1$~vs.~$1-\epsilon$ gap for arbitrarily small $\epsilon \in (0,1)$
up to the $1$~vs.~$1-0.0058$ gap:
\begin{lemma}[Gap amplification lemma; informal; see \cref{lem:amp:pow}]
    For every number $\epsilon \in (0,1)$,
    there exists a gap-preserving reduction
    from \MaxminBCSPReconf to itself such that
    \begin{enumerate}
        \item the perfect completeness is preserved, and
        \item if any reconfiguration for the former violates $\epsilon$-fraction of edges, then
        $\approx 0.0058$-fraction of edges must be unsatisfied during any reconfiguration for the latter.
    \end{enumerate}
\end{lemma}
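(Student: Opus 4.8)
The plan is to transplant, into the reconfiguration setting, the graph-powering step at the heart of Dinur's proof of the PCP theorem, with the modifications needed to make a reconfiguration \emph{sequence}, rather than a single assignment, survive the reduction. Given an instance $(G,\psi^\ini,\psi^\tar)$ of \MaxminBCSPReconf, the reduction would first preprocess $G$ into a bounded-degree expander --- reusing the degree-reduction and expanderization gadgets already available for reconfiguration instances, which transport reconfiguration sequences with only a constant-factor loss in the gap and keep perfect completeness --- and then output the ``$t$-th power'' $(G^{\langle t\rangle},\Psi^\ini,\Psi^\tar)$: the vertex set is unchanged; the edges correspond to walks of a length tied to $t$; the new value of a vertex $v$ is an \emph{opinion}, i.e.\ an assignment to a ball around $v$; the constraint on an edge asserts that the two endpoint opinions are mutually consistent and jointly satisfy every original constraint along the walk; and $\Psi^\ini,\Psi^\tar$ are the \emph{honest} opinions induced by $\psi^\ini,\psi^\tar$. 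Since $|\Sigma|$, the degree, and $t$ are all constants, the alphabet stays finite, so no separate alphabet-reduction is needed and the output is again a \MaxminBCSPReconf instance; moreover, because the walk length grows with $t$, its expansion ratio $\lambda/d$ can be made arbitrarily small, which is what the later reduction to \MinmaxSetCoverReconf wants. The parameter $t=t(\epsilon)$ is fixed at the very end.

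For completeness I would take a reconfiguration sequence $\psi^\ini=\psi_0,\dots,\psi_\ell=\psi^\tar$ for $G$ with every $\psi_i$ satisfying and lift it to one for $G^{\langle t\rangle}$ in which every intermediate opinion satisfies \emph{all} constraints. The single move $\psi_i\to\psi_{i+1}$, which changes the value of one vertex $w$, is realized by \emph{propagating} that change: update, one at a time, the opinions of precisely those vertices whose ball contains $w$, switching each from its $\psi_i$-honest to its $\psi_{i+1}$-honest value. The structural fact that makes this go through with \emph{zero} violated constraints --- and this is exactly where the construction departs from Dinur's and where the choice of ball radius versus walk length is forced --- is that, for the right parameters, every powered-graph neighbor of a vertex whose ball contains $w$ has its own ball disjoint from $\{w\}$, so its honest opinion is literally unchanged between $\psi_i$ and $\psi_{i+1}$; hence each individual opinion update leaves all incident constraints satisfied. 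Chaining the propagations over $i$ produces the desired sequence from $\Psi^\ini$ to $\Psi^\tar$, so perfect completeness is preserved.

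For soundness --- the quantitative heart, and the genuinely Dinur-style part --- I would take an arbitrary reconfiguration sequence $\Sigma_0,\dots,\Sigma_L$ for $G^{\langle t\rangle}$ and attach to each $\Sigma_j$ the \emph{plurality} assignment $\psi^{(j)}$ of $G$: every vertex takes a majority vote over the opinions of all vertices whose ball contains it. Two points need care. First, consecutive $\Sigma_j,\Sigma_{j+1}$ differ in a single opinion, so their pluralities differ on at most a constant number of vertices (bounded degree); hence $(\psi^{(j)})_j$ must be refined into a genuine one-value-at-a-time sequence, and the extra constraints that can turn unsatisfied along such a refinement number only $\bigO(1)$ --- an $o(1)$ fraction once $G$ is padded to be large enough --- which is absorbed into a negligible slack in $\epsilon$. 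Second, the core estimate: if $\psi^{(j)}$ leaves an $\epsilon'$-fraction of $G$'s edges unsatisfied, then $\Sigma_j$ leaves at least a $\min\{c\sqrt{t}\,\epsilon',\ \delta_0\}$-fraction of $G^{\langle t\rangle}$'s edges unsatisfied, for an absolute constant $\delta_0\approx 0.0058$. This follows exactly as in Dinur: a uniformly random walk of the prescribed length meets a $\psi^{(j)}$-violated $G$-edge at a position where \emph{both} endpoint opinions agree with the plurality with probability at least $\min\{c\sqrt{t}\,\epsilon',\delta_0\}$ --- the $\sqrt{t}$ coming from anti-concentration of the number of violated edges met along the walk, and the ratio $\lambda/d$ entering through the expander mixing lemma --- and any such walk witnesses a violated powered constraint. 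Combining: if \emph{every} reconfiguration for $G$ leaves an $\epsilon$-fraction of edges unsatisfied at some point, then for every reconfiguration for $G^{\langle t\rangle}$ the refined plurality sequence has a step $j$ with $\psi^{(j)}$ violating at least $\epsilon-o(1)$ of $G$'s edges, so at that step $\Sigma_j$ violates at least $\min\{c\sqrt{t}\,(\epsilon-o(1)),\delta_0\}$ of $G^{\langle t\rangle}$'s edges; choosing $t=t(\epsilon)$ large enough makes this equal to $\delta_0\approx 0.0058$.

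The step I expect to be the main obstacle is reconciling these two directions, since the tension between them has no counterpart in Dinur's static argument: perfect completeness demands that the powered instance be flexible enough for a single source move to lift to a chain of legal moves that never violates a constraint, while soundness demands it be rigid enough that no ``spurious'' reconfiguration of $G^{\langle t\rangle}$ outperforms what $G$ permits. Pinning down the ball radius, the walk length (including how the constraint of a parallel edge refers to its particular walk, so that walk multiplicities do not spoil either direction), and how small $\lambda/d$ must be so that the zero-violation propagation and the $\sqrt{t}$-amplification coexist is the delicate part --- as is then optimizing the resulting constant down to $\delta_0\approx 0.0058$ and checking that all the $o(1)$ slacks, and the extra violations created when refining plurality sequences, are genuinely harmless.
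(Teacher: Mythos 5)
Your overall architecture (preprocess into a constant-degree expander, then power the graph with opinion-valued vertices, extract plurality assignments, interpolate the resulting non-adjacent plurality sequence, and run a Dinur-style counting argument capped at an absolute constant) matches the paper's proof. But the completeness step contains a genuine and fatal gap, and it is exactly the point where the paper departs from a naive transplant of Dinur's powering. You claim that a single source move at a vertex $w$ can be propagated opinion-by-opinion with \emph{zero} violated constraints because ``every powered-graph neighbor of a vertex whose ball contains $w$ has its own ball disjoint from $\{w\}$.'' No choice of parameters makes this true while keeping the verifier meaningful: a powered edge joins the two terminals $x,y$ of a walk of length up to $\Theta(\rep)$, and the test requires both terminals to hold opinions about every vertex on that walk, so the opinion radius must be at least the walk length; consequently, for any powered edge whose walk passes near $w$, \emph{both} terminals have $w$ in their balls. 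During your propagation there is necessarily an intermediate proof in which $x$ already reports the new value $\beta$ for $w$ while $y$ still reports the old value $\alpha$; with consistency meaning equality over the original alphabet, that edge is violated, so perfect completeness is lost. This is precisely the obstruction the paper identifies, and its resolution is the alphabet squaring trick: opinions take values in $\Sigma' = \{\{\alpha\}\} \cup \{\{\alpha,\beta\}\}$, consistency is defined as containment, the constraint \ref{enm:amp:pow:C3} quantifies over all values in the union of the two opinions, and the reconfiguration passes through the intermediate opinion $\{\alpha,\beta\}$ at every vertex whose ball contains $w$ (first $\{\alpha\}\to\{\alpha,\beta\}$ everywhere, then $\{\alpha,\beta\}\to\{\beta\}$), which keeps every test satisfied (\cref{lem:amp:pow:complete}).

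Once you adopt the squared alphabet, your soundness sketch also needs a nontrivial adjustment that your write-up does not anticipate: the plurality vote is now taken over set-valued opinions, and the probability that two independent BSRW-sampled opinions about $v$ are \emph{consistent} is no longer bounded by the plurality mass but only by twice it (\cref{clm:amp:pow:consistent}); this factor of $2$ propagates into the first-moment bound (\cref{lem:amp:pow:N1}) and into the final constant. The paper then uses a second-moment (Paley--Zygmund) argument with the expander random walk lemma rather than your $\sqrt{\rep}$ anti-concentration heuristic, caps $|F|/|E|$ at roughly $1/\rep$ (\cref{clm:amp:FE}), truncates the verifier's walk, and only then converts the verifier into an unweighted powered constraint graph with an explicit bound on $\lambda'/d'$. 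Your interpolation of the plurality sequence and the $o(1)$ padding slack are fine and correspond to \cref{clm:amp:interpolate}, but without the squared alphabet (or an equivalent device) the reduction you describe does not preserve perfect completeness, so the statement as you argue it does not follow.
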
\noindent
Note that Dinur's gap amplification for \prb{Max \BCSP} does not amplify the gap beyond $\frac{1}{2}$
\cite{bogdanov2005gap}, which seems to apply to this result.

As an application of the main result,
we demonstrate that \MinmaxSetCoverReconf is \PSPACE-hard to approximate within 
a factor of $1.0029$ under RIH (\cref{cor:sc}).
Since this is the first explicit inapproximability even in terms of \NP-hardness
(to the best of our knowledge) and
\MinmaxSetCoverReconf admits a $2$-factor approximation algorithm \cite{ito2011complexity},
achieving a $1.0029$-factor inapproximability still seems plausible.
Based upon a gap-preserving reduction from \prb{Label Cover} to \prb{Set Cover} due to 
\citet{lund1994hardness},
we develop its reconfiguration analogue:
\begin{theorem}[informal; see \cref{thm:sc}]
    For every number $\epsilon \in (0,1)$,
    there exists a gap-preserving reduction
    from \MaxminBCSPReconf to \MinmaxSetCoverReconf such that
    \begin{enumerate}
        \item the perfect completeness is preserved, and
        \item if any reconfiguration for the former violates $\approx (1-\epsilon^2)$-fraction of edges,
        any reconfiguration for the latter must encounter
        a cover of relative size $2-\epsilon$ to the minimum cover.
    \end{enumerate}
\end{theorem}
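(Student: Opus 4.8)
The plan is to imitate the Lund--Yannakakis reduction from \prb{Label Cover} to \prb{Set Cover} gadget by gadget, and then to track how a reconfiguration sequence on one side forces one on the other. Fix a \MaxminBCSPReconf instance consisting of a $(d,\lambda)$-expander constraint graph $G=(V,E)$ over an alphabet $\Sigma$ together with satisfying assignments $\psi^\ini,\psi^\tar\colon V\to\Sigma$. For each constraint $\psi_e\subseteq\Sigma\times\Sigma$ on an edge $e=\{u,v\}$ (oriented arbitrarily) I attach a constant-size partition-system gadget: a ground set $B_e$ of size $O(|\Sigma|^2)$ with subsets $\{A_{e,a}\}_{a\in\Sigma}$ (the ``$u$-side'') and $\{A'_{e,b}\}_{b\in\Sigma}$ (the ``$v$-side'') such that \textbf{(i)} $A_{e,a}\cup A'_{e,b}=B_e$ if and only if $(a,b)\in\psi_e$, and \textbf{(ii)} no union of $u$-side sets alone, and no union of $v$-side sets alone, ever covers $B_e$. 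Property (ii) --- the reason a \emph{partition} system rather than an arbitrary collection is needed --- guarantees that covering $B_e$ always requires at least one set from each endpoint. The resulting \MinmaxSetCoverReconf instance has universe $\calU=\bigsqcup_{e\in E}(\{e\}\times B_e)$; its family $\calF$ has one set $S_{v,c}$ per pair $(v,c)\in V\times\Sigma$, namely the union over edges $e$ incident to $v$ of the copy in $\{e\}\times B_e$ of $v$'s gadget-side at label $c$; and $\calC^\ini=\{S_{v,\psi^\ini(v)}:v\in V\}$, $\calC^\tar=\{S_{v,\psi^\tar(v)}:v\in V\}$. By (i) these are valid covers, and by (ii) every cover has size at least $|V|$ (every vertex has an incident edge in an expander), so both are minimum covers of size $k=|V|$.

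For completeness I simulate one value change at a time. If $\psi\to\psi'=\psi[v\mapsto c']$ is a single step of a reconfiguration sequence for $G$ through satisfying assignments, I realise it on the set-cover side by first adding $S_{v,c'}$ (a cover of size $k+1$, still valid since adding sets only helps) and then removing $S_{v,\psi(v)}$ (back to size $k$; the result is the cover associated with $\psi'$, which is valid because $\psi'$ is satisfying and each block $e\ni v$ is now covered by $S_{v,c'}$ and $S_{w,\psi'(w)}$). Concatenating these two-step blocks yields a reconfiguration from $\calC^\ini$ to $\calC^\tar$ whose every cover has size at most $k+1$, so perfect completeness is preserved.

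For soundness I argue the contrapositive: given a reconfiguration $(\calC_t)_t$ from $\calC^\ini$ to $\calC^\tar$ in which every $\calC_t$ has size at most $(2-\epsilon)k$, I extract a reconfiguration $(\sigma_t)_t$ for $G$ satisfying a $\gtrsim\epsilon^2$ fraction of edges at every step. Since $|\calC_t|\le(2-\epsilon)|V|$ and, by (ii), every vertex contributes at least one set to $\calC_t$, at most $(1-\epsilon)|V|$ vertices own two or more sets of $\calC_t$; call the remaining $\ge\epsilon|V|$ ``singleton'' vertices $\mathrm{Sng}_t$, and set $\sigma_t(v)$ to be the unique label of its set if $v\in\mathrm{Sng}_t$ and $\psi^\ini(v)$ otherwise (so $\sigma_0,\sigma_T$ agree with $\psi^\ini,\psi^\tar$ up to a final cleanup). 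For a block $e=\{u,v\}$ with $u,v\in\mathrm{Sng}_t$, $B_e$ is covered using only $S_{u,\sigma_t(u)}$ and $S_{v,\sigma_t(v)}$, so (i) forces $(\sigma_t(u),\sigma_t(v))\in\psi_e$; every edge inside $\mathrm{Sng}_t$ is thus satisfied. Here the expander mixing lemma enters --- precisely where Lund--Yannakakis instead leveraged the bipartite label-cover structure --- turning the vertex bound $|\mathrm{Sng}_t|\ge\epsilon|V|$ into an edge bound: the number of edges inside $\mathrm{Sng}_t$ is at least $\bigl(\tfrac{|\mathrm{Sng}_t|}{|V|}\bigr)^2|E|-\tfrac{\lambda}{d}|E|\ge\bigl(\epsilon^2-\tfrac{\lambda}{d}\bigr)|E|$, so $\sigma_t$ satisfies at least an $(\epsilon^2-\tfrac{\lambda}{d})$-fraction of edges. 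Finally, consecutive $\calC_t,\calC_{t+1}$ differ in one set, hence $\mathrm{Sng}_t$ and $\sigma_t$ change at $O(1)$ vertices; after deleting repetitions and interpolating the remaining single-value changes one at a time, $(\sigma_t)_t$ becomes a legitimate reconfiguration sequence for $G$, contradicting the hypothesis that every reconfiguration of $G$ violates a $(1-\epsilon^2)$-fraction of edges (the slack $\tfrac{\lambda}{d}$ being absorbed into the ``$\approx$'', which is exactly why the source must sit on an expander with tiny $\tfrac{\lambda}{d}$).

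The hard part will be the soundness direction, and inside it two intertwined points. First, building the gadget so that property (ii) holds in \emph{both} directions: if a block could be covered using two sets of the \emph{same} endpoint, then a vertex could own no set of $\calC_t$ and the bookkeeping $|\mathrm{Sng}_t|\ge\epsilon|V|$ collapses, so the gadget needs explicit ``guard'' elements forcing genuine two-sidedness while still respecting (i). Second, calibrating constants so the $\tfrac{\lambda}{d}$ error from the expander mixing lemma stays strictly below the $\epsilon^2$ target --- equivalently, quantifying the $2-\epsilon$ budget on the set-cover side, the $1-\epsilon^2$ soundness on the CSP side, and the expansion $\tfrac{\lambda}{d}$ against one another. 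The interpolation step that upgrades the extracted $(\sigma_t)_t$ into a bona fide reconfiguration sequence is routine but must be carried out carefully enough not to spoil the per-step satisfied-edge fraction.
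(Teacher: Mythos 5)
Your proposal is correct and follows essentially the same route as the paper: a Lund--Yannakakis-style per-edge covering gadget indexed by (vertex, value) pairs (the paper instantiates your properties (i)--(ii) with the sets $\bar{Q_\alpha}$ and $Q_{\pi_e(\beta)}$ over $\{0,1\}^\Sigma$), perfect completeness via add-then-remove within the $\OPT(\calF)+1$ budget, and soundness by reading off the singleton vertices of each cover, bounding their number by the $(2-\epsilon)$ size budget, and converting that vertex bound into an edge bound with the expander mixing lemma. The only cosmetic difference is your final interpolation step, which is unnecessary: since consecutive covers differ in a single set, only one vertex's label list can change, so the extracted assignments already form a valid reconfiguration sequence, exactly as in the paper.
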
\noindent
Unlike {Lund--Yannakakis}' reduction,
the expander mixing lemma \cite{alon1988explicit} is essential to use
because the underlying graph of \cref{thm:amp} is not biregular, just expander.
As a corollary,
we show that the same inapproximability result holds for
\MinmaxDominatingSetReconf (\cref{cor:ds} \cite{paz1981non});
an interesting open question is whether this problem admits a $(2-\epsilon)$-factor approximation algorithm.

We finally complement \PSPACE-hardness of approximation for \MaxminBCSPReconf
by showing its \NP-hardness analogue:
it is \NP-hard to approximate \MaxminBCSPReconf
within a factor better than $\frac{3}{4}$,
improving a factor of $1-\frac{1}{144}$ inferred from \cite{ito2011complexity,ohsaka2023gap}.\footnote{
\cite[Theorem 5]{ito2011complexity} implies that
\prb{5-SAT Reconfiguration} is \NP-hard to approximate within a $\left(1-\frac{1}{16}\right)$-factor,
\cite{ohsaka2023gap} thus implies that
\prb{3-SAT Reconfiguration} is \NP-hard to approximate within a $\left(1-\frac{1}{48}\right)$-factor, and then
\cite{ohsaka2023gap} implies that
\BCSPReconf is \NP-hard to approximate within a $\left(1-\frac{1}{144}\right)$-factor.
}
\begin{theorem}[informal; see \cref{thm:NP-BCSPR}]
    For every number $\epsilon \in (0,1)$,
    a $1$ vs.~$\frac{3}{4}+\epsilon$ gap version of
    \MaxminBCSPReconf is \NP-hard.
\end{theorem}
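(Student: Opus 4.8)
The plan is to prove \cref{thm:NP-BCSPR} by a \emph{direct}, gap-creating reduction from an \NP-complete problem (concretely \prb{3-SAT}), so that --- unlike the reductions passing through RIH or a PCP --- the $1$ vs.\ $\tfrac34+\epsilon$ separation is produced inside the reduction itself and no parallel repetition is invoked. Given a \prb{3-SAT} formula $\phi$ on variables $v_1,\dots,v_n$ and clauses $C_1,\dots,C_m$, together with an accuracy parameter $N$ that supplies the ``$+\epsilon$'' in the limit $N\to\infty$, I would build a \MaxminBCSPReconf instance $(G_\phi,\psi^\ini,\psi^\tar)$ with three properties: (i) $G_\phi$ always has $\psi^\ini\neq\psi^\tar$ as explicit satisfying assignments, written down with no knowledge of whether $\phi$ is satisfiable --- which forces the alphabet to carry a ``dormant/wildcard'' symbol, since for unsatisfiable $\phi$ one cannot plant a satisfying assignment of $\phi$ at an endpoint; (ii) if $\phi$ is satisfiable, some reconfiguration from $\psi^\ini$ to $\psi^\tar$ keeps \emph{every} constraint satisfied throughout (perfect completeness); and (iii) if $\phi$ is unsatisfiable, every reconfiguration has a step at which at least a $\bigl(\tfrac14-\bigO(\tfrac1N)\bigr)$-fraction of the constraints is unsatisfied.

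For the construction I would (a) give each clause $C_j$ a small \emph{verification gadget} on $\bigO(1)$ fresh variables tying together copies of the three variables of $C_j$, with an ``active'' family of satisfying configurations, each certifying that those variables satisfy $C_j$, and a ``dormant'' configuration compatible with arbitrary values; (b) equip each gadget with an \emph{activation variable} $a_j$ whose wildcard value switches the gadget into active mode (hence is locally consistent only when $C_j$ is currently satisfied) and whose other values leave it dormant; (c) connect $a_1,\dots,a_m$ by a bounded-degree expander $H$ of ``soft equality'' constraints (satisfied unless both endpoints take distinct non-wildcard values), in the spirit of Dinur's degree-reduction step; and (d) take $\psi^\ini$ to be the all-dormant configuration with every $a_j$ in one non-wildcard value and $\psi^\tar$ the all-dormant configuration with every $a_j$ in another, weighting the constraints so that the edges of $H$ are a fixed fraction --- for the constant $\tfrac14$, one half --- of all constraints. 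The point is that the only way to move all $a_j$'s between the two non-wildcard values over $H$ by single-variable changes without cutting a constant fraction of $H$ is to route through the all-wildcard configuration, and being all-wildcard forces every clause gadget active and hence forces the variable copies to witness a satisfying assignment of $\phi$; a suitable robustness in the gadgets (or a gap version of \prb{3-SAT} at the source, \NP-hard with perfect completeness) makes even near-satisfiable $\phi$ costly.

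Completeness is the easy direction: when $\phi$ has a satisfying assignment, reconfigure by loading it into the variable copies, activating the gadgets one at a time, sweeping every $a_j$ to the wildcard and then to its target value over $H$ (the soft-equality constraints never break, because one endpoint is always wildcard during the sweep), and finally unwinding the gadgets and copies; no constraint is ever violated, so the value is $1$. The technical heart, which I expect to be the main obstacle, is the soundness bound: one must show that for (robustly) unsatisfiable $\phi$ \emph{every} reconfiguration is forced into a bad step. I would split the analysis by whether the reconfiguration ever makes all the $a_j$'s wildcard at once --- which cannot be done cheaply, since the variable copies then witness a satisfying assignment and $\phi$ has none --- or else, working on $H$, invoke the \emph{expander mixing lemma} \cite{alon1988explicit} to show that at the moment the split of the non-wildcard $a_j$'s into the two values becomes balanced, an $\Omega(1)$-fraction of the edges of $H$ must be cut, and then handle the intermediate regime in which only some $a_j$'s are wildcard. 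The bulk of the remaining work is tuning the expander quality $\lambda/d$, the gadget sizes, and the constraint weights so that this argument tightens to exactly $\tfrac14-\bigO(\tfrac1N)$ while perfect completeness is preserved; the improvement over the $1-\tfrac1{144}$ bound inferred from \cite{ito2011complexity,ohsaka2023gap}, and the fact that the result is unconditional, then follow at once.
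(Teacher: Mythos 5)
There is a genuine gap: the soundness analysis of your construction does not work, and the constant $\tfrac14$ never materializes. The flaw is that the per-clause activation variables $a_1,\dots,a_m$ give an adversary a cheap escape that your case split misses. Take a best assignment of $\phi$; even when $\phi$ is a \NO instance of a gap version of \prb{3-SAT} with perfect completeness, this assignment satisfies a $(1-\epsilon_0)$-fraction of the clauses (and for \prb{3-SAT} one can never push $\epsilon_0$ past $\tfrac18$, by a random assignment). The adversary loads this assignment into the variable copies for free (the dormant gadgets accept anything), switches to wildcard exactly those $a_j$ whose clause is currently satisfied --- a $(1-\epsilon_0)$-fraction of them, at no cost --- and then flips the remaining $\epsilon_0 m$ activation variables from $0$ to $1$ directly. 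Since wildcard endpoints never violate a soft-equality edge, the only $H$-edges ever cut lie inside a set of $\epsilon_0 m$ vertices, which in a $d$-regular expander is an $O(\epsilon_0^2+\epsilon_0\lambda/d)$-fraction of $E(H)$; no clause gadget is ever violated. So every step of this reconfiguration has value $1-O(\epsilon_0)$, not $\leq \tfrac34$: your ``balanced split'' moment, where you want to invoke the expander mixing lemma, simply never occurs because the $0$-set and the $1$-set are never simultaneously large --- the wildcard set acts as a buffer. Your Case 1 is equally cheap: routing through the all-wildcard configuration costs only the gadget constraints of the $\epsilon_0 m$ unsatisfied clauses, again an $O(\epsilon_0)$-fraction. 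Consequently the reduction yields at best a $1$ vs.\ $1-\Theta(\epsilon_0)$ gap, which (with $\epsilon_0\leq\tfrac18$ forced by the \prb{3-SAT} source, and much smaller if you avoid parallel-repetition-based source hardness) is nowhere near $\tfrac34$.

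The paper avoids exactly this failure by using a \emph{single global} pair of switch vertices rather than one per clause: \cref{lem:BCSP-4CSPR} turns every binary constraint $(v,w)$ of a \prb{Gap$_{1,1-\epsilon}$ \BCSP} instance into the 4-ary constraint ``$\pi_{(v,w)}$ holds \emph{or} $x=y$'' on the same two fresh vertices $x,y$, with endpoints all-$\aaa$ and all-$\bbb$. Since $x$ and $y$ can only change one at a time, every reconfiguration has a step with $x\neq y$, at which the \emph{entire} instance collapses back to the original \prb{\BCSP}, whose value is below $1-\epsilon$; there is no way to shield most constraints the way your wildcard buffer does. Then \cref{lem:4CSPR-BCSPR} converts the 4-ary instance to a binary one via the clause--variable bipartite graph and the alphabet squaring trick, losing exactly a factor of $q=4$, which is where $\tfrac34$ comes from; the source gap $\epsilon$ is driven arbitrarily close to $1$ using the PCP theorem plus the parallel repetition theorem, which is legitimate here since \cref{thm:NP-BCSPR} is the \NP-hardness complement (the ``no parallel repetition'' claim in the paper concerns the RIH-based \PSPACE-hardness results, not this theorem). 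If you want to salvage your route, you would need the switch mechanism to be global (so that deactivating it anywhere forces the whole source instance to be checked at once), at which point you have essentially reconstructed the paper's reduction.
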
\noindent
The proof uses a simple gap-preserving reduction from \prb{Max \BCSP} based on
\cite{fortnow1994power,ito2011complexity},
whose inapproximability relies on the parallel repetition theorem \cite{raz1998parallel}.\footnote{
The underlying graph resulting from
\cref{thm:NP-BCSPR}
is neither biregular nor expander.
}

\subsection{Overview of Gap Amplification of \BCSPReconf}
The proof of \cref{thm:amp} is based on an adaptation of the first two steps of
Dinur's proof \cite{dinur2007pcp} of the PCP theorem \cite{arora1998probabilistic,arora1998proof} (\cref{lem:amp:expand,lem:amp:pow}).
We review each of them followed by our technical challenge.

The first step is \textbf{preprocessing}, which makes a constraint graph into an expander.
This step further comprises two partial steps:
(1) in the \textbf{degree reduction} step
due to \citet{papadimitriou1991optimization},
each vertex is replaced by an expander called a cloud, and 
equality constraints are imposed on the intra-cloud edges so that
the assignments in the cloud behave like a single assignment;
(2) in the \textbf{expanderization} step,
an expander is superimposed on the small-degree graph so that
the resulting graph is entirely expander.
Because a reconfiguration analogue of the \textbf{degree reduction} step was already established~\cite{ohsaka2023gap},
what remains to be done is \textbf{expanderization},
which is easy to implement on \MaxminBCSPReconf (\cref{lem:amp:expand}).

The second step is \textbf{powering}.
Given a $d$-regular expander constraint graph $G$ on vertex set $V$
and alphabet $\Sigma$ and a parameter $\rep \in \bbN$,
this step constructs the \emph{powered constraint graph} $G'$ on the same vertex set as follows
\cite{radhakrishnan2006gap,radhakrishnan2007dinurs}:
The alphabet of $G'$ is extremely enlarged from $G$ so that
    each vertex can have an ``opinion'' about the value of any vertices within distance $\bigO(\rep)$.
Such an assignment of opinions, called a \emph{proof}, is specified by
a function $\psi' \colon V \to \Sigma^{d^{\bigO(\rep)}}$.
Each edge of $G'$ corresponds to a random walk over $G$ whose expected length is $\rep$.
The \emph{verifier} samples such a random walk from $x$ to $y$ and accepts a proof $\psi'$ if
\begin{enumerate}
    \item $x$ and $y$ have the same opinion about the value of each vertex appearing in the walk, and
    \item such opinions satisfy all edges along the walk.
\end{enumerate}
Informally, the verifier becomes $\bigO(\rep)$ times more likely
to reject $\psi'$, whose proof is outlined below \cite{dinur2007pcp,radhakrishnan2006gap,radhakrishnan2007dinurs}:
Given a proof $\psi' \colon V \to \Sigma^{d^{\bigO(\rep)}}$ for $G'$,
we extract from it a representative assignment $\psi \colon V \to \Sigma$ for $G$ such that
$\psi(v)$ is given by the \emph{popularity vote} on opinions about $v$'s value.
Suppose $G$ is $\epsilon$-far from satisfiable; i.e., an $\epsilon$-fraction of its edges must be unsatisfied by $\psi$.
Since the verifier rejects $\psi'$ whenever
the aforementioned verifier's test fails at any of the unsatisfied edges,
it suffices to estimate the probability of encountering such \emph{faulty steps},
which is shown to be $\bigO(\rep \epsilon)$ thanks to expansion properties of $G$.

\begin{table}[t]
\centering%
\begin{TAB}(e,3mm,3mm){c|c:c:c:c:c:c|}{c|c:c:c:c:c:c|}
     & $\aaa$ & $\aaa\bbb$ & $\bbb$ & $\bbb\ccc$ & $\ccc$ & $\ccc\aaa$ \\ 
    $\aaa$ & $\blacksquare$ & $\blacksquare$ & & & & $\blacksquare$ \\  
    $\aaa\bbb$ & $\blacksquare$ & $\blacksquare$ & $\blacksquare$ & & & \\ 
    $\bbb$ & & $\blacksquare$ & $\blacksquare$ & $\blacksquare$ & & \\ 
    $\bbb\ccc$ & & & $\blacksquare$ & $\blacksquare$ & $\blacksquare$ & \\ 
    $\ccc$ & & & & $\blacksquare$ & $\blacksquare$ & $\blacksquare$ \\ 
    $\ccc\aaa$ & $\blacksquare$ & & & & $\blacksquare$ & $\blacksquare$
\end{TAB}
\caption{Consistency over $\Sigma'$.}
\label{tab:consistency}
\end{table}

One might think of simply applying the \textbf{powering} step to \MaxminBCSPReconf.
However, it sacrifices the perfect completeness:
Since many vertices have opinions about a particular vertex's value,
we have to replace such opinions one by one during reconfiguration, giving rise to different opinions
that would be rejected by the verifier. 
This kind of issue also arises in the \textbf{degree reduction} step due to \citet{ohsaka2023gap}.
To bypass this issue,
we apply the \emph{alphabet squaring trick} \cite{ohsaka2023gap},
which modifies the alphabet as if each vertex could take a pair of values \emph{simultaneously}.
For example, if the original alphabet is
$\Sigma = \{\aaa, \bbb, \ccc\}$,
the new alphabet is
$\Sigma' = \{\aaa, \bbb, \ccc, \aaa\bbb, \bbb\ccc, \ccc\aaa\}$.
The entire proof for $G'$ is now specified by a function $\psi' \colon V \to \Sigma'^{d^{\bigO(\rep)}}$.

Our new concept tailored to the verifier's test
is the ``consistency'' over the squared alphabet $\Sigma'$,
alluded in \cite{ohsaka2023gap}:
$\alpha$ and $\beta$ of $\Sigma'$ are said to be \emph{consistent} if 
$\alpha \subseteq \beta$ or $\alpha \supseteq \beta$;
see \cref{tab:consistency}.
Intuitively, this definition of consistency allows us to transform
a group of $\aaa$'s into a group of $\bbb$'s consistently \emph{via} $\aaa\bbb$'s, and thus
we can smoothly redesign the verifier's test and the popularity vote.
Specifically, the \emph{modified verifier} over $\Sigma'$ samples a random walk from $x$ to $y$ and
accepts a proof $\psi'$ if
\begin{enumerate}
    \item
        $x$ and $y$ have \colorbox{yellow!75!white}{consistent opinions} about
        the value of each vertex appearing in the walk, and
    \item
        \colorbox{yellow!75!white}{any possible world derived from such opinions}
        satisfies all edges along the walk.
\end{enumerate}
(The difference from the original verifier is \colorbox{yellow!75!white}{highlighted}.)
Though the modified verifier preserves the perfect completeness,
it becomes ``much weaker'' than before, which makes it nontrivial to adapt
the soundness proof due to \cite{radhakrishnan2006gap,radhakrishnan2007dinurs}.
Consider, for example, the following question,
which is a simplified form of \cref{clm:amp:pow:consistent} used in the proof of \cref{lem:amp:pow}:
Given a pair of distributions $\vec{\nu}$ and $\vec{\mu}$ over $\Sigma'$ and
the \emph{most popular value} $\alpha_{\PLR}$ in \emph{non-squared} $\Sigma$ over $\vec{\nu}$; namely,
\begin{align}
    \alpha_{\PLR}
        \coloneq \argmax_{\alpha \in \Sigma}
        \Pr_{\alpha' \sim \vec{\nu}} \Bigl[\{\alpha\} \text{ and } \alpha' \text{ are consistent}\Bigr],
\end{align}
we would like to bound the following quantity $q$:\footnote{
This event represents one of the verifier's acceptance conditions.}
\begin{align}
    q \coloneq \Pr_{\alpha' \sim \vec{\nu}, \beta' \sim \vec{\mu}}
        \Bigl[\alpha' \text{ and } \beta' \text{ are consistent}\Bigr].
\end{align}
If the alphabet squaring trick were not applied (i.e., $\Sigma' = \Sigma$), consistency merely means equality, implying that
$q \leq \Pr_{\alpha' \sim \vec{\nu}}[\alpha' \text{ and } \alpha_{\PLR} \text{ are consistent}]$.
This is not correct for the squared alphabet case;\footnote{
For instance, let
$(\nu(\aaa), \nu(\bbb), \nu(\aaa\bbb)) = (0.51,0.49,0)$ and
$(\mu(\aaa), \mu(\bbb), \mu(\aaa\bbb)) = (0,0,1)$.
Then, $\alpha' \sim \vec{\nu}$ and $\beta' \sim \vec{\mu}$  are always consistent, though
$\alpha' \sim \vec{\nu}$ is consistent with $\alpha_{\PLR} = \aaa$ with probability $0.51$.
}
the actual inequality is
\begin{align}
    q \leq 2 \cdot \Pr_{\alpha' \sim \vec{\nu}}
        \Bigl[\alpha' \text{ and } \alpha_{\PLR} \text{ are consistent}\Bigr],
\end{align}
which stems from the fact that
\begin{align*}
    \forall \beta' \in \Sigma', \; \exists \beta_1, \beta_2 \in \Sigma, \text{ s.t. }
    \alpha' \text{ is consistent with } \beta'
    \implies
    \alpha' \text{ is consistent with } \beta_1 \text{ or } \beta_2.
\end{align*}

Other technical differences from
\citet{dinur2007pcp,radhakrishnan2006gap,radhakrishnan2007dinurs}
are summarized below.
\begin{itemize}
\item
    The popularity vote breaks reconfigurability:
    Similar to \cite{dinur2007pcp,radhakrishnan2006gap,radhakrishnan2007dinurs},
    given a reconfiguration sequence of proofs for $G'$,
    we apply the popularity vote to each proof in it
    and construct a sequence of assignments for $G$.
    However, the resulting sequence is \emph{invalid} in a sense that
    two neighboring assignments may differ in two or more vertices.
    In \cref{clm:amp:interpolate}, we \emph{interpolate} between every pair of neighboring assignments.
    
\item
    We give an explicit proof that if $G$ is expander, then so is the powered constraint graph $G'$ (\cref{lem:amp:pow:powered}).
    Though not required for the proof of the PCP theorem,
    this property is critical for our gap-preserving reduction to \SetCoverReconf.
    
\item
    The soundness error is improved from $\approx 0.9995$ \cite{radhakrishnan2007dinurs}
    to $0.9942$,
    resulting from the refinement of \cref{clm:amp:FE,lem:amp:pow:N1,lem:amp:pow:N2}.
    Such improvements make sense only for reconfiguration problems since
    parallel repetition of reconfiguration problems may not work as in the \NP regime
    \cite{ohsaka2023approximate}.
\end{itemize}

\subsection{Overview of Gap-preserving Reduction from \BCSPReconf to \SetCoverReconf}
The proof of \cref{thm:sc} constructs a gap-preserving reduction from
\MaxminBCSPReconf to \MinmaxSetCoverReconf.
Briefly recall
{Lund--Yannakakis}' reduction from \prb{Label Cover} to \prb{Set Cover} \cite{lund1994hardness}:
Given a constraint graph $G$ on vertex set $V$ and alphabet $\Sigma$,
we create a set system $(\calU, \calF)$, where
$\calF = \{S_{v,\alpha}\}_{v \in V, \alpha \in \Sigma}$ is a family of subsets of a universe $\calU$
indexed by vertex $v \in V$ and value $\alpha \in \Sigma$.
If a subfamily of $\calF$ covers $\calU$,
it must include $S_{v,\alpha}$ and $S_{w,\beta}$ for each edge $(v,w)$ of $G$
such that $(\alpha,\beta)$ satisfies $(v,w)$.
It turns out that given a cover $\calC$ for $(\calU,\calF)$,
we can generate a random assignment that satisfies
$\Omega\left(\frac{|V|^2}{|\calC|^2}\right)$-fraction
of edges in expectation.
Such \emph{randomness} of assignments is undesirable
for reducing between reconfiguration problems
since any two neighboring assignments in the reconfiguration sequence must differ in at most one vertex.
Our insight in resolving this difficulty is that
if a small cover $\calC$ of size $(2-\epsilon)|V|$ exists,
the value of $\approx \epsilon$-fraction vertices can be \emph{deterministically} chosen from $\calC$
so that all edges between them are satisfied.
Using the expander mixing lemma \cite{alon1988explicit},
we eventually bound the fraction of such edges from below by $\approx \epsilon^2$.

\subsection{Note on Alphabet Reduction of \BCSPReconf}
\label{subsec:intro:ABCreduct}
Given degree reduction \cite{ohsaka2023gap} and gap amplification (this paper)
for \MaxminBCSPReconf,
it is natural to ask the possibility of its \textbf{alphabet reduction} --- the last step of Dinur's proof \cite{dinur2007pcp} of the PCP theorem.
One barrier to achieving alphabet reduction for \MaxminBCSPReconf is its \textbf{robustization}.
In robustization of \prb{\BCSP} due to \cite{dinur2007pcp},
we replace each constraint $\pi_e$ by a Boolean circuit $C_e$
that accepts $\Enc(\alpha) \circ \Enc(\beta)$ such that $(\alpha, \beta)$ satisfies $\pi_e$,
where $\Enc$ is an error-correcting code.
The soundness case ensures that for many edges $e$,
the restricted assignment is $\Omega(1)$-far from any satisfying assignment to $C_e$,
so that assignment testers (of constant size) \cite{dinur2007pcp} can be applied.
However, applying such replacement to \MaxminBCSPReconf,
we need to reconfigure between a pair of codewords of $\Enc$,
passing through strings $\Omega(1)$-far from any codeword of $\Enc$.
This is problematic for the perfect completeness.
Currently, it is unclear whether this issue can be resolved, e.g., by the alphabet squaring trick.

\subsection{Additional Related Work}
Other optimization versions of reconfiguration problems include the following:
\prb{Subset Sum Reconfiguration} admits a PTAS~\cite{ito2014approximability},
\prb{Submodular Reconfiguration} admits a constant-factor approximation~\cite{ohsaka2022reconfiguration},
\prb{Clique Reconfiguration} and \prb{SAT Reconfiguration} are \NP-hard to approximate~\cite{ito2011complexity}.
Our results are fundamentally different from \cite{ito2011complexity} in that
we do not rely on known \NP-hardness results or the parallel repetition theorem.

We note that approximability of reconfiguration problems frequently refers to
that of \emph{the shortest reconfiguration sequence}
\cite{kaminski2011shortest,yamanaka2015swapping,miltzow2016approximation,bonnet2018complexity,bousquet2019shortest,bonamy2020shortest,bousquet2020approximating,ito2022shortest},
and there is a different type of optimization variant
called \emph{incremental optimization under the reconfiguration framework} \cite{ito2022incremental,blanche2020decremental,yanagisawa2021decremental},
both of which seem orthogonal to the present study.
Other gap amplification results are known for
\prb{Small-Set Expansion}~\cite{raghavendra2014gap} and
\prb{Parameterized Steiner Orientation}~\cite{wlodarczyk2020parameterized}.
Unlike them, gap amplification for reconfiguration problems
needs to sustain reconfigurability, which entails nontrivial modifications to
\cite{dinur2007pcp,radhakrishnan2006gap,radhakrishnan2007dinurs}.

For random instances,
the separation phenomena of the overlaps between near-optimal solutions 
has been investigated, called the \emph{overlap gap property} (OGP)
\cite{gamarnik2021overlap,achlioptas2011solution,mezard2005clustering,gamarnik2017limits,wein2021optimal}.
Consider an {Erd\H{o}s}--R\'{e}nyi graph over $n$ vertices,
in which each of the ${n \choose 2}$ potential edges is realized with probability $\frac{1}{2}$.
Then, for some $\nu_1, \nu_2 \in (0,1)$ with $\nu_1 < \frac{1}{2}\nu_2$,
for any independent sets $I$ and $J$ of size greater than $\alpha \cdot 2 \log_2 n$,
where $\alpha = \frac{1}{2} + \frac{1}{2\sqrt{2}}$,
$\frac{|I \triangle J|}{2\log_2 n}$ is in either $[0, \nu_1]$ or $[\nu_2, 1]$ (with high probability).
Though its motivation comes from the algorithmic hardness of random instances,
OGP implies approximate reconfigurability:
either
a trivial reconfiguration from $I$ to $J$ achieves $\frac{\alpha-\nu_1}{\alpha}$-factor approximation, or
we can declare that any reconfiguration induces an independent set of size at most $\alpha \cdot 2 \log_2 n$.
It is unclear if OGP could be used to derive
inapproximability results of reconfiguration problems in the worse-case sense.

\subsection{Follow-up Work}
\label{subsec:intro:follow-up}
After a preliminary version of this paper was presented at
the 35th Annual ACM-SIAM Symposium on Discrete Algorithms (SODA 2024) \cite{ohsaka2024gap},
the following progress has been made.
\citet{karthik2023inapproximability,hirahara2024probabilistically}
independently announced the proof of RIH,
implying that the $\PSPACE$-hardness results
shown in this paper hold \emph{unconditionally}.
\citet{karthik2023inapproximability} further
showed that \MinmaxSetCoverReconf is \NP-hard to approximate within a $(2-\epsilon)$-factor, and
demonstrated matching lower and upper bounds for \MaxminBCSPReconf, i.e.,
\NP-hardness of $\left(\frac{1}{2}+\epsilon\right)$-factor approximation
(which improves \cref{thm:NP-BCSPR})
and a $\left(\frac{1}{2}-\epsilon\right)$-factor approximation algorithm,
where $\epsilon > 0$ is any small number.
Subsequently, \citet{hirahara2024optimal} proved that
\MinmaxSetCoverReconf is \PSPACE-hard to approximate within a factor of 
$2-\frac{1}{\operatorname{polyloglog} n}$,
where $n$ is the universe size,
improving upon 
\citet{karthik2023inapproximability} and the present study.
In particular, both \MinmaxSetCoverReconf and \MinmaxDominatingSetReconf
do not admit $1.999$-factor approximation algorithms unless \cP~$=$~\PSPACE.
\citet{ohsaka2024alphabet} gave a reconfiguration analogue of \textbf{alphabet reduction} \`{a}~la~\citet{dinur2007pcp},
which resolves the issue discussed in \cref{subsec:intro:ABCreduct}
by using the reconfigurability of Hadamard codes.

\section{Preliminaries}
\label{sec:pre}
\paragraph{Notations.}
For any integer $n \in \bbN$, let $ [n] \coloneq \{1, 2, \ldots, n\} $.
For a statement $P$, $\llbracket P \rrbracket$ is $1$ if $P$ is true and $0$ otherwise.
A \emph{sequence} $\scrS$ of a finite number of objects $S^{(1)}, \ldots, S^{(\TTT)}$
is denoted by $( S^{(1)}, \ldots, S^{(\TTT)} )$, and
we write $S^{(\ttt)} \in \scrS$ to indicate that $S^{(\ttt)}$ appears in $\scrS$.
For a graph $G=(V,E)$, let $V(G)$ and $E(G)$
denote the vertex set $V$ and edge set $E$ of $G$, respectively.
The edge set $E$ may include self-loops and parallel edges; in particular, it is a multiset.
The symbol $\uplus$ is used to denote the additive union of multisets.
For a vertex $v$ of $G$, let $d_G(v)$ denote the degree of $v$.
For a vertex set $S \subseteq V(G)$, we write $G[S]$ for the subgraph of $G$
induced by $S$.
For any two subsets $S$ and $T$ of $V(G)$,
we define
$e_G(S,T)$ as the number of edges between $S$ and $T$;\footnote{
Edges included in $S \cap T$ are counted twice.
}
namely,
\begin{align}
    e_G(S,T)
    \coloneq \left|\Bigl\{ (v,w) \in S \times T \Bigm| (v,w) \in E(G) \Bigr\}\right|.
\end{align}

\paragraph{Expander Graphs.}
In \cref{sec:amp,sec:sc}, expander graphs play a crucial role.
For a regular graph $G$ and its adjacency matrix $\mat{A}$,
the \emph{second largest eigenvalue in absolute value}, denoted $\lambda(G)$, is defined as
\begin{align}
    \lambda(G) \coloneq
    \max_{\vec{x} \neq \vec{0}, \vec{x} \perp \vec{1}}
    \frac{\|\vec{A}\vec{x}\|_2}{\|\vec{x}\|_2},
    \text{ where }
    \|\vec{x}\|_2 \coloneq \sqrt{\sum_{i}x_i^2}
    \text{ and }
    \vec{1} \text{ is the all-ones vector}.
\end{align}
For any integer $d \in \bbN$ and number $\lambda < d$, a \emph{$(d,\lambda)$-expander graph}
is a $d$-regular graph $G$ such that $\lambda(G) \leq \lambda$.
A $(d,\lambda)$-expander graph is called \emph{Ramanujan} if $\lambda \leq 2\sqrt{d-1}$.
We will use the following lemma later,
whose proof is included in \cref{app:proof} for the sake of completeness.
\begin{lemma}
\label{lem:lambda}
    For any regular graphs $G$ and $H$
    with adjacency matrices $\mat{A}_G$ and $\mat{A}_H$,
    let $G \uplus H$ and $GH$ denote the graph whose adjacency matrix
    is respectively described by $\mat{A}_G + \mat{A}_H$ and $\mat{A}_G\mat{A}_H$.
    Then,
    $\lambda(G \uplus H) \leq \lambda(G) + \lambda(H)$ and
    $\lambda(GH) \leq \lambda(G) \cdot \lambda(H)$.
\end{lemma}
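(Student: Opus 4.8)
The plan is to argue entirely at the level of operator norms on the hyperplane orthogonal to the all-ones vector, which is exactly the subspace over which $\lambda(\cdot)$ is defined. Write $V$ for the common vertex set of $G$ and $H$, put $n \coloneq |V|$, and suppose $G$ is $d_G$-regular and $H$ is $d_H$-regular. First I would observe that $G \uplus H$ is $(d_G + d_H)$-regular and $GH$ is $(d_G d_H)$-regular: the entries of $\mat{A}_G + \mat{A}_H$ and of $\mat{A}_G \mat{A}_H$ are nonnegative integers, and each row of the former sums to $d_G + d_H$ while each row of the latter sums to $\sum_j (\mat{A}_G \mat{A}_H)_{ij} = \sum_k (\mat{A}_G)_{ik} \sum_j (\mat{A}_H)_{kj} = d_H \sum_k (\mat{A}_G)_{ik} = d_G d_H$. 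Hence $\lambda(\cdot)$ is well-defined for both graphs. The one structural fact I need is that the hyperplane $W \coloneq \{\vec{x} \in \bbR^n : \vec{x} \perp \vec{1}\}$ is invariant under both $\mat{A}_G$ and $\mat{A}_H$; this is precisely where regularity enters. Indeed, $\mat{A}_G$ and $\mat{A}_H$ are real symmetric with $\mat{A}_G \vec{1} = d_G \vec{1}$ and $\mat{A}_H \vec{1} = d_H \vec{1}$, so for $\vec{x} \in W$ we get $\langle \mat{A}_H \vec{x}, \vec{1} \rangle = \langle \vec{x}, \mat{A}_H \vec{1} \rangle = d_H \langle \vec{x}, \vec{1} \rangle = 0$, and symmetrically for $\mat{A}_G$. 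By definition, $\lambda(G)$ and $\lambda(H)$ are exactly the operator norms of the restrictions $\mat{A}_G|_W$ and $\mat{A}_H|_W$.

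For $\lambda(G \uplus H) \leq \lambda(G) + \lambda(H)$, I would take an arbitrary nonzero $\vec{x} \in W$ and apply the triangle inequality for $\|\cdot\|_2$:
\begin{align*}
    \|(\mat{A}_G + \mat{A}_H)\vec{x}\|_2
    \leq \|\mat{A}_G \vec{x}\|_2 + \|\mat{A}_H \vec{x}\|_2
    \leq \bigl(\lambda(G) + \lambda(H)\bigr)\|\vec{x}\|_2 ,
\end{align*}
then divide by $\|\vec{x}\|_2$ and take the maximum over $\vec{x} \in W \setminus \{\vec{0}\}$.

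For $\lambda(GH) \leq \lambda(G) \cdot \lambda(H)$, I would again fix a nonzero $\vec{x} \in W$, set $\vec{y} \coloneq \mat{A}_H \vec{x}$, note $\vec{y} \in W$ by the invariance above, and chain the two defining bounds:
\begin{align*}
    \|\mat{A}_G \mat{A}_H \vec{x}\|_2 = \|\mat{A}_G \vec{y}\|_2
    \leq \lambda(G)\|\vec{y}\|_2 = \lambda(G)\|\mat{A}_H \vec{x}\|_2
    \leq \lambda(G)\lambda(H)\|\vec{x}\|_2 ,
\end{align*}
and then maximize over $\vec{x} \in W \setminus \{\vec{0}\}$ as before. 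This chain uses only the symmetry of $\mat{A}_G$ and of $\mat{A}_H$ individually, never of their product, so it causes no trouble even if $\mat{A}_G \mat{A}_H$ fails to be symmetric (i.e., if $GH$ is a priori a directed multigraph because $G$ and $H$ do not commute).

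I expect no genuine obstacle: the statement is a routine spectral computation. The only point deserving care is the invariance of $\vec{1}^\perp$ under each adjacency matrix — which is what licenses composing the norm bounds, and the sole place the regularity hypothesis is used — and once that is recorded, each inequality is a one-line estimate.
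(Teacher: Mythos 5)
Your proposal is correct and follows essentially the same route as the paper's own proof: the triangle inequality on the subspace $\vec{x} \perp \vec{1}$ for $\lambda(G \uplus H)$, and composing the two operator-norm bounds on that invariant subspace for $\lambda(GH)$. The only (cosmetic) difference is that you chain the inequalities $\|\mat{A}_G \vec{y}\|_2 \leq \lambda(G)\|\vec{y}\|_2$ and $\|\mat{A}_H \vec{x}\|_2 \leq \lambda(H)\|\vec{x}\|_2$ directly rather than factoring the Rayleigh quotient as the paper does, which incidentally sidesteps the harmless division by $\|\mat{A}_H \vec{x}\|_2$ in the degenerate case $\mat{A}_H \vec{x} = \vec{0}$.
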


\paragraph{Reconfiguration Problems on Constraint Systems.}
We introduce reconfiguration problems on constraint systems.
The notion of constraint graphs is first introduced.

\begin{definition}
    A \emph{$q$-ary constraint graph} is defined as a tuple $G=(V,E,\Sigma,\Pi)$ such that
    \begin{itemize}
    \item 
        $(V,E)$ is a $q$-uniform hypergraph called the \emph{underlying graph} of $G$,
    \item 
        $\Sigma$ is a finite set called the \emph{alphabet}, and
    \item 
        $\Pi = (\pi_e)_{e \in E}$ is a collection of $q$-ary \emph{constraints}, where
        each $\pi_e \subseteq \Sigma^e$ is a set of $q$-tuples of acceptable values that
        $q$ vertices of $e$ can take.
    \end{itemize}
    A binary constraint graph is simply referred to as a \emph{constraint graph}.
\end{definition}
An \emph{assignment} for a $q$-ary constraint graph $G$
is a function $\psi \colon V \to \Sigma$ that assigns a value of $\Sigma$ to each vertex of $V$.
We say that $\psi$ \emph{satisfies} a hyperedge $e = \{v_1, \ldots, v_q\} \in E$ (or a constraint $\pi_e$) if
$\psi(e) \coloneq (\psi(v_1), \ldots, \psi(v_q)) \in \pi_e$, and
$\psi$ \emph{satisfies} $G$ if it satisfies all hyperedges of $G$.
Moreover, we say that $G$ is \emph{satisfiable} if some assignment satisfies $G$.
For two satisfying assignments $\psi^\ini$ and $\psi^\tar$ for $G$,
a \emph{reconfiguration sequence from $\psi^\ini$ to $\psi^\tar$}
is any sequence $(\psi^{(1)}, \ldots, \psi^{(T)})$ of assignments such that
$\psi^{(1)} = \psi^\ini$,
$\psi^{(T)} = \psi^\tar$, and
any two neighboring assignments differ in at most one vertex.
In the \prb{$q$-CSP Reconfiguration} problem,
for a (satisfiable) $q$-ary constraint graph $G$ and its two satisfying assignments $\psi^\ini$ and $\psi^\tar$,
we are asked to decide if there is a reconfiguration sequence of satisfying assignments for $G$ from $\psi^\ini$ to $\psi^\tar$.
Hereafter,
the suffix ``$_W$'' designates the restricted case that the alphabet size $|\Sigma|$ is integer $W \in \bbN$.

Since we are concerned with approximate reconfigurability,
we formulate an optimization version of \prb{$q$-CSP Reconfiguration} \cite{ito2011complexity,ohsaka2023gap},
which allows going through \emph{non-satisfying assignments}.
For a $q$-ary constraint graph $G=(V,E,\Sigma,\Pi)$ and an assignment $\psi \colon V \to \Sigma$,
its \emph{value} is defined as the fraction of hyperedges of $G$ satisfied by $\psi$; namely,
\begin{align}
    \val_G(\psi) \coloneq \frac{1}{|E|} \cdot \left|\Bigl\{e \in E \Bigm| \psi \text{ satisfies } e \Bigr\}\right|.
\end{align}
For a reconfiguration sequence
$\sqpsi = ( \psi^{(1)}, \ldots, \psi^{(\TTT)} )$,
let $\val_G(\sqpsi)$ denote the \emph{minimum fraction} of satisfied hyperedges
over all $\psi^{(\ttt)}$'s in $\sqpsi$; namely,
\begin{align}
    \val_G(\sqpsi) \coloneq \min_{\psi^{(\ttt)} \in \sqpsi} \val_G(\psi^{(\ttt)}).
\end{align}
In \prb{Maxmin $q$-CSP Reconfiguration},
we wish to maximize $\val_G(\sqpsi)$ subject to $\sqpsi = ( \psi^\ini, \ldots, \psi^\tar )$.
For two assignments $\psi^\ini, \psi^\tar \colon V \to \Sigma$ for $G$,
let $\val_G(\psi^\ini \reco \psi^\tar)$ denote the maximum value of $\val_G(\sqpsi)$
over all possible reconfiguration sequences $\sqpsi$ from $\psi^\ini$ to $\psi^\tar$; namely,
\begin{align}
    \val_G(\psi^\ini \reco \psi^\tar)
    \coloneq \max_{\sqpsi = ( \psi^\ini, \ldots, \psi^\tar )} \val_G(\sqpsi).
\end{align}
The ``gap version'' of \prb{Maxmin $q$-CSP Reconfiguration} is defined as follows.

\begin{problem}
\label{prb:gap-CSP}
    For every numbers $0 \leq s \leq c \leq 1$ and integer $q \in \bbN$,
    \prb{Gap$_{c,s}$ $q$-CSP Reconfiguration} requests to determine
    for a $q$-ary constraint graph $G$ and its two assignments $\psi^\ini$ and $\psi^\tar$,
    whether
    $\val_G(\psi^\ini \reco \psi^\tar) \geq c$ (the input is a \YES instance) or
    $\val_G(\psi^\ini \reco \psi^\tar) < s$ (the input is a \NO instance).
    Here, $c$ and $s$ are respectively called \emph{completeness} and \emph{soundness}.
\end{problem}\noindent
The case of $s = c = 1$ particularly reduces to \prb{$q$-CSP Reconfiguration}.
We can safely assume that $\psi^\ini$ and $\psi^\tar$ satisfy $G$ whenever $c=1$.
The \emph{Reconfiguration Inapproximability Hypothesis} (RIH)~\cite{ohsaka2023gap}
postulates that
\prb{Gap$_{1, 1-\epsilon}$ $q$-CSP$_W$ Reconfiguration} is \PSPACE-hard
for some $\epsilon \in (0,1)$ and $q, W \in \bbN$,
which has been recently proven by \citet{hirahara2024probabilistically,karthik2023inapproximability}.

\section{Gap Amplification of \MaxminBCSPReconf}
\label{sec:amp}

We prove the main result of this paper; that is,
\MaxminBCSPReconf over expander graphs is
\PSPACE-hard to approximate within a factor of $0.9942$ under RIH.

\begin{theorem}\label{thm:amp}
    Under RIH, for every number $\rho \in (0,1)$,
    there exist universal constants $W, d \in \bbN$ such that
    \prb{Gap$_{1,0.9942}$ \BCSPReconf[$_W$]}
    is \PSPACE-hard even if
    the underlying graph is restricted to $(d,\lambda)$-expander with
    $\frac{\lambda}{d} \leq \rho$.
\end{theorem}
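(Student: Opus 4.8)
The plan is to follow Dinur's two-step pipeline, adapted to the reconfiguration setting. Starting from RIH, we have that \prb{Gap$_{1,1-\epsilon}$ \BCSPReconf[$_W$]} is \PSPACE-hard for some fixed $\epsilon \in (0,1)$ and $W \in \bbN$; the task is to (i) bring the constraint graph into $(d,\lambda)$-expander form with $\lambda/d$ as small as we like while keeping a $1$ vs.\ $1-\Omega(1)$ gap, and then (ii) amplify that $\Omega(1)$ soundness loss up to the explicit constant $0.0058$ without destroying perfect completeness. The degree-reduction step is already available from \citet{ohsaka2023gap}, so step (i) reduces to the \textbf{expanderization} of \cref{lem:amp:expand}: superimpose a $(d_0,\lambda_0)$-Ramanujan expander on the bounded-degree graph, attaching trivial (always-satisfied) constraints on the new edges, and invoke \cref{lem:lambda} to bound the second eigenvalue of the union. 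Rescaling $d$ upward makes $\lambda/d \le \rho$ while the fraction of "real" constraints stays a constant, so the gap shrinks only by a constant factor — this is routine. The bulk of the work is step (ii), the \textbf{powering} lemma \cref{lem:amp:pow}.

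For powering, I would enlarge the alphabet to $\Sigma'^{d^{O(\rep)}}$ where $\Sigma' = \binom{\Sigma}{\le 2}$ is the squared alphabet, so that each vertex records a consistent ``opinion'' (in the sense of \cref{tab:consistency}) about every vertex reachable by a length-$O(\rep)$ walk. Edges of the powered graph $G'$ correspond to truncated random walks of expected length $\rep$, and the modified verifier accepts a proof $\psi'$ exactly when the endpoints' opinions are pairwise consistent along the walk and some "possible world" refining those opinions satisfies every traversed constraint. Perfect completeness is immediate: a satisfying assignment $\psi$ for $G$ lifts to the proof where every vertex's opinion about $w$ is the singleton $\{\psi(w)\}$, and a reconfiguration sequence for $G$ lifts vertex-by-vertex using the squared alphabet to pass through the intermediate pair-values (this is exactly the alphabet-squaring trick of \citet{ohsaka2023gap}). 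For soundness I would follow \citet{radhakrishnan2006gap,radhakrishnan2007dinurs}: given any reconfiguration sequence $\sqpsi'$ for $G'$, apply the \emph{popularity vote} to each proof to extract a sequence of assignments for $G$, then \emph{interpolate} (as in \cref{clm:amp:interpolate}) to repair the neighboring-differ-in-one-vertex condition that the popularity vote breaks. If every such $G$-assignment leaves an $\epsilon$-fraction of edges unsatisfied, one shows the modified verifier rejects with probability $\Omega(\rep\epsilon)$, hence $\val_{G'}(\psi^\ini \reco \psi^\tar) < 1 - c\cdot\rep\epsilon$ for an explicit $c$; choosing $\rep$ a sufficiently large constant (depending on $\epsilon$) pushes the soundness below $1 - 0.0058$. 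Then one more application of degree reduction brings the alphabet back down to some universal $W$, giving the final $\prb{Gap}_{1,0.9942}$ statement, and \cref{lem:amp:pow:powered} guarantees $G'$ stays expander (with $\lambda/d$ controllable) so the expander restriction survives powering.

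The main obstacle — and where the novelty lies — is making the soundness analysis go through for the \emph{much weaker} modified verifier over $\Sigma'$. Because consistency is strictly weaker than equality, the classical estimates on the probability of hitting a "faulty step" lose the clean bound $q \le \Pr_{\alpha'\sim\vec\nu}[\alpha' \text{ consistent with } \alpha_{\PLR}]$; the best one gets is the factor-$2$ relaxation $q \le 2\Pr_{\alpha'\sim\vec\nu}[\alpha' \text{ consistent with } \alpha_{\PLR}]$ (the footnote's example shows this is tight), stemming from the fact that any $\beta' \in \Sigma'$ is refined by at most two singletons. I would therefore prove a squared-alphabet analogue of the key consistency claim (\cref{clm:amp:pow:consistent}) carrying these extra factors of $2$, and then re-run the expander-mixing / random-walk estimates (\cref{clm:amp:FE,lem:amp:pow:N1,lem:amp:pow:N2}) tracking constants carefully — the payoff being that the lost factors are absorbed into the choice of $\rep$ and the final numeric constant $0.0058$ is still attained (indeed the refinement over \citet{radhakrishnan2007dinurs}'s $\approx 0.9995$ comes precisely from optimizing these estimates). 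A secondary technical point is verifying that the powered graph $G'$ is genuinely expander rather than merely connected, which requires an explicit spectral computation (\cref{lem:amp:pow:powered}) via \cref{lem:lambda} on the walk-operator decomposition; this is needed downstream for the \SetCoverReconf reduction but is not part of Dinur's original argument.
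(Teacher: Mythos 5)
Your proposal follows essentially the same route as the paper: degree reduction from \citet{ohsaka2023gap}, then expanderization (\cref{lem:amp:expand}) with the expander parameter tuned so that $\frac{\lambda}{d}$ is small enough to survive powering, then the squared-alphabet powering step (\cref{lem:amp:pow}) with the popularity vote, the interpolation repair of \cref{clm:amp:interpolate}, the factor-$2$ consistency bound of \cref{clm:amp:pow:consistent}, and the explicit spectral computation of \cref{lem:amp:pow:powered} showing $\frac{\lambda'}{d'} = \bigO\bigl(\frac{\lambda}{d}\bigr)$; this matches \cref{subsec:amp:together} exactly. One correction: your final step, ``one more application of degree reduction brings the alphabet back down to some universal $W$,'' is both unnecessary and wrong as stated --- degree reduction does not shrink the alphabet (that is alphabet reduction, which the paper explicitly cannot do for reconfiguration, cf.\ \cref{subsec:intro:ABCreduct}), and no such step is needed because $\rep = \lceil 2/\epsilon \rceil$ is a constant fixed by RIH, so the powered alphabet size $W' = W^{d^{\bigO(1/\epsilon)}}$ is already a universal constant; moreover, any further reduction after powering would risk destroying the $(d,\lambda)$-expander restriction and the explicit $0.0058$ gap. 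Dropping that superfluous step, your argument coincides with the paper's proof.
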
\noindent
The proof of \cref{thm:amp} is based on an adaptation of the first two steps of
Dinur's proof \cite{dinur2007pcp} of the PCP theorem.
In \cref{subsec:amp:expand,subsec:amp:pow},
we adapt the \textbf{expanderization} and \textbf{powering} steps
of \cite{dinur2007pcp}
to \MaxminBCSPReconf
(\cref{lem:amp:expand,lem:amp:pow}), respectively.
\cref{thm:amp} can be almost directly derived from
\cref{lem:amp:expand,lem:amp:pow} plus the \textbf{degree reduction} step of \cite{ohsaka2023gap},
whose details involving parameter tuning are described in \cref{subsec:amp:together}.

\subsection{Expanderization}
\label{subsec:amp:expand}
Here, we superimpose an expander graph on
a small-degree constraint graph of \MaxminBCSPReconf so that
the resulting constraint graph is expander.
The proof of the following lemma is reminiscent of that of \cite[Lemma 4.4]{dinur2007pcp}.

\begin{lemma}
\label{lem:amp:expand}
    For every
    number $\epsilon \in (0,1)$,
    positive integers $W, \Delta \in \bbN$, and
    positive even integer $d_0 \geq 3$,
    there exists a gap-preserving reduction from
    \prb{Gap$_{1,1-\epsilon}$ \BCSPReconf[$_W$]}
    whose underlying graph is $\Delta$-regular
    to
    \prb{Gap$_{1,1-\frac{\Delta}{\Delta+d_0}\epsilon}$ \BCSPReconf[$_W$]}
    whose underlying graph is $(\Delta + d_0, \Delta + 2\sqrt{d_0})$-expander.
\end{lemma}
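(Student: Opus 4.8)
The plan is to implement the standard \textbf{expanderization} trick: take the $\Delta$-regular constraint graph $G = (V, E, \Sigma, \Pi)$ of the \YES/\NO instance, fix an explicit $d_0$-regular Ramanujan-type expander $X$ on the same vertex set $V$ (one exists for every even $d_0 \geq 3$ by known constructions, with $\lambda(X) \leq 2\sqrt{d_0 - 1} \leq 2\sqrt{d_0}$), and form the superimposed graph $G'$ whose edge multiset is $E(G) \uplus E(X)$. The alphabet stays $\Sigma$; each edge coming from $G$ keeps its original constraint $\pi_e$; each edge coming from $X$ gets the trivial (always-satisfied) constraint $\pi_e = \Sigma \times \Sigma$. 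Then $G'$ is $(\Delta + d_0)$-regular, and by \cref{lem:lambda} its second eigenvalue satisfies $\lambda(G') \leq \lambda(G) + \lambda(X) \leq \Delta + 2\sqrt{d_0}$ (using $\lambda(G) \leq \Delta$ trivially for any $\Delta$-regular graph), so $G'$ is the required $(\Delta + d_0, \Delta + 2\sqrt{d_0})$-expander. The reduction maps $(\psi^\ini, \psi^\tar)$ to themselves; note $|E(G')| = |E(G)| + |E(X)| = \frac{\Delta + d_0}{2}|V|$ while $|E(G)| = \frac{\Delta}{2}|V|$.

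The key observation tying values together is that for any assignment $\psi$, the newly added $X$-edges are all satisfied, so the number of unsatisfied edges of $G'$ under $\psi$ equals the number of unsatisfied edges of $G$ under $\psi$. Hence
\begin{align}
    1 - \val_{G'}(\psi)
    = \frac{|E(G)|}{|E(G')|} \cdot \bigl(1 - \val_G(\psi)\bigr)
    = \frac{\Delta}{\Delta + d_0} \cdot \bigl(1 - \val_G(\psi)\bigr).
\end{align}
Taking the minimum over a reconfiguration sequence (the minimizer of $\val_{G'}$ is the maximizer of $1 - \val_{G'}$, i.e., the same assignment that minimizes $\val_G$), the identity passes to reconfiguration sequences, and since the set of reconfiguration sequences from $\psi^\ini$ to $\psi^\tar$ is identical for $G$ and $G'$ (same vertex set, same notion of adjacency of assignments), we get
\begin{align}
    1 - \val_{G'}(\psi^\ini \reco \psi^\tar)
    = \frac{\Delta}{\Delta + d_0} \cdot \bigl(1 - \val_G(\psi^\ini \reco \psi^\tar)\bigr).
\end{align}

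From this the completeness and soundness claims are immediate. If $\val_G(\psi^\ini \reco \psi^\tar) \geq 1$ (a \YES instance, with $\psi^\ini, \psi^\tar$ satisfying $G$ hence also $G'$), then $\val_{G'}(\psi^\ini \reco \psi^\tar) \geq 1$, so perfect completeness is preserved. If $\val_G(\psi^\ini \reco \psi^\tar) < 1 - \epsilon$ (a \NO instance), then $1 - \val_{G'}(\psi^\ini \reco \psi^\tar) > \frac{\Delta}{\Delta + d_0}\epsilon$, i.e., $\val_{G'}(\psi^\ini \reco \psi^\tar) < 1 - \frac{\Delta}{\Delta + d_0}\epsilon$, which is exactly the \NO condition for the target problem. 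Since the reduction is clearly polynomial-time (the expander $X$ is explicitly constructible and the instance size grows only by a constant factor depending on $d_0$), this completes the argument.

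I do not anticipate a genuine obstacle here; the only points requiring care are (i) citing or constructing an appropriate $d_0$-regular graph on exactly the vertex set $V$ with $\lambda \leq 2\sqrt{d_0}$ — one should either invoke an explicit Ramanujan construction for the relevant $d_0$ or, more robustly, argue that a random $d_0$-regular graph (or a suitable union of $d_0/2$ perfect matchings / Hamiltonian cycles) has $\lambda \leq 2\sqrt{d_0}$, possibly absorbing a tiny slack into the stated bound; and (ii) making sure the edge-count bookkeeping and the "always-satisfied constraint" convention are stated precisely so that the value identity above is exact rather than approximate. Everything else is a one-line computation. The mild subtlety is that $G$ need not itself be connected or an expander for the bound $\lambda(G) \leq \Delta$ to hold — it does, trivially, for any $\Delta$-regular graph — so superimposing $X$ is what forces the spectral gap, exactly as in \cite[Lemma 4.4]{dinur2007pcp}.
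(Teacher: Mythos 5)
Your proposal is correct and follows essentially the same route as the paper: superimpose an explicitly constructible $(d_0,2\sqrt{d_0})$-expander on $V$ with always-satisfied constraints, bound $\lambda(G')\leq\lambda(G)+\lambda(X)\leq\Delta+2\sqrt{d_0}$ via \cref{lem:lambda}, and use the exact affine relation between $\val_{G'}(\psi)$ and $\val_G(\psi)$ (your unsatisfied-edge identity is just the paper's $\val_{G'}(\psi)=\frac{\Delta}{\Delta+d_0}\val_G(\psi)+\frac{d_0}{\Delta+d_0}$ rewritten), which passes to reconfiguration sequences since both graphs share the same vertex set and assignments.
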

\begin{proof}
Let $(\psi^\ini,\psi^\tar;G)$ be an instance of
\MaxminBCSPReconf[$_W$], where
its constraint graph $G=(V,E,\Sigma, \Pi=(\pi_e)_{e \in E})$ is $\Delta$-regular.
Construct a new constraint graph $G'=(V,E',\Sigma,\Pi'=(\pi'_{e'})_{e' \in E'})$
on the same vertex set $V$ and alphabet $\Sigma$ as follows:
\begin{description}
\item[\textbf{Edge set:}]
    Using an explicit construction for near-Ramanujan graphs \cite{mohanty2021explicit,alon2021explicit},
    create a $(d_0,2\sqrt{d_0})$-expander graph $X$ on vertex set $V$ in polynomial time.\footnote{
        We can safely assume that
        $|V|$ is sufficiently large so that the construction of 
        \cite{mohanty2021explicit,alon2021explicit} can be applied.
    }
    Define $E' \coloneq E \uplus E(X)$.
\item[\textbf{Constraints:}]
    The constraint $\pi'_{e'}$ for each edge $e' \in E'$ is defined as follows:
    \begin{itemize}
    \item
        If $e' \in E$, define $\pi'_{e'} \coloneq \pi_{e'}$ (i.e., it is the same as before).
    \item
        Otherwise (i.e., $e' \in E(X)$),
        define $\pi'_{e'} \coloneq \Sigma^{e'}$ (i.e., it is always satisfied).
    \end{itemize}
\end{description}\noindent
Observe that $G'$ is $(\Delta+d_0,\Delta+2\sqrt{d_0})$-expander
because it follows from \cref{lem:lambda} that
\begin{align}
    \lambda(G') \leq \lambda(G) + \lambda(X) \leq \Delta+2\sqrt{d_0}.
\end{align}
Since edges of $X$ are always satisfied,
any assignment $\psi \colon V \to \Sigma$ satisfies
\begin{align}
    \val_{G'}(\psi) = \frac{\Delta}{\Delta+d_0}\val_G(\psi) + \frac{d_0}{\Delta+d_0},
\end{align}
implying the completeness and soundness, as desired.
\end{proof}

\subsection{Powering}
\label{subsec:amp:pow}
Given an expander constraint graph,
we reduce its soundness error by using
the redesigned powering step of \cite{dinur2007pcp,radhakrishnan2006gap,radhakrishnan2007dinurs}
for \MaxminBCSPReconf, as formally stated below.

\begin{lemma}[Gap amplification lemma]
\label{lem:amp:pow}
    For every
    number $\epsilon \in (0,1)$,
    positive integers $W,d \in \bbN$, and
    number $\lambda > 0$ such that $\lambda \leq \frac{d}{2}$,
    there exists a gap-preserving reduction from
    \prb{Gap$_{1,1-\epsilon}$ \BCSPReconf[$_W$]}
    whose underlying graph is $(d,\lambda)$-expander
    to
    \prb{Gap$_{1,1-\epsilon'}$ \BCSPReconf[$_{W'}$]}
    whose underlying graph is $(d',\lambda')$-expander, where
    \begin{align}
        \epsilon' = \frac{0.0294}{3 + 2\frac{d}{d-\lambda}},
    \end{align}
    $d'$ and $\lambda'$ depend only on the values of $\epsilon$, $d$, and $\lambda$ such that
    $\frac{\lambda'}{d'} \leq \frac{4^{100}}{100} \cdot \frac{\lambda}{d}$, and
    $W' = W^{d^{\bigO\left(\frac{1}{\epsilon}\right)}}$.
\end{lemma}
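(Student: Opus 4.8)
The plan is to follow the template of Dinur's powering step as streamlined by Radhakrishnan--Sundaram, but with every construction modified so as to preserve perfect completeness via the alphabet squaring trick. Given a $(d,\lambda)$-expander constraint graph $G=(V,E,\Sigma,\Pi)$ with $|\Sigma|=W$, I would first fix a walk-length parameter $\rep = \Theta(1/\epsilon)$ and define the powered graph $G' = G^{\rep}$ on vertex set $V$ as follows: each edge of $G'$ corresponds to a random walk of $G$ whose length follows a geometric-type distribution with mean $\rep$, so that $G'$ is regular of degree $d' = d^{O(\rep)}$; the new alphabet is $\Sigma' = \{\,\alpha : \alpha \subseteq \Sigma, 1 \le |\alpha| \le 2\,\}$ raised to the power $d^{O(\rep)}$, i.e.\ a \emph{proof} is a function $\psi' \colon V \to (\Sigma')^{d^{O(\rep)}}$ assigning to each $v$ a tuple of squared-alphabet opinions about the values of all vertices within distance $O(\rep)$ of $v$. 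For each edge $(x,y)$ of $G'$ (with its associated walk), the constraint $\pi'_{(x,y)}$ is the \emph{modified verifier's} acceptance predicate: $x$ and $y$ must have consistent opinions (in the sense of Table~\ref{tab:consistency}, $\alpha \subseteq \beta$ or $\alpha \supseteq \beta$) about every vertex on the walk, and every ``possible world'' obtained by selecting one value from each opinion must satisfy every edge $\pi_e$ of $G$ traversed by the walk. I would then bound $W'$, $d'$, $\lambda'/d'$ by elementary counting plus Lemma~\ref{lem:lambda} applied to the fact that $G'$ is a nonnegative combination of powers $G^0, G^1, G^2, \dots$ of $G$ (this is Lemma~\ref{lem:amp:pow:powered}).

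\textbf{Completeness.} If $\psi \colon V \to \Sigma$ satisfies $G$, define the ``canonical'' proof $\psi'_\psi$ by letting every vertex's opinion about $w$ be the singleton $\{\psi(w)\}$; this satisfies every modified constraint trivially. The reconfiguration part requires more care: given a reconfiguration sequence $(\psi^{(1)}, \dots, \psi^{(T)})$ of satisfying assignments for $G$ with $\psi^{(t)}$ and $\psi^{(t+1)}$ differing only at a vertex $u$, I would interpolate in $G'$ by first walking, one coordinate at a time, every opinion about $u$ held anywhere in $V$ from $\{\psi^{(t)}(u)\}$ to $\{\psi^{(t)}(u), \psi^{(t+1)}(u)\}$, then to $\{\psi^{(t+1)}(u)\}$. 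The point of the squared alphabet and the $\subseteq$/$\supseteq$ definition of consistency is exactly that every intermediate proof along this coordinate-by-coordinate sweep still satisfies the modified verifier: any walk either sees only old opinions, only new opinions, or a mixture all of which are consistent with each other and whose possible worlds include the still-satisfying assignments $\psi^{(t)}$ and $\psi^{(t+1)}$. Hence $\val_{G'}(\psi'^{\,\ini} \reco \psi'^{\,\tar}) = 1$, giving perfect completeness.

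\textbf{Soundness.} This is the technical heart. Suppose $\val_G(\psi^\ini \reco \psi^\tar) < 1-\epsilon$; I must show $\val_{G'}(\psi'^{\,\ini}\reco\psi'^{\,\tar}) < 1-\epsilon'$ with $\epsilon' = 0.0294/(3 + 2\tfrac{d}{d-\lambda})$. Take any reconfiguration sequence $\sqpsi'$ of proofs for $G'$ from $\psi'^{\,\ini}$ to $\psi'^{\,\tar}$. Apply the \emph{popularity vote} to each proof $\psi'^{(t)}$ to extract an assignment $\psi^{(t)} \colon V \to \Sigma$ where $\psi^{(t)}(v) = \argmax_{\alpha\in\Sigma}\Pr[\text{a random opinion about }v\text{ is consistent with }\{\alpha\}]$; by Claim~\ref{clm:amp:interpolate} I interpolate between consecutive $\psi^{(t)}$, $\psi^{(t+1)}$ (which need not be $G$-neighbors) to get a genuine reconfiguration sequence $\sqpsi$ for $G$, at the cost of at most doubling the number of intermediate assignments and controlling the value drop. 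Since $\val_G(\sqpsi) < 1-\epsilon$, some assignment $\psi$ on this interpolated path has $\ge \epsilon$-fraction of edges of $G$ unsatisfied; fixing this $\psi$ and the corresponding proof $\psi'$ in $\sqpsi'$, I lower-bound the rejection probability of the modified verifier on $\psi'$ by a constant times $\epsilon$. The standard argument conditions on the walk hitting an edge $e=(a,b)$ with $\psi(a)$ or $\psi(b)$ not the locally-voted value, and tracks through the walk the first and last ``faulty steps'' where an endpoint's opinion disagrees with the popularity vote; expansion of $G$ (the expander mixing lemma / second-eigenvalue bound $\lambda \le d/2$) controls the probability these faulty steps are absorbed before reaching $e$. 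The modifications forced by the squared alphabet are: (i) the key inequality $q \le 2\cdot\Pr_{\alpha'\sim\vec\nu}[\alpha' \text{ consistent with } \alpha_{\PLR}]$ of Claim~\ref{clm:amp:pow:consistent}, which costs a factor $2$ and is why the completeness-vs-soundness gap cannot exceed $\tfrac12$ (cf.\ \cite{bogdanov2005gap}); (ii) the refined counting in Claims~\ref{clm:amp:FE}, \ref{lem:amp:pow:N1}, \ref{lem:amp:pow:N2} that yields the numerical constant $0.0294$ rather than the weaker $\approx 0.0005$ of \cite{radhakrishnan2007dinurs}. The main obstacle I expect is precisely reconciling the factor-$2$ loss from the squared alphabet with the interpolation loss from Claim~\ref{clm:amp:interpolate} while still extracting a genuinely constant $\epsilon'$ independent of $\epsilon$ --- i.e.\ verifying that the truncation of the geometric walk length at $O(\rep)$ and the doubling from interpolation do not eat into the $\Theta(\rep\epsilon)$ gain, which forces careful choice of $\rep$ and a tight union bound over faulty steps.
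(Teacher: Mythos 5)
Your plan follows essentially the same route as the paper: Radhakrishnan--Sundaram-style powering with the squared alphabet and consistency-based verifier, completeness via coordinate-by-coordinate passage through the doubled value $\{\alpha,\beta\}$, soundness via the popularity vote, the interpolation of \cref{clm:amp:interpolate}, the factor-$2$ consistency bound of \cref{clm:amp:pow:consistent}, and the first/second moment estimates on faulty steps (\cref{lem:amp:pow:N1,lem:amp:pow:N2}) over the expander, with the expansion of $G'$ obtained as in \cref{lem:amp:pow:powered}. The only imprecisions are presentational rather than substantive: the rejection probability is lower-bounded by the Paley--Zygmund inequality $\Pr[N'>0] \geq \E[N']^2/\E[N'^2]$ applied to the truncated walk, not by a ``union bound over faulty steps,'' and the interpolation cost is not a ``doubling'' but an additive value loss of $\frac{d^{\Rep+2}}{|E|}$, absorbed by assuming $|V|$ is sufficiently large.
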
\noindent
Note that $\epsilon' \approx 0.0058$ if $\frac{\lambda}{d}\approx 0$.
The remainder of this subsection is devoted to the proof of \cref{lem:amp:pow}.

\subsubsection{PCP View of the Powering Step}
We first describe the powering step of \MaxminBCSPReconf
from the viewpoint of probabilistically checkable proof (PCP) systems, followed by its analysis.
Let $(\psi^\ini,\psi^\tar; G)$ be an instance of
\prb{Gap$_{1,1-\epsilon}$ \BCSPReconf[$_W$]}, where
$G=(V,E,\Sigma,\Pi=(\pi_e)_{e \in E})$ is a constraint graph such that
$(V,E)$ is $(d,\lambda)$-expander such that $\lambda \leq \frac{d}{2}$,
$|\Sigma| = W$,
$\psi^\ini$ and $\psi^\tar$ satisfy $G$.
Note that $\val_G(\psi^\ini \reco \psi^\tar)$ is either
equal to $1$ or less than $1-\epsilon$.
We first apply the \emph{alphabet squaring trick} \cite{ohsaka2023gap} to $\Sigma$ and define
\begin{align}
    \Sigma'
        \coloneq
        \Bigl\{ \{\alpha\} \Bigm| \alpha \in \Sigma \Bigr\}
        \cup
        \Bigl\{ \{\alpha, \beta\} \Bigm| \alpha \neq \beta \in \Sigma \Bigr\}.
\end{align}
Note that $|\Sigma'| = \frac{W(W+1)}{2}$.
We say that $\alpha'$ and $\beta'$ in $\Sigma'$ are \emph{consistent} if
``$\alpha' \subseteq \beta'$ or $\alpha' \supseteq \beta'$'' and
are \emph{conflicting} otherwise.
Such a notion of consistency will affect the design of the verifier's test and the popularity vote.
Below, we shall define the proof format and its verifier.

\paragraph{Proof Format and Verifier's Test.}
Let $\rep \coloneq \left\lceil \frac{2}{\epsilon} \right\rceil \in \bbN$ be
a parameter specifying the (expected) length of random walks,\footnote{
This parameter corresponds to ``$t$'' in 
\cite{dinur2007pcp,radhakrishnan2006gap,radhakrishnan2007dinurs,guruswami2005course}.
}
whose value will be justified in the proof of \cref{lem:amp:pow}.
Define $\Rep \coloneq 100\rep$.
The entire \emph{proof} is specified by a function
$\psi' \colon V \to \Sigma'^{d^{\Rep + 1}}$.
For each vertex $x$,
$\psi'(x)$ claims values (in $\Sigma'$) of all vertices within a distance of $\Rep$.\footnote{
The number of such vertices is at most $1+d+d^2+\cdots+d^{\Rep} \leq d^{\Rep + 1}$.
}
We use $\psi'(x)[v]$ to denote a value of $\Sigma'$ claimed to be assigned to a vertex $v$,
which can be thought of as \emph{$x$'s opinion on the value of $v$}.
If $v$ is more than $\Rep$ distance away from $x$, then $\psi'(x)[v]$ is \emph{undefined}.

The \emph{verifier} probabilistically reads a small portion of a given proof $\psi' \colon V \to \Sigma'^{d^{\Rep+1}}$ to decide
whether to accept it.
For an (undirected) edge $e = (v,w)$ of $G$,
a \emph{step} from $v$ to $w$ is written as $\vv{e} = \vv{vw}$.
For a pair of vertices $x$ and $y$ and a step $\vv{e} = \vv{vw}$ of $G$,
$\psi'(x)$ and $\psi'(y)$ \emph{pass the test at step $\vv{e}$} if
one of $\psi'(x)[v]$, $\psi'(x)[w]$, $\psi'(y)[v]$, or $\psi'(y)[w]$ is undefined, or
all of the following conditions hold:
\begin{enumerate}[label=(C\arabic*)]
\item
    $\psi'(x)[v]$ and $\psi'(y)[v]$ are consistent;
    \label{enm:amp:pow:C1}
\item
    $\psi'(x)[w]$ and $\psi'(y)[w]$ are consistent;
    \label{enm:amp:pow:C2}
\item
    $(\psi'(x)[v] \cup \psi'(y)[v]) \times (\psi'(x)[w] \cup \psi'(y)[w]) \subseteq \pi_e$.
    In other words, for all $\alpha \in \psi'(x)[v] \cup \psi'(y)[v]$ and $\beta \in \psi'(x)[w] \cup \psi'(y)[w]$, $(\alpha, \beta)$ satisfies $e$.
    \label{enm:amp:pow:C3}
\end{enumerate}
Otherwise, we say that $\psi'(x)$ and $\psi'(y)$ \emph{fail the test at $\vv{e}$}.
For the verifier to select $x$ and $y$, we
introduce the following two types of random walk \cite{guruswami2005course}.

\begin{definition}[\cite{guruswami2005course}]
    For a graph $G=(V,E)$,
    an \emph{after-stopping random walk} (ASRW) starting from a vertex $s \in V$ is
    obtained by the following procedure:
    \begin{itembox}[l]{\textbf{After-stopping random walk.}}
    \begin{algorithmic}[1]
        \State set $v_0 \coloneq s$.
        \For{$k = 1 \textbf{ to } \infty$}
            \State sample a random neighbor $v_{k}$ of $v_{k-1}$ and proceed along step $\vv{e_k} \coloneq \vv{v_{k-1}v_{k}}$.
            \State \textbf{break} with probability $\frac{1}{\rep}$.
        \EndFor
        \State \textbf{return} walk $\vec{W} = ( \vv{e_1}, \vv{e_2}, \ldots, \vv{e_k} )$.
    \end{algorithmic}
    \end{itembox}
    A \emph{before-stopping random walk} (BSRW) starting from vertex $s$ is
    obtained by the modified ASRW so that the order of line~3 and line~4 is reversed.
\end{definition}
Note that ASRWs have length at least $1$ while BSRWs can have length $0$.
The verifier makes a decision according to the following randomized algorithm:

\begin{itembox}[l]{\textbf{Verifier's test.}}
\begin{algorithmic}[1]
    \State choose a starting vertex $x$ from $V$ uniformly at random.
    \State sample an ASRW starting from $x$,
        denoted $\vec{W} = ( \vv{e_1}, \ldots, \vv{e_k} )$,
        which will be referred to as the \emph{verifier's walk};
        let $y$ be the ending terminal of $\vec{W}$.
    \For{\textbf{each} step $\vv{e_i}$ of $\vec{W}$}
        \If{$\psi'(x)$ and $\psi'(y)$ fail the test at $\vv{e_i}$}
            \State \textbf{declare} \textsf{reject}.
        \EndIf
    \EndFor
    \State \textbf{declare} \textsf{accept}.
\end{algorithmic}
\end{itembox}
\cref{fig:verifier} illustrates running examples of the verifier
when the original constraint graph $G=(V,E,\Sigma,\Pi)$
represents a \prb{3-Coloring} instance;\footnote{
We remark that a reconfiguration analogue of \prb{3-Coloring}
admits a polynomial-time algorithm \cite{cereceda2011finding}.
} i.e.,
$\Sigma \coloneq \{\RRR, \GGG, \BBB\}$ and 
$\pi_e \coloneq \Sigma^2 \setminus \{(\RRR,\RRR), (\GGG,\GGG), (\BBB,\BBB)\}$
for all $e \in E$.
Construct two proofs
$\psi'^\ini, \psi'^\tar \colon V \to \Sigma'^{d^{\Rep + 1}}$
from $\psi^\ini$ and $\psi^\tar$
such that
$\psi'^\ini(x)[v] \coloneq \{\psi^\ini(v)\}$ and
$\psi'^\tar(x)[v] \coloneq \{\psi^\tar(v)\}$
for all $v$ and $x$ within distance $\Rep$.
Observe that the verifier accepts both $\psi'^\ini$ and $\psi'^\tar$ with probability $1$,
finishing the description of the powering step (from a PCP point of view).

\begin{figure}[t]
\centering
\subfloat[
    Suppose $\psi'(x){[v]}=\{\RRR\}$ and $\psi'(y){[v]}=\{\BBB\}$.
    Since $\{\RRR\}$ and $\{\BBB\}$ are conflicting,
    $\psi'(x)$ violates \ref{enm:amp:pow:C1}.
]{\resizebox{.48\columnwidth}{!}{\begin{tikzpicture}
    \input{temp-verifier}
    \node[node, fill=\whitegreen] at (w) {$w$};
    \node[node] at (y) {$y$};
        
    \node[above=20mm of $(x)!.5!(v)$, centered, font=\LARGE]{$\psi'(x)[v]=\{\RRR\}$};
    \node[above=20mm of $(v)!.5!(y)$, centered, font=\LARGE]{$\psi'(y)[v]=\{\BBB\}$};
    \node[below=20mm of $(x)!.5!(w)$, centered, font=\LARGE]{$\psi'(x)[w]=\{\GGG\}$};
    \node[below=20mm of $(w)!.5!(y)$, centered, font=\LARGE]{$\psi'(y)[w]=\{\GGG\}$};
    
    \draw[edge, out=+75, in=+120, looseness=.6, draw=\blackred] (x) to (v);
    \draw[edge, out=-75, in=-120, looseness=.4, draw=\blackgreen] (x) to (w);
    \draw[edge, out=+105, in=+60, looseness=.4, draw=\blackblue] (y) to (v);
    \draw[edge, out=-105, in=-60, looseness=.6, draw=\blackgreen] (y) to (w);
    
    \node[centered, color=red, text opacity=.7] at (v) {{\fontsize{72pt}{72pt}\selectfont \ding{55}}};
\end{tikzpicture}}}
\hfill
\subfloat[
    Suppose $\psi'(x){[v]}=\{\RRR\}$, $\psi'(y){[v]}=\{\RRR,\BBB\}$, $\psi'(x){[w]}=\{\RRR\}$, and $\psi'(y){[w]}=\{\RRR\}$.
    Since $(\psi'(x){[v]} \cup \psi'(y){[v]}) \times (\psi'(x){[w]} \cup \psi'(y){[w]}) = \{\RRR,\BBB\} \times \{\RRR\}$,
    which contains $(\RRR,\RRR) \notin \pi_{e_i}$,
    $\psi'(x)$ and $\psi'(y)$ violate \ref{enm:amp:pow:C3}.
]{
\resizebox{.48\columnwidth}{!}{\begin{tikzpicture}
    \input{temp-verifier}
    \node[node, shading=axis, top color=\whitered, bottom color=\whiteblue] at (v) {$v$};
    \node[node, fill=\whitered] at (w) {$w$};
    \node[node] at (y) {$y$};
    
    \node[above=20mm of $(x)!.5!(v)$, centered, font=\LARGE]{$\psi'(x)[v]=\{\RRR\}$};
    \node[above=20mm of $(v)!.5!(y)$, centered, font=\LARGE]{$\psi'(y)[v]=\{\RRR,\BBB\}$};
    \node[below=20mm of $(x)!.5!(w)$, centered, font=\LARGE]{$\psi'(x)[w]=\{\RRR\}$};
    \node[below=20mm of $(w)!.5!(y)$, centered, font=\LARGE]{$\psi'(y)[w]=\{\RRR\}$};
    
    \draw[edge, out=+75, in=+120, looseness=.6, draw=\blackred] (x) to (v);
    \draw[edge, out=-75, in=-120, looseness=.4, draw=\blackred] (x) to (w);
    \draw[edge, out=+105, in=+60, looseness=.4, draw=\blackred] (y) to (v);
    \draw[edge, out=+115, in=+50, looseness=.4, draw=\blackblue] (y) to (v);
    \draw[edge, out=-105, in=-60, looseness=.6, draw=\blackred] (y) to (w);
    
    \node[below=2mm of $(v)!.5!(w)$, centered, color=red, text opacity=.7]{{\fontsize{72pt}{72pt}\selectfont \ding{55}}};
\end{tikzpicture}}}
\\
\subfloat[
    Suppose $\psi'(x){[v]}=\{\RRR\}$, $\psi'(y){[v]}=\{\RRR,\BBB\}$, $\psi'(x){[w]}=\{\GGG\}$, and $\psi'(y){[w]}=\{\GGG\}$.
    Observe that $\psi'(x)$ and $\psi'(y)$ satisfy \ref{enm:amp:pow:C1}--\ref{enm:amp:pow:C3};
    i.e., they pass the test at $e_i=(v,w)$.
]{
\resizebox{.48\columnwidth}{!}{\begin{tikzpicture}
    \input{temp-verifier}
    \node[node, shading=axis, top color=\whitered, bottom color=\whiteblue] at (v) {$v$};
    \node[node, fill=\whitegreen] at (w) {$w$};
    \node[node] at (y) {$y$};
    
    \node[above=20mm of $(x)!.5!(v)$, centered, font=\LARGE]{$\psi'(x)[v]=\{\RRR\}$};
    \node[above=20mm of $(v)!.5!(y)$, centered, font=\LARGE]{$\psi'(y)[v]=\{\RRR,\BBB\}$};
    \node[below=20mm of $(x)!.5!(w)$, centered, font=\LARGE]{$\psi'(x)[w]=\{\GGG\}$};
    \node[below=20mm of $(w)!.5!(y)$, centered, font=\LARGE]{$\psi'(y)[w]=\{\GGG\}$};
    
    \draw[edge, out=+75, in=+120, looseness=.6, draw=\blackred] (x) to (v);
    \draw[edge, out=-75, in=-120, looseness=.4, draw=\blackgreen] (x) to (w);
    \draw[edge, out=+105, in=+60, looseness=.4, draw=\blackred] (y) to (v);
    \draw[edge, out=+115, in=+50, looseness=.4, draw=\blackblue] (y) to (v);
    \draw[edge, out=-105, in=-60, looseness=.6, draw=\blackgreen] (y) to (w);
\end{tikzpicture}}}
\caption{
    Running examples of the verifier when 
    the original constraint graph $G=(V,E,\Sigma,\Pi)$ represents
    a \prb{$3$-Coloring} instance; i.e.,
    $\Sigma \coloneq \{\RRR, \GGG, \BBB\}$ and 
    $\pi_e \coloneq \Sigma^2 \setminus \{(\RRR,\RRR), (\GGG,\GGG), (\BBB,\BBB)\}$
    for all $e \in E$.
    Here, for a proof $\psi' \colon V \to \Sigma'^{d^{\Rep+1}}$,
    the verifier selects a random walk from $x$ to $y$, and
    is about to perform the test of $\psi'(x)$ and $\psi'(y)$ at $e_i \coloneq (v,w)$.
}
\label{fig:verifier}
\end{figure}

A \emph{reconfiguration sequence from $\psi'^\ini$ to $\psi'^\tar$}
is defined similarly to that for assignments (see \cref{sec:pre}).
We first show the completeness,
which is immediate from the alphabet squaring trick and the verifier's test.
\begin{lemma}
\label{lem:amp:pow:complete}
    If $\val_G(\psi^\ini \reco \psi^\tar) = 1$,
    there exists a reconfiguration sequence $\sqpsi'$ from $\psi'^\ini$ to $\psi'^\tar$ such that
    the verifier accepts every proof in $\sqpsi'$ with probability $1$.
\end{lemma}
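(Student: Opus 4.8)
The plan is to lift a reconfiguration sequence of satisfying \emph{assignments} for $G$ to one of always-accepted \emph{proofs}, bridging consecutive proofs via the alphabet squaring trick. Since $\val_G(\psi^\ini\reco\psi^\tar)=1$, fix a reconfiguration sequence $(\psi^{(1)},\ldots,\psi^{(T)})$ of satisfying assignments for $G$ with $\psi^{(1)}=\psi^\ini$, $\psi^{(T)}=\psi^\tar$, in which consecutive assignments differ in at most one vertex. For each $t$ let $\psi'^{(t)}\colon V\to\Sigma'^{d^{\Rep+1}}$ be the \emph{clean} proof given by $\psi'^{(t)}(x)[v]\coloneq\{\psi^{(t)}(v)\}$ for every $v$ within distance $\Rep$ of $x$; then $\psi'^{(1)}=\psi'^\ini$ and $\psi'^{(T)}=\psi'^\tar$, and each $\psi'^{(t)}$ is accepted with probability $1$ because all opinions are mutually agreeing singletons (so \ref{enm:amp:pow:C1} and \ref{enm:amp:pow:C2} are trivial) and \ref{enm:amp:pow:C3} at a step $\vv{e}=\vv{vw}$ amounts to $(\psi^{(t)}(v),\psi^{(t)}(w))\in\pi_e$, which holds since $\psi^{(t)}$ satisfies $G$.

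The work is to interpolate between $\psi'^{(t)}$ and $\psi'^{(t+1)}$ by single-vertex moves of $G'$ while keeping every intermediate proof always accepted; note that a single move of $G'$ may rewrite the entire opinion vector $\psi'(x)$ of one $x\in V$, so in particular changing one coordinate $\psi'(x)[v^*]$ is legal. If $\psi^{(t)}=\psi^{(t+1)}$ there is nothing to do; otherwise let $v^*$ be the vertex where they differ and put $a\coloneq\psi^{(t)}(v^*)$, $b\coloneq\psi^{(t+1)}(v^*)$, so $a\neq b$. I would then make two sweeps over the vertices $x$ having $v^*$ within distance $\Rep$, one vertex at a time:
\begin{itemize}
\item \textbf{Phase 1:} change $\psi'(x)[v^*]$ from $\{a\}$ to $\{a,b\}$ for each such $x$;
\item \textbf{Phase 2:} change $\psi'(x)[v^*]$ from $\{a,b\}$ to $\{b\}$ for each such $x$.
\end{itemize}
All opinions off $v^*$ keep their common value $\{\psi^{(t)}(\cdot)\}=\{\psi^{(t+1)}(\cdot)\}$, and after Phase 2 the proof is exactly $\psi'^{(t+1)}$; concatenating these blocks for $t=1,\ldots,T-1$ yields $\sqpsi'$.

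The heart of the argument is verifying that every proof $\psi'$ met in Phases 1 and 2 passes the verifier's test for all choices of walk endpoints $x,y$ and all steps $\vv{e}=\vv{vw}$. At any given moment the opinions on $v^*$ lie in $\{\{a\},\{a,b\}\}$ (Phase 1) or in $\{\{a,b\},\{b\}\}$ (Phase 2), each a chain under $\subseteq$, so any two present values are consistent and, together with the fact that opinions off $v^*$ are fixed singletons, \ref{enm:amp:pow:C1} and \ref{enm:amp:pow:C2} hold. For \ref{enm:amp:pow:C3}: if $v,w\neq v^*$ it is once more $(\psi^{(t)}(v),\psi^{(t)}(w))\in\pi_e$; if $v=v^*$ (the case $w=v^*$ is symmetric, and if any relevant opinion is undefined the step passes automatically) then $\psi'(x)[v^*]\cup\psi'(y)[v^*]\subseteq\{a,b\}$ while $\psi'(x)[w]\cup\psi'(y)[w]=\{\psi^{(t)}(w)\}$, so it suffices that $(a,\psi^{(t)}(w))$ and $(b,\psi^{(t)}(w))$ both satisfy $e$ --- the former since $\psi^{(t)}$ does, the latter since $\psi^{(t+1)}$ satisfies $e$ and $\psi^{(t+1)}(w)=\psi^{(t)}(w)$ because $w\neq v^*$. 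The only delicate point I foresee is bookkeeping, not the interpolation itself: a self-loop $e=(v^*,v^*)$ would make \ref{enm:amp:pow:C3} additionally demand $(a,b)\in\pi_e$, which I would dispose of by assuming $G$ has no self-loops --- an assumption the near-Ramanujan constructions of \cref{lem:amp:expand} and the degree-reduction gadget of \cite{ohsaka2023gap} can be taken to satisfy. The genuine conceptual point is merely that a direct jump from $\psi'^{(t)}$ to $\psi'^{(t+1)}$, or a naive coordinate sweep through $\{a\}\to\{b\}$, would create conflicting opinions on $v^*$; routing through $\{a,b\}$ is exactly what the alphabet squaring trick buys, after which the case check above goes through.
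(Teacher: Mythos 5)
Your proposal is correct and follows essentially the same route as the paper's proof of \cref{lem:amp:pow:complete}: reduce to a single-vertex change, route the opinions on the changed vertex through the squared value $\{\alpha,\beta\}$ in two one-vertex-at-a-time sweeps, and check \ref{enm:amp:pow:C1}--\ref{enm:amp:pow:C3} by the chain/consistency argument; your explicit case check of \ref{enm:amp:pow:C3} just spells out what the paper states tersely. The self-loop caveat you raise is a legitimate subtlety that the paper's write-up also glosses over (and is harmless for the instances actually fed into the powering step), so it does not separate your argument from theirs.
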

\begin{proof}
It suffices to consider the case that $\psi^\ini$ and $\psi^\tar$ differ in exactly one vertex, say, $v \in V$.
Let $\alpha \coloneq \psi^\ini(v)$ and $\beta \coloneq \psi^\tar(v)$.
Observe that $\psi'^\ini(x) \neq \psi'^\tar(x)$ if and only if
$x$'s opinion on the value of $v$ is defined.
Define a new proof $\hat{\psi}' \colon V \to \Sigma'^{d^{\Rep+1}}$ as follows:
\begin{align}
    \hat{\psi}'(x)[w] \coloneq
    \begin{cases}
        \psi'^\ini(x)[w] = \psi'^\tar(x)[w] & \text{if } w \neq v,  \\
        \{\alpha, \beta\} & \text{if } w = v.
    \end{cases}
\end{align}
Consider a reconfiguration sequence $\sqpsi'$ from $\psi'^\ini$ to $\psi'^\tar$ \emph{via} $\hat{\psi}'$
obtained by the following procedure:

\begin{itembox}[l]{\textbf{Reconfiguration sequence $\sqpsi'$ from $\psi'^\ini$ to $\psi'^\tar$.}}
\begin{algorithmic}[1]
    \For{\textbf{each} $x$ within distance $\Rep$ from $v$}
        \State change $x$'s current value from $\psi'^\ini(x)$ to $\hat{\psi}'(x)$.
    \EndFor
    \For{\textbf{each} $x$ within distance $\Rep$ from $v$}
        \State change $x$'s current value from $\hat{\psi}'(x)$ to $\psi'^\tar(x)$.
    \EndFor
\end{algorithmic}
\end{itembox}
Let $\psi'^{\circ}$ be any intermediate proof of $\sqpsi'$.
Then,
\begin{align}
    \psi'^{\circ}(x)[w] \text{ is }
    \begin{cases}
        \text{equal to } \psi'^\ini(x)[w] = \psi'^\tar(x)[w] & \text{if } w \neq v,  \\
        \text{either of } \{\alpha\}, \{\beta\}, \text{ or } \{\alpha, \beta\} & \text{if } w = v.
    \end{cases}
\end{align}
Since $\psi'^\ini$ and $\psi'^\tar$ satisfy \ref{enm:amp:pow:C3}, so does $\psi'^\circ$.
Define $K$ as the set of values that $\psi'^{\circ}(x)[v]$ has taken over all $x$; namely,
\begin{align}
    K \coloneq \Bigl\{
        \psi'^{\circ}(x)[v] \in \Sigma' \Bigm|
        x \text{ and } v \text{ are within distance } \Rep
    \Bigr\},
\end{align}
Observe that $K$ falls into one of the following cases:
\begin{itemize}
    \item $\{\{\alpha\}\}$ at the beginning of line~1,
    \item $\{\{\alpha\}, \{\alpha, \beta\}\}$ during lines~1--2,
    \item $\{\{\alpha, \beta\}\}$ between lines~2 and 3,
    \item $\{\{\alpha, \beta\}, \{\beta\}\}$ during lines~3--4, or
    \item $\{\{\beta\}\}$ at the ending of line~4.
\end{itemize}
In particular,
$\{\{\alpha\}, \{\beta\}\}$ does not appear, implying that
$\psi'^\circ(x)[v]$ and $\psi'^\circ(y)[v]$ are always consistent
for any $x$ and $y$.
Therefore, \ref{enm:amp:pow:C1} and \ref{enm:amp:pow:C2} must be satisfied by $\psi'^\circ$.
Consequently,
the verifier must accept $\psi'^{\circ}$ with probability $1$, as desired.
\end{proof}

The technical part of this subsection is the proof of the soundness, stated below.
\begin{lemma}
\label{lem:amp:pow:sound}
    If $\val_G(\psi^\ini \reco \psi^\tar) < 1-\epsilon$,
    every reconfiguration sequence from $\psi'^\ini$ to $\psi'^\tar$ includes a proof
    that is rejected by the verifier with probability at least
    $\Omega(\rep \epsilon)$.
\end{lemma}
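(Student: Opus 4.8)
The plan is to prove the contrapositive. Suppose $\sqpsi' = (\psi'^{(1)}, \dots, \psi'^{(\TTT)})$ is a reconfiguration sequence from $\psi'^\ini$ to $\psi'^\tar$ for $G'$ in which \emph{every} proof is accepted by the verifier with probability exceeding $1 - c_0\rep\epsilon$ for a small absolute constant $c_0$; I will build from it a reconfiguration sequence $\sqpsi$ from $\psi^\ini$ to $\psi^\tar$ for $G$ with $\val_G(\sqpsi) \geq 1-\epsilon$, contradicting $\val_G(\psi^\ini \reco \psi^\tar) < 1-\epsilon$. (The choice $\rep = \lceil 2/\epsilon\rceil$ is exactly what makes $c_0\rep\epsilon$ an honest constant, which after tracking all the constants in \cref{clm:amp:pow:consistent,clm:amp:FE,lem:amp:pow:N1,lem:amp:pow:N2} becomes $\approx 0.0058$.) The first ingredient is the \emph{popularity vote} applied to each proof separately: for a vertex $v$, let $\mathcal{D}_v$ be the distribution of the terminal of a before-stopping random walk from $v$ truncated at distance $\Rep$, and set $\psi^{(\ttt)}(v) \coloneq \argmax_{\alpha\in\Sigma}\Pr_{z\sim\mathcal{D}_v}[\{\alpha\}\text{ and }\psi'^{(\ttt)}(z)[v]\text{ are consistent}]$. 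Since $\psi'^\ini(z)[v] = \{\psi^\ini(v)\}$ and $\psi'^\tar(z)[v] = \{\psi^\tar(v)\}$ whenever $\operatorname{dist}(z,v)\le\Rep$, this recovers $\psi^{(1)} = \psi^\ini$ and $\psi^{(\TTT)} = \psi^\tar$, so $(\psi^{(1)},\dots,\psi^{(\TTT)})$ is at least a sequence from $\psi^\ini$ to $\psi^\tar$.

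The technical heart is to show each representative assignment is close to satisfying, say $\val_G(\psi^{(\ttt)}) \geq 1 - \frac{\epsilon}{3}$, and here too I would argue by contraposition: if $\psi^{(\ttt)}$ violates a set $F$ with $|F| > \frac{\epsilon}{3}|E|$, then the verifier rejects $\psi'^{(\ttt)}$ with probability $\Omega(\rep\epsilon)$. Following the \emph{faulty-step} argument of \cite{dinur2007pcp,radhakrishnan2006gap,radhakrishnan2007dinurs}, call a step $\vv{vw}$ of the verifier's walk (endpoints $x,y$) \emph{faulty} if $(v,w)\in F$ while $\psi^{(\ttt)}(v)\in\psi'^{(\ttt)}(x)[v]\cup\psi'^{(\ttt)}(y)[v]$ and $\psi^{(\ttt)}(w)\in\psi'^{(\ttt)}(x)[w]\cup\psi'^{(\ttt)}(y)[w]$, for then \ref{enm:amp:pow:C3} fails at that step, so the verifier rejects. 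Because a single step of an after-stopping random walk on a $d$-regular graph is a uniformly random directed edge, the walk makes $\rep\cdot\frac{|F|}{|E|}$ steps on $F$ in expectation; conditioning on any fixed such step, the reversed prefix behaves like a short walk from $v$ and the suffix like an independent short walk from $w$, so by the very definition of the vote each of $x,y$ contains the popular value $\psi^{(\ttt)}(v)$ resp.\ $\psi^{(\ttt)}(w)$ with probability $\ge \frac1W$, up to a $\Rep$-truncation error that is exponentially small in $\Rep/\rep = 100$ and is quantified by \cref{lem:amp:pow:N1,lem:amp:pow:N2}. Thus the expected number $N$ of faulty steps is $\Omega(\rep\epsilon)$, and a second-moment estimate (\cref{clm:amp:FE}), in which the hypothesis $\lambda\le\frac d2$ keeps the short walk from revisiting $F$ so that $\E[N^2] = \bigO(\E[N])$, gives $\Pr[N\ge1]\ge\E[N]^2/\E[N^2] = \Omega(\rep\epsilon)$. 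The one genuinely new difficulty is that consistency over the squared alphabet $\Sigma'$ is far weaker than equality: converting "the opinions on $v$ at the step are consistent" into a statement about the representative value costs a factor $2$, since any $\beta'\in\Sigma'$ is consistent with $\{\gamma\}$ for at most two $\gamma\in\Sigma$ (this is \cref{clm:amp:pow:consistent}), which only shifts the absolute constant, not the $\Omega(\rep\epsilon)$ order.

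It remains to make $(\psi^{(1)},\dots,\psi^{(\TTT)})$ into a \emph{valid} reconfiguration sequence, which is \cref{clm:amp:interpolate}. Two consecutive assignments may differ on a set $D_\ttt$ of more than one vertex, because one reconfiguration step in $G'$ overwrites the whole vector $\psi'^{(\ttt)}(x_\ttt)\in\Sigma'^{d^{\Rep+1}}$ and can flip many votes at once; but since $\psi'^{(\ttt)}$ and $\psi'^{(\ttt+1)}$ agree outside $x_\ttt$, only vertices within distance $\Rep$ of $x_\ttt$ can change their vote, whence $|D_\ttt|\le d^{\Rep+1}$ and $e_G(D_\ttt,D_\ttt)\le d^{\Rep+2}$ — an absolute constant, hence at most $\frac{\epsilon}{3}|E|$ for all but finitely many instances, which we may dispose of by an initial padding. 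Now re-assign the vertices of $D_\ttt$ one at a time in any order; for an edge $(v,w)$ with $v\in D_\ttt$ and $w\notin D_\ttt$ we have $\psi^{(\ttt)}(w) = \psi^{(\ttt+1)}(w)$, so at every intermediate assignment this edge looks exactly as in $\psi^{(\ttt)}$ or exactly as in $\psi^{(\ttt+1)}$, and the only edges that can be newly unsatisfied lie inside $G[D_\ttt]$. Consequently each intermediate assignment violates at most (edges violated by $\psi^{(\ttt)}$) $+$ (edges violated by $\psi^{(\ttt+1)}$) $+$ $e_G(D_\ttt,D_\ttt) \le \frac{\epsilon}{3}|E| + \frac{\epsilon}{3}|E| + \frac{\epsilon}{3}|E| = \epsilon|E|$, so it has value $\ge 1-\epsilon$. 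Splicing these interpolations between the $\psi^{(\ttt)}$'s produces $\sqpsi$ with $\val_G(\sqpsi)\ge 1-\epsilon$, the desired contradiction.

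I expect the main obstacle to be the core estimate of the second paragraph: re-running the after-stopping-random-walk first/second-moment analysis against the \emph{weakened} modified verifier, which accepts as soon as opinions are merely consistent and \emph{some} possible world satisfies the edge. All the consistency inequalities over $\Sigma'$ — the factor-$2$ phenomenon and its propagation through the union-of-opinions sets in \ref{enm:amp:pow:C1}--\ref{enm:amp:pow:C3} — have to be redone from scratch rather than cited, and it is precisely there that the numerical value of the soundness constant is determined. By contrast, the popularity-vote recovery, the truncation bounds, and the interpolation are comparatively routine and do not affect the order of magnitude.
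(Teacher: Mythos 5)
Your overall architecture is the paper's own: popularity vote, faulty steps supported on the violated edges, a first/second-moment (Paley--Zygmund) estimate based on the verifier's-walk and expander-random-walk lemmas, and an interpolation step to repair the invalid sequence of voted assignments (your last paragraph is essentially \cref{clm:amp:interpolate}); the contrapositive packaging and the $\epsilon/3$ budget split are cosmetic. The genuine gap is in your core first-moment estimate. You certify a faulty step only through condition \ref{enm:amp:pow:C3}, lower-bounding the probability that each terminal's opinion contains the voted value by the pigeonhole bound $\frac{1}{W}$; this gives only $\E[N] \geq \approxone^2\cdot\frac{1}{W^2}\cdot\rep\frac{|F|}{|E|}$ and, after Paley--Zygmund, a rejection probability of order $\rep\epsilon/W^{4}$, not $c\cdot\rep\epsilon$ for an absolute constant $c$. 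Since $W$ is the unknown alphabet size inherited from RIH, and since \cref{lem:amp:pow} must output the $W$-free soundness $\epsilon'=\frac{0.0294}{3+2\frac{d}{d-\lambda}}$ that feeds the explicit $0.9942$ factor of \cref{thm:amp}, a $W$-dependent constant defeats the purpose of the lemma and contradicts your own claim that your constants track down to $\approx 0.0058$. The missing idea is precisely the one you dismiss as ``only shifting the absolute constant'': the paper does not rely on the voted value being frequent, but plays the conflict events \ref{lem:amp:pow:N1:E1} and \ref{lem:amp:pow:N1:E2} against \ref{lem:amp:pow:N1:E3}, using \cref{clm:amp:pow:consistent} (the factor-$2$ consistency bound over $\Sigma'$, which requires a real argument since consistency is not equality) to get the per-occurrence failure probability $\max\bigl\{\approxone(\approxone-2p_v),\,\approxone(\approxone-2p_w),\,p_vp_w\bigr\}\geq\approxone^2(\sqrt{2}-1)^2$, an absolute constant. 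Without that balancing step your sketch does not establish the lemma in the sense in which it is used.

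Two smaller points. You attribute control of revisits to $F$ to \cref{clm:amp:FE}; that is the job of the expander random walk lemma (\cref{lem:amp:pow:expander-walk}) together with $\lambda\leq\frac{d}{2}$, whereas \cref{clm:amp:FE} truncates $F$ to density at most $\frac{1}{\rep}$ --- a step your sketch omits. With $|F|$ possibly much denser than $\frac{|E|}{\rep}$, the bound $\E[N^2]=\bigO\bigl(\rep\frac{|F|}{|E|}\bigr)$ fails as stated, so you must either truncate $F$ as in \cref{clm:amp:FE} or observe that the Paley--Zygmund ratio is monotone in $\frac{|F|}{|E|}$. Also, disposing of small instances ``by an initial padding'' should be replaced by the paper's convention of solving constant-size instances outright, since padding alters the instance.
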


\paragraph{Popularity Vote, Faulty Steps, and Soundness Proof.}
Hereafter, we assume that $\val_G(\psi^\ini \reco \psi^\tar) < 1-\epsilon$.
Given a proof $\psi' \colon V \to \Sigma'^{d^{\Rep + 1}}$,
we explain how to extract from it a new assignment $\PLR(\psi') \colon V \to \Sigma$ for $G$.
Our ``popularity vote'' is obtained by incorporating the consistency over $\Sigma'$ into that of \citet{dinur2007pcp}.
Consider a BSRW $\vec{W}$ starting from vertex $v \in V$ conditioned on its length being at most $\Rep-1$.
Denoting the ending terminal of $\vec{W}$ by $y$, we find $\psi'(y)[v]$ a distribution over $\Sigma'$.
Then, $\PLR(\psi')(v)$ is defined as
the \emph{most popular value} of $\Sigma$ over this distribution;
namely,\footnote{
Ties are broken according to any prefixed order of $\Sigma$.
}
\begin{align}
    \PLR(\psi')(v) \coloneq \argmax_{\alpha \in \Sigma}
        \Pr_{\text{BSRW } \vec{W} \text{ from } v}
        \Bigl[ \alpha \in \psi'(y)[v] \Bigm| \text{length of } \vec{W} \leq \Rep-1 \Bigr].
\end{align}
Suppose we are given a reconfiguration sequence from $\psi'^\ini$ to $\psi'^\tar$,
denoted $\sqpsi' = ( \psi'^{(1)}, \ldots, \psi'^{(\TTT)} )$,
such that the minimum acceptance probability is maximized.
Then, we construct a sequence
$\sqpsi = ( \psi^{(1)}, \ldots, \psi^{(\TTT)} )$
of assignments for $G$
such that
$\psi^{(\ttt)} \coloneq \PLR(\psi'^{(\ttt)})$ for all $\ttt \in [\TTT]$.
In particular,
$\psi^{(1)} = \PLR(\psi'^\ini) = \psi^\ini$
and 
$\psi^{(\TTT)} = \PLR(\psi'^\tar) = \psi^\tar$.
Unfortunately, $\sqpsi$ is \emph{not} a valid reconfiguration sequence 
because two neighboring assignments $\psi^{(\ttt)}$ and $\psi^{(\ttt+1)}$
may differ in two or more vertices.
Using the fact that they differ in a small number of vertices,
we find that some $\psi^{(\ttt)}$ must violate an $(\epsilon - o(1))$-fraction of the edges.

\begin{claim}
\label{clm:amp:interpolate}
    There exists $\psi^{(\ttt)}$ in $\sqpsi$ such that
    $\val_G(\psi^{(\ttt)}) < 1-\delta$, where
    $\delta \coloneq \epsilon - \frac{d^{\Rep + 2}}{|V|}$.
\end{claim}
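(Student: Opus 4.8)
The plan is to convert the (generally invalid) sequence $\sqpsi=(\psi^{(1)},\dots,\psi^{(\TTT)})$, obtained by applying $\PLR$ to each proof in the given reconfiguration sequence $\sqpsi'=(\psi'^{(1)},\dots,\psi'^{(\TTT)})$ (its optimality plays no role here), into a genuine reconfiguration sequence for $G$ by \emph{interpolating} between consecutive assignments, and then to apply the standing hypothesis $\val_G(\psi^\ini\reco\psi^\tar)<1-\epsilon$ to it. The crux is the \emph{locality} of the popularity vote: for any proof $\psi'$ and vertex $v$, the value $\PLR(\psi')(v)$ is determined by the opinions $\psi'(y)[v]$ over vertices $y$ at distance at most $\Rep-1$ from $v$, since a before-stopping random walk from $v$ conditioned to have length $\le\Rep-1$ can only reach such $y$ (and the conditional law of its endpoint depends on $G$, not on $\psi'$). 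Hence, if two proofs agree everywhere except at a single vertex $x$, the two assignments they induce under $\PLR$ can disagree only on the ball $B_x\coloneq\{u\in V:\operatorname{dist}_G(u,x)\le\Rep\}$, which has $|B_x|\le 1+d+\cdots+d^{\Rep}\le d^{\Rep+1}$ vertices.

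Concretely, I would fix, for each $\ttt\in[\TTT-1]$, the (at most one) vertex $x_\ttt$ at which $\psi'^{(\ttt)}$ and $\psi'^{(\ttt+1)}$ differ, so that $\psi^{(\ttt)}=\PLR(\psi'^{(\ttt)})$ and $\psi^{(\ttt+1)}=\PLR(\psi'^{(\ttt+1)})$ agree outside $B_{x_\ttt}$, and then build a sequence $\sqpsi^\star$ by inserting, between $\psi^{(\ttt)}$ and $\psi^{(\ttt+1)}$, the at most $|B_{x_\ttt}|\le d^{\Rep+1}$ assignments obtained by updating the vertices on which they differ one at a time in an arbitrary fixed order. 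Since $\psi^{(1)}=\PLR(\psi'^\ini)=\psi^\ini$ and $\psi^{(\TTT)}=\PLR(\psi'^\tar)=\psi^\tar$ (the popularity vote on the constant-opinion proofs $\psi'^\ini(x)[v]=\{\psi^\ini(v)\}$ and $\psi'^\tar(x)[v]=\{\psi^\tar(v)\}$ returns $\psi^\ini$ and $\psi^\tar$, respectively), the sequence $\sqpsi^\star$ is a bona fide reconfiguration sequence from $\psi^\ini$ to $\psi^\tar$.

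Now, by the hypothesis, $\val_G(\sqpsi^\star)\le\val_G(\psi^\ini\reco\psi^\tar)<1-\epsilon$, so some assignment $\xi$ appearing in $\sqpsi^\star$ has $\val_G(\xi)<1-\epsilon$. If $\xi$ is one of the $\psi^{(\ttt)}$'s we are done, since $\delta\le\epsilon$. Otherwise $\xi$ lies in the interpolation block between some $\psi^{(\ttt)}$ and $\psi^{(\ttt+1)}$, and so differs from $\psi^{(\ttt)}$ only on a subset of $B_{x_\ttt}$, i.e.\ on at most $d^{\Rep+1}$ vertices. As $G$ is $d$-regular, changing one vertex flips the satisfaction status of at most $d$ edges, so at most $d\cdot d^{\Rep+1}=d^{\Rep+2}$ edges change status between $\psi^{(\ttt)}$ and $\xi$; using $|E|=\tfrac{d|V|}{2}\ge|V|$ (recall $d\ge 2$),
\[
\val_G\bigl(\psi^{(\ttt)}\bigr)\le\val_G(\xi)+\frac{d^{\Rep+2}}{|E|}\le\val_G(\xi)+\frac{d^{\Rep+2}}{|V|}<1-\epsilon+\frac{d^{\Rep+2}}{|V|}=1-\delta,
\]
so this $\psi^{(\ttt)}\in\sqpsi$ has the required value.

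The argument is essentially bookkeeping; the one place that deserves genuine care is the locality of $\PLR$, because the bound $|B_x|\le d^{\Rep+1}$ is exactly what quantifies how far $\sqpsi$ is from being a reconfiguration sequence, and therefore dictates the admissible additive slack $\delta=\epsilon-d^{\Rep+2}/|V|$. The other thing to keep in mind is that the interpolated sequence must start and end at $\psi^\ini$ and $\psi^\tar$ for the soundness hypothesis to bite, which is why it matters that $\PLR$ recovers $\psi^\ini$ and $\psi^\tar$ from the two constant-opinion proofs $\psi'^\ini$ and $\psi'^\tar$.
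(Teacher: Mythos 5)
Your proposal is correct and follows essentially the same route as the paper's proof: interpolate between consecutive popularity-vote assignments (which differ on at most $d^{\Rep+1}$ vertices by locality of $\PLR$), apply the soundness hypothesis $\val_G(\psi^\ini \reco \psi^\tar) < 1-\epsilon$ to the concatenated valid sequence, and transfer the $< 1-\epsilon$ bound back to some $\psi^{(\ttt)}$ via the at-most-$d$-edges-per-vertex estimate and $|E| \geq |V|$. The only difference is cosmetic: you spell out the locality argument for the popularity vote, which the paper uses implicitly.
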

\begin{proof}
Since $\psi'^{(\ttt)}$ and $\psi'^{(\ttt+1)}$ differ in (at most) one vertex, 
$\psi^{(\ttt)}$ and $\psi^{(\ttt+1)}$ differ in at most $d^{\Rep + 1}$ vertices.
Thus, a valid reconfiguration sequence $\sqpsi^{(\ttt)}$ from $\psi^{(\ttt)}$ to $\psi^{(\ttt+1)}$
can be obtained by merely changing the value of at most $d^{\Rep + 1}$ vertices
(in which $\psi^{(\ttt)}$ and $\psi^{(\ttt+1)}$ differ).
Let $\sqpsi$ be a sequence obtained by concatenating such $\sqpsi^{(\ttt)}$ for all $\ttt \in [\TTT-1]$.
Since $\sqpsi$ is a valid reconfiguration sequence from $\psi^\ini$ to $\psi^\tar$,
we have $\val_G(\sqpsi) < 1-\epsilon$;
in particular, there exists an assignment $\psi^\circ$ in $\sqpsi$
such that $\val_G(\psi^\circ) < 1-\epsilon$.

Suppose $\psi^\circ$ appears in $\sqpsi^{(\ttt)}$.
Note that if two assignments for $G$ differ in at most one vertex,
their values differ by at most $\frac{d}{|E|}$.
Since $\psi^\circ$ and $\psi^{(\ttt)}$ differ in at most $d^{\Rep + 1}$ vertices,
we derive
\begin{align}
\begin{aligned}
    & \val_G(\psi^\circ)
    \geq \val_G(\psi^{(\ttt)}) - \frac{d}{|E|} \cdot d^{\Rep + 1} \\
    \therefore\;\; & \val_G(\psi^{(\ttt)})
    \leq \val_G(\psi^\circ) + \frac{d^{\Rep + 2}}{|E|} 
    < 1-\epsilon + \frac{d^{\Rep + 2}}{|V|},
\end{aligned}
\end{align}
which completes the proof.
\end{proof}
Let $\psi^{(\ttt)}$ be any ``bad'' assignment in a sense that
$\val_G(\psi^{(\ttt)}) < 1-\delta$,
as promised by \cref{clm:amp:interpolate}.
We would like to show that the verifier rejects $\psi'^{(\ttt)}$ with a sufficiently large probability.
Hereafter, we denote
$\psi \coloneq \psi^{(\ttt)}$ and $\psi' \coloneq \psi'^{(\ttt)}$
for notational convenience,
and fix a set $F \subset E$ of edges violated by $\psi$ by the following claim:\footnote{
    This is a slight improvement over the argument of \cite{radhakrishnan2006gap,radhakrishnan2007dinurs}.
}

\begin{claim}
\label{clm:amp:FE}
    There exists a set $F \subset E$ of edges violated by $\psi$ such that
    \begin{align}
    \label{eq:amp:FE}
        \min\left\{ \delta, \frac{1}{\rep}-\frac{1}{|E|} \right\} \leq \frac{|F|}{|E|} \leq \frac{1}{\rep}.
    \end{align}
\end{claim}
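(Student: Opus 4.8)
The plan is a short case analysis on the set $E_{\mathrm{bad}} \coloneq \{ e \in E \mid \psi \text{ does not satisfy } e \}$ of all edges violated by $\psi$. By \cref{clm:amp:interpolate} we have $\val_G(\psi) < 1-\delta$, and unwinding the definition of $\val_G$ this says exactly $|E_{\mathrm{bad}}| > \delta|E|$; in particular $E_{\mathrm{bad}} \neq \emptyset$. The only subtlety the claim is really guarding against is that $E_{\mathrm{bad}}$ might be \emph{too} large (exceeding a $\tfrac{1}{\rep}$-fraction of $E$), so the idea is simply to cut it down to size in that case and keep all of it otherwise.

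First I would treat the case $|E_{\mathrm{bad}}| \leq \tfrac{|E|}{\rep}$. Here take $F \coloneq E_{\mathrm{bad}}$: the upper bound of \eqref{eq:amp:FE} is immediate, and the lower bound holds because $\tfrac{|F|}{|E|} = \tfrac{|E_{\mathrm{bad}}|}{|E|} > \delta \geq \min\bigl\{\delta, \tfrac{1}{\rep}-\tfrac{1}{|E|}\bigr\}$.

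In the complementary case $|E_{\mathrm{bad}}| > \tfrac{|E|}{\rep}$, I would let $F$ be an arbitrary sub-multiset of $E_{\mathrm{bad}}$ of cardinality exactly $\bigl\lfloor \tfrac{|E|}{\rep} \bigr\rfloor$. Such an $F$ exists because $|E_{\mathrm{bad}}|$ is an integer strictly larger than $\tfrac{|E|}{\rep}$, hence at least $\bigl\lfloor \tfrac{|E|}{\rep} \bigr\rfloor$. Then $\tfrac{|F|}{|E|} \leq \tfrac{1}{\rep}$ is trivial, while $|F| = \bigl\lfloor \tfrac{|E|}{\rep} \bigr\rfloor > \tfrac{|E|}{\rep} - 1$ gives $\tfrac{|F|}{|E|} > \tfrac{1}{\rep} - \tfrac{1}{|E|} \geq \min\bigl\{\delta, \tfrac{1}{\rep}-\tfrac{1}{|E|}\bigr\}$, which is the lower bound of \eqref{eq:amp:FE}.

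I do not expect any genuine obstacle here; the proof is just bookkeeping. The one point to keep an eye on is the integer arithmetic in the second case — that the target cardinality $\bigl\lfloor \tfrac{|E|}{\rep} \bigr\rfloor$ is simultaneously at most $|E_{\mathrm{bad}}|$ (so that $F$ can be chosen) and, where one later wants $F$ nonempty, at least $1$, which is automatic since $|E| \geq \rep$ for the (arbitrarily large) graphs produced by the powering construction.
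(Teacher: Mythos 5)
Your proof is correct and follows essentially the same route as the paper: start from the full set of edges violated by $\psi$ (which has size greater than $\delta|E|$) and, if it exceeds a $\tfrac{1}{\rep}$-fraction of $E$, discard edges down to roughly $\tfrac{|E|}{\rep}$, checking the two bounds of \eqref{eq:amp:FE} by elementary integer arithmetic. The only difference is cosmetic — you split on the size of the violated set and trim to $\bigl\lfloor\tfrac{|E|}{\rep}\bigr\rfloor$, while the paper splits on whether $\delta \leq \tfrac{1}{\rep}-\tfrac{1}{|E|}$ and trims to the corresponding ceiling — so the arguments are interchangeable.
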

\begin{proof}
Let $F$ be the set of all edges of $G$ violated by $\psi$;
it holds that $|F| \geq \lceil \delta |E| \rceil$.
We slightly modify $F$ as follows:
\begin{itemize}
\item
    If $\delta \leq \frac{1}{\rep} - \frac{1}{|E|}$,
    we throw away some edges of $F$ so that
    $|F| = \lceil \delta |E| \rceil$,
    implying that
    \begin{align}
        \delta |E|
        \leq |F|
        = \lceil \delta |E| \rceil
        \leq \left\lceil \frac{|E|}{\rep}-1 \right\rceil
        \leq \frac{|E|}{\rep}.
    \end{align}
\item
    Otherwise,
    we throw away some edges of $F$ so that
    $|F| = \left\lceil \left(\frac{1}{\rep}-\frac{1}{|E|} \right)|E| \right\rceil$,
    implying that
    \begin{align}
        \left(\frac{1}{\rep}-\frac{1}{|E|}\right)|E|
        \leq |F|
        = \left\lceil \left(\frac{1}{\rep}-\frac{1}{|E|}\right)|E| \right\rceil
        = \left\lceil \frac{|E|}{\rep} - 1 \right\rceil
        \leq \frac{|E|}{\rep}.
    \end{align}
\end{itemize}
Consequently, we must have
\begin{align}
    \min\left\{ \delta, \frac{1}{\rep}-\frac{1}{|E|} \right\} \leq \frac{|F|}{|E|} \leq \frac{1}{\rep},
\end{align}
as desired.
\end{proof}

We then introduce the notion of \emph{faulty steps} \cite{dinur2007pcp,radhakrishnan2006gap,radhakrishnan2007dinurs}.
\begin{definition}[\cite{dinur2007pcp,radhakrishnan2006gap,radhakrishnan2007dinurs}]
    A step $\vv{e}$ of $G$ appearing in the verifier's walk is said to be \emph{faulty} if
    $e \in F$ and $\psi'(x)$ and $\psi'(y)$ fail the test at $\vv{e}$, where
    $x$ and $y$ are the terminals of the verifier's walk.
    Let $N$ denote a random variable for the number of faulty steps in the verifier's walk given a proof $\psi'$.
\end{definition}
Since the verifier rejects $\psi'$ whenever it encounters a faulty step,
it suffices to show that event ``$N > 0$'' is likely to occur.
Thus, we shall use the second moment method.
We first estimate the first moment of $N$,
whose proof requires a refined analysis of
\cite{radhakrishnan2006gap} and \cite[Claim 5.10]{radhakrishnan2007dinurs}.

\begin{lemma}
\label{lem:amp:pow:N1}
    It holds that
    \begin{align}
        \E[N] \geq \rep \frac{|F|}{|E|} \cdot \approxone^2 \cdot (\sqrt{2}-1)^2,
    \end{align}
    where $\approxone \coloneq 1-\rme^{-100}$.
\end{lemma}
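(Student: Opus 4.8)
The plan is to lower-bound $\E[N]$ by linearity of expectation over the individual steps of the verifier's walk, writing $\E[N]=\sum_{i\ge 1}\Pr[\text{the }i\text{-th step occurs and is faulty}]$. Since an ASRW reaches its $i$-th step with probability exactly $(1-\tfrac1{\rep})^{i-1}$, the weights appearing here recombine — after pulling out a factor $\rep$ — into the length distribution of a before-stopping random walk, because $\tfrac1{\rep}(1-\tfrac1{\rep})^{a}$ is precisely the probability that a BSRW has length $a$. I would then fix a violated edge $e=(v,w)\in F$ and condition on a step at some position $i$ that traverses $e$ from $v$ to $w$; by reversibility of the uniform random walk on the $d$-regular underlying graph, the reversed prefix from $v$ to the starting vertex $x$ has, once averaged against the weights $(1-\tfrac1{\rep})^{i-1}$, the law of a BSRW started at $v$, while the suffix from $w$ to the ending vertex $y$ is, by memorylessness of the stopping rule, exactly a BSRW started at $w$ and independent of the prefix. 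Summing over the two orientations of $e$ and over all $e\in F$, and using $d$-regularity to rewrite $\tfrac1{|V|d}$ as $\tfrac1{2|E|}$ (the factor $2$ cancelling the two orientations), this reduces the lemma to a per-edge, per-orientation bound of the form $\Omega\bigl(\tfrac{\rep}{|V|d}\bigr)$ carrying the right constant.

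Next I would isolate a clean sufficient condition for a step $\vv{vw}$ to be faulty that exposes the roles of $F$ and of the squared alphabet $\Sigma'$. Since $e\in F$ means $(\psi(v),\psi(w))\notin\pi_e$ for the popular assignment $\psi=\PLR(\psi')$, it suffices that all four opinions $\psi'(x)[v],\psi'(x)[w],\psi'(y)[v],\psi'(y)[w]$ be defined and that $\psi(v)\in\psi'(x)[v]$ and $\psi(w)\in\psi'(y)[w]$: then the pair $(\psi(v),\psi(w))$ lies in the product $(\psi'(x)[v]\cup\psi'(y)[v])\times(\psi'(x)[w]\cup\psi'(y)[w])$ of \ref{enm:amp:pow:C3}, so that condition is violated and the verifier rejects. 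All four opinions are defined whenever the verifier's walk has length at most $\Rep$; as $\Rep=100\rep$ and both the reversed prefix and the suffix are (truncated) BSRWs of expected length about $\rep$, each such truncation fails with probability at most $(1-\tfrac1{\rep})^{\Rep}\le\rme^{-100}=1-\approxone$, which is the origin of the factor $\approxone^2$.

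It then remains to lower-bound, on each side separately (they are independent after the decomposition), the probability that a BSRW from $v$ conditioned to have length at most $\Rep-1$ ends at a vertex $x$ with $\psi(v)\in\psi'(x)[v]$, and likewise for $w$. Let $p$ denote the smaller of these two probabilities; the popular vote is defined precisely to attain the corresponding maxima over $\Sigma$, so the event holds with probability at least $p$ on each side and hence with probability at least $p^{2}$ jointly. When $p$ is small this is weak, and one argues instead that the verifier rejects through a \emph{consistency} violation: by \cref{clm:amp:pow:consistent} — the squared-alphabet fact that being consistent with a random opinion is at most twice as likely as being consistent with the single most popular value — the opinions $\psi'(x)[v]$ and $\psi'(y)[v]$ conflict, violating \ref{enm:amp:pow:C1}, with probability at least $1-2p$. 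Taking the better of the two bounds $p^{2}$ and $1-2p$ and optimizing over $p$ — the crossover being the positive root of $p^{2}+2p-1=0$, namely $p=\sqrt2-1$ — yields a factor $(\sqrt2-1)^{2}$ in either case. Assembling the per-edge bound over all $|F|$ edges of $F$ and simplifying with $d|V|=2|E|$ then gives $\E[N]\ge\rep\,\tfrac{|F|}{|E|}\,\approxone^{2}\,(\sqrt2-1)^{2}$.

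The step I expect to be the main obstacle is the last one — keeping the constant free of the alphabet size $W$. The analyses of \cite{dinur2007pcp,radhakrishnan2006gap,radhakrishnan2007dinurs} can afford a $\tfrac1{W^{2}}$-type loss here because they only need \emph{some} amplification, whereas we must land exactly on $(\sqrt2-1)^{2}$. The squared alphabet is simultaneously an asset — a pair value $\{\alpha,\beta\}\in\Sigma'$ is consistent with more elements of $\Sigma'$, so the single most popular value is carried by a random opinion somewhat more often — and a liability, through the factor $2$ in \cref{clm:amp:pow:consistent} (the phenomenon, recalled in the introduction, that the probability of two opinions being consistent is at most \emph{twice}, rather than equal to, the probability that one opinion is consistent with the most popular value). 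Reconciling these two effects, while also tracking how the truncation of the before-stopping walks at length $\Rep-1$ interacts with both routes, is exactly the refined analysis of \cite{radhakrishnan2006gap,radhakrishnan2007dinurs} that the proof has to carry out.
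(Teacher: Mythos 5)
Your proposal is correct and follows essentially the same route as the paper: it reduces to a per-oriented-edge bound over $F$ via the verifier's-walk/BSRW decomposition (which you re-derive by reversibility rather than citing \cref{lem:amp:pow:verifier-walk}), then plays the \ref{enm:amp:pow:C3}-violation event ``$\psi(v)\in\psi'(x)[v]$ and $\psi(w)\in\psi'(y)[w]$'' (probability about $p_vp_w$) against the consistency-conflict route from \cref{clm:amp:pow:consistent} (probability about $1-2\min\{p_v,p_w\}$), with the crossover $p=\sqrt{2}-1$ giving $(\sqrt{2}-1)^2$, the truncation of the BSRWs giving $\approxone^2$, and the expected number of traversals $\rep/(2|E|)$ per oriented step giving the final summation. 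This matches the paper's proof of \cref{lem:amp:pow:N1} in all essentials.
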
\noindent
The proof of \cref{lem:amp:pow:N1} resorts to the following property of
the verifier's walk \cite{radhakrishnan2007dinurs,guruswami2005course}:

\begin{lemma}[Verifier's walk lemma \cite{guruswami2005course,radhakrishnan2007dinurs}]
\label{lem:amp:pow:verifier-walk}
    Let $G$ be a regular graph and $\vv{e}=\vv{vw}$ be a step of $G$.
    Consider an ASRW starting from a vertex chosen from $V(G)$ uniformly at random
    conditioned on the event that step $\vv{e}$ appears exactly $k$ times for some integer $k \in \bbN$.
    Then, the terminals of the walk, denoted $x$ and $y$, are pairwise independent.
    Moreover, the distribution of $x$ and $y$ is the same as that of
    the ending terminal of a BSRW starting respectively from $v$ and $w$.
\end{lemma}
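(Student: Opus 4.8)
The plan is to condition the after-stopping random walk (ASRW) from a uniformly random start on the event $A_k$ that the directed step $\vv{e}=\vv{vw}$ is traversed exactly $k$ times, and to analyze it by cutting the walk open at the \emph{first} and the \emph{last} occurrence of $\vv{vw}$. Write $p\coloneq\frac{1}{\rep}$ and $q\coloneq 1-\frac{1}{\rep}$, and for vertices $s,t$ let $f_{s\to t}(z)\coloneq\sum_{\ell\geq 0}z^\ell N_\ell(s,t)$, where $N_\ell(s,t)$ counts the length-$\ell$ walks from $s$ to $t$ that never use the step $\vv{vw}$. Two ingredients are needed. First, \emph{reversibility}: since $G$ is $d$-regular, reversing a length-$\ell$ walk is a bijection between walks $s\to t$ and walks $t\to s$, and it turns a walk avoiding $\vv{vw}$ into one avoiding $\vv{wv}$; hence $N_\ell(s,t)$ also equals the number of length-$\ell$ walks $t\to s$ avoiding $\vv{wv}$, and a length-$\ell$ walk from a uniform start has probability $\tfrac{1}{|V|}d^{-\ell}$. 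Second, a \emph{renewal property} of before-stopping random walks (BSRWs): if a BSRW from $w$ ever uses $\vv{vw}$, then at the first such time it sits at $v$, takes the step $\vv{vw}$, and, by memorylessness of the geometric stopping rule, continues as a fresh independent BSRW from $w$; therefore, conditioned on ``$\vv{vw}$ is used'', the endpoint has the unconditioned BSRW-from-$w$ law, and consequently so does the endpoint conditioned on ``$\vv{vw}$ is avoided''. A one-line computation then shows that this last conditional endpoint law is $b\mapsto f_{w\to b}(q/d)\big/\sum_{b'}f_{w\to b'}(q/d)$, and symmetrically for $v$ with $\vv{wv}$ in place of $\vv{vw}$.

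Next I would perform the decomposition. Conditioned on $A_k$, the verifier's walk splits canonically into a \emph{prefix} $x\rightsquigarrow v$ avoiding $\vv{vw}$, the step $\vv{vw}$, a \emph{middle} walk $w\rightsquigarrow v$ using $\vv{vw}$ exactly $k-2$ further times (absent when $k=1$), the step $\vv{vw}$, and a \emph{suffix} $w\rightsquigarrow y$ avoiding $\vv{vw}$. Under the uniform-start ASRW law the probability of one concrete such walk of total length $L$ is $\tfrac{p}{|V|q}(q/d)^{L}$, which splits multiplicatively as $(q/d)^{\ell_0}$ for the prefix of length $\ell_0$, a factor depending only on the middle and on $k$, and $(q/d)^{\ell_1}$ times a constant for the suffix of length $\ell_1$. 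Summing over all admissible middles and prefixes/suffixes therefore gives, for every $a,b$,
\begin{align*}
\Pr\bigl[x=a,\ y=b,\ A_k\bigr]=C_k\cdot f_{a\to v}(q/d)\cdot f_{w\to b}(q/d)
\end{align*}
for a constant $C_k$ that does not depend on $a$ or $b$. This factorization immediately yields that, given $A_k$, the terminals $x$ and $y$ are independent, with $\Pr[x=a\mid A_k]=f_{a\to v}(q/d)\big/\sum_{a'}f_{a'\to v}(q/d)$ and $\Pr[y=b\mid A_k]=f_{w\to b}(q/d)\big/\sum_{b'}f_{w\to b'}(q/d)$, neither marginal depending on $k$. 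Finally, reversibility rewrites the $x$-marginal in terms of walks \emph{from} $v$ avoiding $\vv{wv}$, and the renewal property identifies these conditional laws with the unconditioned BSRW endpoints; thus $x$ is distributed as the endpoint of a BSRW from $v$ and $y$ as that of a BSRW from $w$, as claimed.

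I expect the main obstacle to be the bookkeeping in the decomposition step: one must check that the per-walk ASRW weight genuinely factors over prefix/middle/suffix, so that the middle contribution and all combinatorial constants cancel upon dividing by $\Pr[A_k]$, and one must dispose of the degenerate cases --- an empty prefix ($x=v$) or empty suffix ($y=w$), the $k=1$ case with no middle, and, if $G$ carries self-loops, the case $v=w$ in which $\vv{vw}$ and $\vv{wv}$ coincide. A secondary subtlety is the final identification: it necessarily passes through ``BSRW conditioned on avoiding $\vv{vw}$'' rather than matching the two distributions directly, and it is precisely the renewal argument --- hence the regularity of $G$ --- that guarantees the conditioning does not change the endpoint law.
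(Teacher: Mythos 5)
Your argument is correct, and it is essentially the standard proof of this lemma, which the paper itself does not reprove but imports from the cited sources \cite{guruswami2005course,radhakrishnan2007dinurs}: there, too, one conditions on the number of occurrences of $\vv{vw}$, cuts the walk at the first and last occurrence so that the geometric (memoryless) stopping weight factorizes over prefix, middle, and suffix, and identifies the two marginals with BSRW endpoint laws via reversibility of walks on a regular graph and the renewal/restart property at the first traversal of the step. Your handling of the subtleties (the conditioning on avoidance not changing the endpoint law because the avoid-event has positive probability, the $k=1$ and empty prefix/suffix cases) is sound, so no gap remains.
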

At the heart of the proof of \cref{lem:amp:pow:N1} is \cref{clm:amp:pow:consistent},
which (for fixed $v$) gives a bound of
$\Pr_{x,y}[\psi'(x)[v] \allowbreak \text{ and } \allowbreak \psi'(y)[v] \allowbreak \text{ are consistent}]$
using
$\Pr_{x}[\psi'(x)[v] \allowbreak \text{ and } \allowbreak \PLR(\psi')(v) \allowbreak \text{ are consistent}]$.

\begin{proof}[Proof of \cref{lem:amp:pow:N1}]
Fix an edge $e = (v,w)$ of $F$ and a step $\vv{e} = \vv{vw}$.
Let $N_{\vv{e}}$ denote a random variable for
the number of faulty occurrences of $\vv{e}$ in the verifier's walk.
Here, we condition on the event that $\vv{e}$ appears in the verifier's walk exactly $k$ times for some $k \in \bbN$.
By \cref{lem:amp:pow:verifier-walk},
the terminals $x$ and $y$ of the verifier's walk can be considered
generated by a pair of independent BSRWs starting from $v$ and $w$, respectively.
Thus, $\psi'(x)$ and $\psi'(y)$ fail the test at $\vv{e}$
(i.e., $\vv{e}$ is faulty)
if both BSRWs terminate within $\Rep-1$ steps
(so that $\psi'(x)[v]$, $\psi'(x)[w]$, $\psi'(y)[v]$, and $\psi'(y)[w]$ are all defined), and
one of the following events holds:
\begin{enumerate}[label=(E\arabic*)]
    \item
        $\psi'(x)[v]$ and $\psi'(y)[v]$ are conflicting;
        \label{lem:amp:pow:N1:E1}
    \item
        $\psi'(x)[w]$ and $\psi'(y)[w]$ are conflicting;
        \label{lem:amp:pow:N1:E2}
    \item
        $(\psi'(x)[v] \cup \psi'(y)[v]) \times (\psi'(x)[w] \cup \psi'(y)[w]) \not\subseteq \pi_e$.
        \label{lem:amp:pow:N1:E3}
\end{enumerate}
Observe that
each of the BSRWs terminates within $\Rep-1$ steps
with probability at least $1 - \left(1-\frac{1}{\rep}\right)^{100 \rep} \geq 1-\rme^{-100} = \approxone$.

Let $p_v$ (resp.~$p_w$) be the probability that
a BSRW starting from $v$ (resp.~$w$) terminates within $\Rep-1$ steps \emph{and}
$\psi'(x)[v] \ni \psi(v)$ (resp.~$\psi'(y)[w] \ni \psi(w)$);
i.e., $\psi'(x)[v]$ (resp.~$\psi'(y)[w]$) is consistent with the popularity vote.
To bound the probability of \ref{lem:amp:pow:N1:E1} and \ref{lem:amp:pow:N1:E2}, we use \cref{clm:amp:pow:consistent},
which is proven afterwards.

\begin{claim}
\label{clm:amp:pow:consistent}
    Conditioned on the event that ``$\vv{e} \text{ appears } k \text{ times}$,''
    the following hold\textup{:}
    \begin{align}
        \label{eq:amp:consistent:v}
        \Pr_{x,y}\Bigl[\psi'(x)[v] \text{ and } \psi'(y)[v] \text{ are consistent} \Bigr] & \leq 2p_v, \\
        \label{eq:amp:consistent:w}
        \Pr_{x,y}\Bigl[\psi'(x)[w] \text{ and } \psi'(y)[w] \text{ are consistent} \Bigr] & \leq 2p_w.
    \end{align}
\end{claim}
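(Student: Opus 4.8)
The plan is to deduce \cref{clm:amp:pow:consistent} from two ingredients: the pairwise independence of the verifier's walk's endpoints provided by \cref{lem:amp:pow:verifier-walk}, and a one-line combinatorial fact about the squared alphabet $\Sigma'$ that is responsible for the factor~$2$ (this bound is, in turn, exactly what is needed to lower-bound the probabilities of events~\ref{lem:amp:pow:N1:E1} and~\ref{lem:amp:pow:N1:E2} inside \cref{lem:amp:pow:N1}). I would prove only~\eqref{eq:amp:consistent:v}; inequality~\eqref{eq:amp:consistent:w} is obtained by the same argument with the roles of $(v,x)$ and $(w,y)$ exchanged.

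First I would invoke \cref{lem:amp:pow:verifier-walk}: conditioned on ``$\vv{e}$ appears $k$ times,'' the endpoints $x$ and $y$ are independent, with $x$ distributed as the terminal of a BSRW from $v$ and $y$ as the terminal of a BSRW from $w$. I treat the phrase ``$\psi'(x)[v]$ and $\psi'(y)[v]$ are consistent'' as entailing that both opinions are defined, i.e.\ that both BSRWs halt within $\Rep-1$ steps --- harmless, since faulty steps are counted only on that event, and it is exactly the event conditioned on in the definition of the popularity vote. Hence, on that event the law of $\psi'(x)[v]$ is the popularity-vote distribution at $v$, and because $\PLR(\psi')(v)$ is by construction the $\argmax$ over $\Sigma$ of the membership probability, for every fixed $\gamma \in \Sigma$ we get $\Pr_{x}[\gamma \in \psi'(x)[v]] \le \Pr_{x}[\PLR(\psi')(v) \in \psi'(x)[v]] = p_v$ (the events here carrying the implicit condition ``the walk halts within $\Rep-1$ steps'').

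The combinatorial fact is: for every $\beta' \in \Sigma'$ there are $\beta_1,\beta_2 \in \Sigma$ with $\beta_1,\beta_2 \in \beta'$ and $\beta' \subseteq \{\beta_1,\beta_2\}$ (take $\beta_1 = \beta_2 = \beta$ if $\beta' = \{\beta\}$) such that every $\alpha' \in \Sigma'$ consistent with $\beta'$ satisfies $\beta_1 \in \alpha'$ or $\beta_2 \in \alpha'$. Indeed, if $\alpha' \subseteq \beta' \subseteq \{\beta_1,\beta_2\}$ then the nonempty set $\alpha'$ contains $\beta_1$ or $\beta_2$; and if $\alpha' \supseteq \beta'$ then $\beta_1 \in \beta' \subseteq \alpha'$. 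Applying this with $\beta' \coloneq \psi'(y)[v]$ and $\alpha' \coloneq \psi'(x)[v]$, the event that $\psi'(x)[v]$ and $\psi'(y)[v]$ are consistent is contained in the union of ``$\beta_1 \in \psi'(x)[v]$'' and ``$\beta_2 \in \psi'(x)[v]$,'' where $\beta_1,\beta_2 \in \Sigma$ are functions of $y$ only. A union bound, then conditioning on $y$ (independent of $x$) and applying the previous inequality to the fixed value $\gamma = \beta_i$, gives
\[
\Pr_{x,y}\Bigl[\psi'(x)[v] \text{ and } \psi'(y)[v] \text{ are consistent}\Bigr]
\le \Pr_{x,y}\Bigl[\beta_1 \in \psi'(x)[v]\Bigr]
+ \Pr_{x,y}\Bigl[\beta_2 \in \psi'(x)[v]\Bigr]
\le 2p_v,
\]
as desired; \eqref{eq:amp:consistent:w} follows verbatim after swapping $(v,x)$ with $(w,y)$ and conditioning on $x$.

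The combinatorial step is immediate once isolated, and independence is handed to us by \cref{lem:amp:pow:verifier-walk}, so the only delicate point is the bookkeeping of \emph{which} conditioning is in force: one must make sure the right-hand probabilities are taken with respect to the same BSRW law (halting within $\Rep-1$ steps) that defines $\PLR$, rather than the weaker ``$\psi'(x)[v]$ is defined'' event, or the bound would not close at exactly $2p_v$. I expect this alignment of events --- together with the fact that \cref{lem:amp:pow:verifier-walk} yields only \emph{pairwise} independence, so one may condition on a single endpoint but not jointly resample both, which is why the ``union bound, then condition'' order is forced --- to be the main subtlety. The factor $2$ itself is unavoidable; it is the price of the alphabet squaring trick and is precisely what later necessitates the sharper estimates in \cref{clm:amp:FE,lem:amp:pow:N1,lem:amp:pow:N2}.
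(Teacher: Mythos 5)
Your proof is correct and takes essentially the same route as the paper's: both rest on the pairwise independence supplied by \cref{lem:amp:pow:verifier-walk}, the argmax property of the popularity vote, and the fact that any $\beta' \in \Sigma'$ is covered by at most two singletons of $\Sigma$ --- which is exactly the paper's case analysis on $|\beta'| \in \{1,2\}$, there packaged as a reduction to a worst-case one-point distribution over $\beta'$ instead of your ``condition on $y$, then union bound.'' Your explicit restriction to the event that both BSRWs halt within $\Rep-1$ steps merely makes precise the conditioning the paper leaves implicit when it equates $\Pr_{x}\bigl[\psi'(x)[v] \ni \psi(v)\bigr]$ with $p_v$, and it is harmless for the use of the claim in \cref{lem:amp:pow:N1}.
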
\noindent
By \cref{clm:amp:pow:consistent},
\ref{lem:amp:pow:N1:E1} and \ref{lem:amp:pow:N1:E2} occur with probability at least
$\approxone \cdot (\approxone - 2p_v)$ and $\approxone\cdot(\approxone - 2p_w)$, respectively.
Observe further that \ref{lem:amp:pow:N1:E3} occurs
if $\psi'(x)[v] \ni \psi(v)$ and $\psi'(y)[w] \ni \psi(w)$ because
$(\psi(v), \psi(w)) \notin \pi_e$,
which holds with probability $p_v p_w$.
Thus,
the test at $\vv{e}$ fails with probability at least
\begin{align}
\label{eq:amp:consistent:failure}
    \max\Bigl\{\approxone\cdot (\approxone - 2p_v), \approxone\cdot(\approxone - 2p_w), p_v p_w \Bigr\}
    \geq \approxone^2 \cdot (\sqrt{2}-1)^2,
\end{align}
where we have used the fact that the left-hand side achieves the minimum when
$\approxone\cdot(\approxone - 2p_v) = \approxone\cdot(\approxone - 2p_w) = p_v p_w$,
implying that
$p_v = p_w = \approxone\cdot (\sqrt{2}-1)$.
Since the $k$ occurrences of $\vv{e}$ would be all faulty whenever the test at $\vv{e}$ fails,
we obtain
\begin{align}
\begin{aligned}
    \E[N_{\vv{e}}]
    & = \sum_{k \geq 0} k \cdot \Pr\Bigl[N_{\vv{e}} = k\Bigr] \\
    & = \sum_{k \geq 0} k \cdot
        \Pr\Bigl[\vv{e} \text{ appears } k \text{ times}\Bigr] \cdot
        \Pr\Bigl[\text{test at } \vv{e} \text{ fails} \Bigm| \vv{e} \text{ appears } k \text{ times}\Bigr] \\
    & \underbrace{\geq}_{\text{\cref{eq:amp:consistent:failure}}}
        \left(\sum_{k \geq 0} k \cdot \Pr\Bigl[\vv{e} \text{ appears } k \text{ times}\Bigr] \right)
            \cdot \approxone^2 \cdot (\sqrt{2}-1)^2 \\
    & = \E\Bigl[\text{\# appearances of } \vv{e}\Bigr] \cdot \approxone^2 \cdot(\sqrt{2}-1)^2 \\
    & \geq \frac{\rep}{2|E|} \cdot \approxone^2 \cdot (\sqrt{2}-1)^2.
\end{aligned}
\end{align}
Taking the sum over all edges of $F$, we derive
\begin{align}
    \E[N]
    = \sum_{(v,w) \in F} \E[N_{\vv{vw}}] + \E[N_{\vv{wv}}]
    \geq \rep \frac{|F|}{|E|} \cdot \approxone^2 \cdot(\sqrt{2}-1)^2,
\end{align}
completing the proof.
\end{proof}

\begin{proof}[Proof of \cref{clm:amp:pow:consistent}]
We show \cref{eq:amp:consistent:v} first.
Since $\psi'(x)[v]$ and $\psi'(y)[v]$ are pairwise independent
because of the verifier's walk lemma (\cref{lem:amp:pow:verifier-walk}),
we have
\begin{align}
\label{eq:amp:consistent:expand}
\begin{aligned}
    & \Pr_{x,y}\Bigl[ \psi'(x)[v] \text{ and } \psi'(y)[v] \text{ are consistent} \Bigr] \\
    & = \sum_{\alpha', \beta' \in \Sigma'}
    \Pr_{x}\Bigl[\psi'(x)[v] = \alpha'\Bigr] \cdot \Pr_{y}\Bigl[\psi'(y)[v] = \beta'\Bigr] \cdot
    \Bigl\llbracket \alpha' \text{ and } \beta' \text{ are consistent} \Bigr\rrbracket \\
    & = \sum_{\beta' \in \Sigma'} \Pr_{y}\Bigl[\psi'(y)[v] = \beta'\Bigr] \cdot
    \underbrace{
    \sum_{\alpha' \in \Sigma'} \Pr_{x}\Bigl[\psi'(x)[v] = \alpha'\Bigr] \cdot
    \Bigl\llbracket \alpha' \text{ and } \beta' \text{ are consistent} \Bigr\rrbracket
    }_{\clubsuit}.
\end{aligned}
\end{align}
Since \cref{eq:amp:consistent:expand} is a convex combination of $\clubsuit$'s,
letting $\vec{\mu} \in \Delta^{\Sigma'}$ be an arbitrary distribution over $\Sigma'$,
we can rewrite it as follows:
\begin{align}
\label{eq:amp:consistent:expand2}
\begin{aligned}
    \text{\cref{eq:amp:consistent:expand}}
    & \leq \max_{\vec{\mu} \in \Delta^{\Sigma'}}
    \sum_{\beta' \in \Sigma'} \mu(\beta') \cdot \sum_{\alpha' \in \Sigma'}
    \Pr_{x}\Bigl[ \psi'(x)[v] = \alpha' \Bigr] \cdot
    \Bigl\llbracket \alpha' \text{ and } \beta' \text{ are consistent} \Bigr\rrbracket \\
    & = \max_{\beta' \in \Sigma'} \sum_{\alpha' \in \Sigma'}
    \Pr_{x}\Bigl[\psi'(x)[v] = \alpha'\Bigr] \cdot
    \Bigl\llbracket \alpha' \subseteq \beta' \text{ or } \alpha' \supseteq \beta' \Bigr\rrbracket,
\end{aligned}
\end{align}
where the last equality holds because
the maximum value is attained when $\vec{\mu}$ is a one-point distribution.
We now bound \cref{eq:amp:consistent:expand2}
by case analysis on the size of $\beta'$:
\begin{description}
    \item[(Case 1)] 
        $|\beta'| = 1$: We have 
        \begin{align}
        \begin{aligned}
            \sum_{\alpha' \in \Sigma'} \Pr_{x}\Bigl[\psi'(x)[v] = \alpha'\Bigr] \cdot
            \Bigl\llbracket \alpha' \subseteq \beta' \text{ or } \alpha' \supseteq \beta' \Bigr\rrbracket
            & = \Pr_{x}\Bigl[\psi'(x)[v] \supseteq \beta'\Bigr] \\
            & \leq \Pr_{x}\Bigl[\psi'(x)[v] \ni \psi(v)\Bigr] = p_v,
        \end{aligned}
        \end{align}
        where the second last inequality is because
        $\psi(v)$ was determined by the popularity vote on $\psi'(x)[v]$.
    \item[(Case 2)]
        $|\beta'| = 2$: Letting $\beta' = \{\beta_1, \beta_2\}$ for some $\beta_1, \beta_2 \in \Sigma$,
        we obtain
        \begin{align}
        \begin{aligned}
            \sum_{\alpha' \in \Sigma'} \Pr_{x}\Bigl[\psi'(x)[v] = \alpha'\Bigr] \cdot
            \Bigl\llbracket \alpha' \subseteq \beta' \text{ or } \alpha' \supseteq \beta' \Bigr\rrbracket
            & = \Pr_{x}\Bigl[\psi'(x)[v] \subseteq \{\beta_1, \beta_2\}\Bigr] \\
            & \leq \Pr_{x}\Bigl[\psi'(x)[v] \ni \beta_1\Bigr] + \Pr_{x}\Bigl[\psi'(x)[v] \ni \beta_2\Bigr] \\
            & \leq \Pr_{x}\Bigl[\psi'(x)[v] \ni \psi(v)\Bigr] + \Pr_{x}\Bigl[\psi'(x)[v] \ni \psi(v)\Bigr] \\
            & \leq p_v + p_v = 2p_v.
        \end{aligned}
        \end{align}
\end{description}
Accordingly, we obtain
\begin{align}
    \Pr_{x,y}\Bigl[\psi'(x)[v] \text{ and } \psi'(y)[v] \text{ are consistent}\Bigr] \leq 2p_v.
\end{align}
We can show \cref{eq:amp:consistent:w} similarly, completing the proof.
\end{proof}

We then bound the second moment of $N$,
whose proof is almost the same as those of
\cite{radhakrishnan2006gap} and
\cite[Claim 5.10]{radhakrishnan2007dinurs}.

\begin{lemma}
\label{lem:amp:pow:N2}
    It holds that
    \begin{align}
        \E[N^2] \leq \rep \frac{|F|}{|E|} \cdot \left(3 + 2\frac{d}{d-\lambda}\right).
    \end{align}
\end{lemma}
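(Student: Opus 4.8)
The plan is to dominate $N$ by a purely combinatorial quantity and compute its second moment, following \cite{radhakrishnan2006gap,radhakrishnan2007dinurs} closely. Since every faulty step of the verifier's walk in particular traverses an edge of $F$, we have $N \leq B$, where $B$ counts the steps $\vv{e_i}$ of the verifier's walk with $e_i \in F$; thus it suffices to prove $\E[B^2] \leq \rep\frac{|F|}{|E|}\bigl(3 + 2\frac{d}{d-\lambda}\bigr)$. Writing $B = \sum_{i \geq 1} \llbracket i \leq k,\, e_i \in F \rrbracket$ where $k$ is the (random) length of the verifier's walk, and expanding the square,
\begin{align*}
    \E[B^2] = \sum_{i \geq 1} \Pr\Bigl[i \leq k,\, e_i \in F\Bigr]
    + 2 \sum_{1 \leq i < j} \Pr\Bigl[j \leq k,\, e_i \in F,\, e_j \in F\Bigr].
\end{align*}
Two elementary facts about the ASRW drive the rest: $\Pr[k \geq m] = \bigl(1-\tfrac1\rep\bigr)^{m-1}$ for every $m \geq 1$, and, conditioned on $k \geq m$, the first $m$ steps are distributed as an $m$-step random walk from a uniformly random vertex (the stopping coin flips being independent of the trajectory). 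In particular, by $d$-regularity, $\Pr[e_i \in F \mid k \geq i] = \frac{|F|}{|E|}$, so the diagonal contribution equals $\frac{|F|}{|E|} \sum_{i \geq 1}\bigl(1-\tfrac1\rep\bigr)^{i-1} = \rep\frac{|F|}{|E|}$.

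The crux is a correlation bound for two steps, which is the only place expansion enters: conditioned on $k \geq j$,
\begin{align*}
    \Pr\Bigl[e_i \in F,\, e_j \in F\Bigr]
    \leq \frac{|F|}{|E|}\left(\frac{|F|}{|E|} + \left(\frac{\lambda}{d}\right)^{j-i-1}\right).
\end{align*}
I would prove this by introducing the vector $\vec{g} \in \bbR^{V}$ whose entry $g(v)$ is the fraction of the $d$ edges at $v$ lying in $F$; its mean over $V$ is exactly $\frac{|F|}{|E|} = \frac{2|F|}{d|V|}$, and $\|\vec{g}\|_2^2 \leq \sum_v g(v) = \frac{2|F|}{d}$ since $0 \leq g(v) \leq 1$. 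Conditioning on $e_i \in F$ reweights the endpoint of step $i$ to be distributed proportionally to $\vec g$, so the left-hand side equals $\frac{|F|}{|E|} \cdot \langle (\tfrac1d\mat{A})^{j-i-1}\vec{p},\, \vec{g}\rangle$, where $\vec p$ is that endpoint distribution and $\mat A$ is the adjacency matrix. Decomposing $\vec g$ into its $\vec 1$-component and an orthogonal part $\vec{g}^{\perp}$: the $\vec1$-component contributes precisely $\bigl(\frac{|F|}{|E|}\bigr)^2$, while $(\tfrac1d\mat A)^{j-i-1}$ contracts $\vec{g}^\perp$ by a factor $(\lambda/d)^{j-i-1}$ since $G$ is a $(d,\lambda)$-expander; combined with $\|\vec{g}^\perp\|_2^2 \leq \|\vec g\|_2^2 \leq \frac{2|F|}{d}$, this yields the claimed bound.

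Finally, substituting the correlation bound (with $j = i + \ell$) into the off-diagonal sum gives
\begin{align*}
    2 \cdot \frac{|F|}{|E|} \sum_{i \geq 1} \sum_{\ell \geq 1}
    \left(1-\tfrac1\rep\right)^{i+\ell-1}\left(\frac{|F|}{|E|} + \left(\tfrac\lambda d\right)^{\ell-1}\right),
\end{align*}
which factors into products of geometric series. Using $\sum_{i \geq 1}(1-\tfrac1\rep)^{i-1} = \rep$, $\sum_{\ell \geq 1}(1-\tfrac1\rep)^{\ell} = \rep - 1 \leq \rep$, $\sum_{\ell \geq 1}(1-\tfrac1\rep)^{\ell}(\tfrac\lambda d)^{\ell-1} \leq \tfrac{1}{1-\lambda/d} = \tfrac{d}{d-\lambda}$, and the bound $\frac{|F|}{|E|} \leq \tfrac1\rep$ from \cref{clm:amp:FE} (which makes $\frac{|F|}{|E|}\rep(\rep-1) \leq \rep$), the off-diagonal sum is at most $2\rep\frac{|F|}{|E|}\bigl(1 + \tfrac{d}{d-\lambda}\bigr)$. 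Adding the diagonal contribution $\rep\frac{|F|}{|E|}$ gives $\E[N^2] \leq \E[B^2] \leq \rep\frac{|F|}{|E|}\bigl(3 + 2\frac{d}{d-\lambda}\bigr)$, as desired. I expect the main obstacle to be the correlation bound: one must verify carefully that conditioning on ``$e_i \in F$'' reweights the walk's intermediate vertex exactly by $\vec g$ and track the normalization $\frac{|F|}{|E|} = \frac{2|F|}{d|V|}$ throughout, but no genuinely new idea beyond the spectral (expander mixing) estimate is needed, which is why this lemma can reuse the argument of \cite{radhakrishnan2006gap,radhakrishnan2007dinurs} essentially verbatim.
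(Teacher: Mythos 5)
Your proposal is correct and follows essentially the same route as the paper: bound $N$ by the number of $F$-edges on the walk, split $\E[(\sum_i \chi_i)^2]$ into the diagonal term $\rep\frac{|F|}{|E|}$ and the off-diagonal term, control the latter with the expander correlation estimate, and finish using $\frac{|F|}{|E|} \leq \frac{1}{\rep}$ from \cref{clm:amp:FE}. The only difference is that the paper invokes the correlation estimate as a black box (the expander random walk lemma, \cref{lem:amp:pow:expander-walk}), whereas you re-derive it inline via the spectral decomposition of the vector $\vec{g}$ of local $F$-edge fractions, which is just the standard proof of that lemma.
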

Note that \cref{lem:amp:pow:N1,lem:amp:pow:N2} imply \cref{lem:amp:pow:sound}.
Before proving \cref{lem:amp:pow:N2},
we introduce the expander random walk lemma \cite{dinur2007pcp,guruswami2005course,radhakrishnan2007dinurs}.
\begin{lemma}[Expander random walk lemma \cite{dinur2007pcp,guruswami2005course,radhakrishnan2007dinurs}]
\label{lem:amp:pow:expander-walk}
    For a $(d,\lambda)$-expander graph $G = (V,E)$ and 
    a set $F \subseteq E$ of edges without self-loops,
    let $\vec{W} = ( \vv{e_1}, \vv{e_2}, \ldots )$ denote
    an ASRW starting from a vertex chosen from $V$ uniformly at random.
    Then, for any $j > i \geq 1$, it holds that
    \begin{align}
        \Pr \Bigl[e_j \in F \Bigm| e_i \in F \Bigr] \leq
        \left(1-\frac{1}{\rep}\right)^{j-i} \left(\frac{|F|}{|E|} + \left(\frac{\lambda}{d}\right)^{j-i-1}\right).
    \end{align}
\end{lemma}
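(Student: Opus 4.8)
The plan is to prove the bound in two stages: first peel off the geometric stopping time built into the ASRW, reducing to a statement about an ordinary infinite random walk started uniformly, and then establish that statement by a short eigenvalue computation. Throughout, I use that the uniform distribution is stationary for the random walk on the $d$-regular graph $G$, and that $\lambda(G)\leq\lambda$ means the matrix $\mat{M}\coloneq\tfrac1d\mat{A}$ satisfies $\lVert\mat{M}^{\,r}\vec{x}\rVert_2\leq(\tfrac{\lambda}{d})^{r}\lVert\vec{x}\rVert_2$ for every $\vec{x}\perp\vec{1}$ and $r\geq 0$.

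\textbf{Peeling off the stopping time.} Realize the ASRW as an infinite simple random walk $v_0,v_1,v_2,\ldots$ on $G$ started from the uniform distribution, together with an \emph{independent} length $L\geq 1$ satisfying $\Pr[L\geq \ell]=(1-\tfrac{1}{\rep})^{\ell-1}$ (the exact law produced by ``break with probability $\tfrac{1}{\rep}$ after each step''). Put $Y_k\coloneq\llbracket \{v_{k-1},v_k\}\in F\rrbracket$. Since $j>i$, the event ``$e_i\in F$'' is $\{Y_i=1\}\cap\{L\geq i\}$ and ``$e_j\in F$'' additionally forces $\{L\geq j\}$; by independence of $L$ from the walk,
\[
\Pr[e_i\in F]=\Pr[Y_i=1]\Bigl(1-\tfrac{1}{\rep}\Bigr)^{i-1},\qquad
\Pr[e_i\in F,\,e_j\in F]=\Pr[Y_i=1,\,Y_j=1]\Bigl(1-\tfrac{1}{\rep}\Bigr)^{j-1},
\]
hence $\Pr[e_j\in F\mid e_i\in F]=(1-\tfrac{1}{\rep})^{j-i}\cdot\Pr[Y_j=1\mid Y_i=1]$. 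It thus suffices to show $\Pr[Y_j=1\mid Y_i=1]\leq \tfrac{|F|}{|E|}+(\tfrac{\lambda}{d})^{j-i-1}$.

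\textbf{The eigenvalue argument.} Let $\vec{g}\in\bbR^V$ record the local $F$-density, $g(u)\coloneq\tfrac1d\,|\{e\in F:u\in e\}|$; since $F$ has no self-loop, $\sum_u g(u)=\tfrac{2|F|}{d}$ and $0\leq g(u)\leq 1$. Two elementary facts drive the proof. First, conditioned on $Y_i=1$ the vertex $v_i$ is distributed as $\vec{g}/\lVert\vec{g}\rVert_1$ (equivalently: pick a uniformly random directed copy of an edge of $F$ and take its head); this is legitimate because conditioning on $Y_i=1$ only reweights the past and the marginal of $v_i$, while the forward chain stays Markov given $v_i$. Second, by the Markov property and time-homogeneity, $\Pr[Y_j=1\mid v_i=w]=(\mat{M}^{\,j-i-1}\vec{g})(w)$. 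Combining these with $\lVert\vec{g}\rVert_1=\tfrac{2|F|}{d}$ gives $\Pr[Y_j=1\mid Y_i=1]=\tfrac{d}{2|F|}\bigl\langle \vec{g},\,\mat{M}^{\,j-i-1}\vec{g}\bigr\rangle$. Now decompose $\vec{g}=\tfrac{|F|}{|E|}\vec{1}+\vec{g}^{\perp}$ with $\vec{g}^{\perp}\perp\vec{1}$ (the mean of $\vec{g}$ is $\tfrac1{|V|}\cdot\tfrac{2|F|}{d}=\tfrac{|F|}{|E|}$ because $d|V|=2|E|$), use $\mat{M}\vec{1}=\vec{1}$ together with the spectral bound above, and estimate $\lVert\vec{g}^{\perp}\rVert_2^2\leq\lVert\vec{g}\rVert_2^2=\sum_u g(u)^2\leq\sum_u g(u)=\tfrac{2|F|}{d}$. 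This yields $\langle\vec{g},\mat{M}^{\,j-i-1}\vec{g}\rangle\leq\tfrac{|F|^2}{|E|^2}|V|+(\tfrac{\lambda}{d})^{j-i-1}\tfrac{2|F|}{d}$, and multiplying by $\tfrac{d}{2|F|}$ and simplifying $\tfrac{d|V|}{2|E|^2}\cdot|F|^2=\tfrac{|F|}{|E|}$ gives $\Pr[Y_j=1\mid Y_i=1]\leq\tfrac{|F|}{|E|}+(\tfrac{\lambda}{d})^{j-i-1}$, as required.

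\textbf{Main obstacle.} The only genuinely delicate point is the first of the two ``elementary facts'' in the eigenvalue argument: identifying the conditional law of $v_i$ given $Y_i=1$ and justifying that one may restart the walk from $v_i\sim\vec{g}/\lVert\vec{g}\rVert_1$, since this is where the Markov structure must be invoked with care (and where the no-self-loop hypothesis on $F$ keeps the edge counts $\sum_u g(u)=\tfrac{2|F|}{d}$, $g(u)\le 1$ clean). The stopping-time factorization and the spectral estimate are routine. The statement is standard --- it underlies \cite{dinur2007pcp,guruswami2005course,radhakrishnan2007dinurs} --- so the paper will in effect just record the argument above, with the exponents and constants spelled out explicitly for later use in \cref{lem:amp:pow:N2}.
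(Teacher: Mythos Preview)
Your proof is correct and follows the standard argument. The paper itself does \emph{not} prove this lemma: it merely states it with citations to \cite{dinur2007pcp,guruswami2005course,radhakrishnan2007dinurs} and then invokes it as a black box inside the proof of \cref{lem:amp:pow:N2}. There is therefore no ``paper's own proof'' to compare against; your write-up is a faithful rendering of the argument in those references---factor out the independent geometric stopping time to get the $(1-\tfrac{1}{\rep})^{j-i}$ prefactor, identify the conditional law of $v_i$ given $Y_i=1$ as $\vec{g}/\lVert\vec{g}\rVert_1$, and then apply the spectral bound $\lVert\mat{M}^{\,r}\vec{x}\rVert_2\leq(\lambda/d)^r\lVert\vec{x}\rVert_2$ for $\vec{x}\perp\vec{1}$ together with $\lVert\vec{g}\rVert_2^2\leq\lVert\vec{g}\rVert_1$.
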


\begin{proof}[Proof of \cref{lem:amp:pow:N2}]
Let $\chi_i$ denote a random variable $\llbracket e_i \in F \rrbracket$,
where $e_i$ is the \nth{$i$} edge of the verifier's walk.
Since the number of faulty steps in the verifier's walk is
bounded by the number of edges of $F$ appearing in the verifier's walk,
we have
\begin{align}
\begin{aligned}
    \E[N^2]
    & \leq \E\Bigl[(\text{\# edges of } F \text{ appearing in verifier's walk})^2\Bigr] \\
    & = \E\left[\left(\sum_{i \geq 1} \chi_i\right)^2 \right] \\
    & = \E\left[\sum_{i \geq 1} \chi_i \cdot \sum_{j \geq 1} \chi_j \right] \\
    & = \E\left[\sum_{i \geq 1} \chi_i^2\right]
    + 2 \cdot \E\left[\sum_{1 \leq i < j} \chi_i \chi_j\right].
\end{aligned}
\end{align}
Using that $\chi_i^2 = \chi_i$ and
$\E[\chi_i] = \Pr[\chi_i = 1] = \left(1-\frac{1}{\rep}\right)^{i-1} \frac{|F|}{|E|}$,
we obtain
\begin{align}
    \E\left[\sum_{i \geq 1} \chi_i^2\right]
    = \sum_{i \geq 1} \E[\chi_i]
    = \sum_{i \geq 1}\left(1-\frac{1}{\rep}\right)^{i-1} \frac{|F|}{|E|}
    = \rep \frac{|F|}{|E|}.
\end{align}
By \cref{lem:amp:pow:expander-walk} and 
$\E[\chi_i \chi_j] = \Pr[\chi_i = 1 \wedge \chi_j = 1]$, we derive
\begin{align}
\begin{aligned}
    \E\left[\sum_{1 \leq i < j} \chi_i \chi_j\right]
    & = \sum_{i \geq 1}\Pr\Bigl[\chi_i = 1\Bigr] \cdot
        \sum_{j > i}\Pr\Bigl[\chi_j = 1 \Bigm| \chi_i = 1\Bigr] \\
    & \leq \sum_{i \geq 1} \Pr\Bigl[\chi_i=1\Bigr] \cdot
        \sum_{j-i \geq 1} \left(1-\frac{1}{\rep}\right)^{j-i} \left[\frac{|F|}{|E|} + \left(\frac{\lambda}{d}\right)^{j-i-1}\right] \\
    & \leq \sum_{i \geq 1} \Pr\Bigl[\chi_i = 1\Bigr] \cdot
        \left(1-\frac{1}{\rep}\right) \cdot
        \sum_{k \geq 1}\left[ \frac{|F|}{|E|} \left(1-\frac{1}{\rep}\right)^{k-1} + \left(\frac{\lambda}{d}\right)^{k-1} \right] \\
    & = \rep \frac{|F|}{|E|} \cdot \left(1-\frac{1}{\rep}\right) \cdot \left[ \frac{|F|}{|E|} \rep + \frac{1}{1-\frac{\lambda}{d}} \right] \\
    & \underbrace{\leq}_{\text{\cref{clm:amp:FE}}}
        \rep\frac{|F|}{|E|} \cdot \left(1 + \frac{d}{d-\lambda}\right).
\end{aligned}
\end{align}
Consequently, we obtain 
\begin{align}
    \E[N^2]
    \leq \rep \frac{|F|}{|E|} + 2\cdot \rep \frac{|F|}{|E|} \cdot \left(1 + \frac{d}{d-\lambda}\right)
    = \rep \frac{|F|}{|E|} \cdot \left(3 + 2\frac{d}{d-\lambda}\right),
\end{align}
completing the proof.
\end{proof}

\subsubsection{Truncated Verifier's Walks and Powered Constraint Graphs}
\label{sec:amp:truncated}
Here, we approximately represent the verifier's walks by a constraint graph
to accomplish the proof of \cref{lem:amp:pow}.
Since there are infinitely many verifier's walks,
we first truncate them without significantly sacrificing the rejection probability.
Consider the \emph{truncated verifier's walk},
which always accepts the proof whenever
it follows more than $\Rep = 100\rep$ steps.
Let $N'$ denote a random variable for the number of faulty steps in the truncated verifier's walk $\vec{W}$; namely,
\begin{align}
    N' \coloneq N \cdot \Bigl\llbracket \text{length of } \vec{W} \leq \Rep \Bigr\rrbracket.
\end{align}
Since $N' \leq N$, it holds that
$\E[N'^2] \leq \E[N^2]$,
which can be bounded by \cref{lem:amp:pow:N2}.
Subsequently, we show that $\E[N']$ is not much smaller than $\E[N]$,
whose proof essentially follows that of
\cite[Lemma~2.3]{radhakrishnan2006gap} and \cite[Lemma~5.12]{radhakrishnan2007dinurs}.

\begin{lemma}
\label{lem:amp:pow:truncated-N1}
    It holds that
    \begin{align}
        \E[N'] \geq 0.1715 \cdot \rep \frac{|F|}{|E|}.
    \end{align}
\end{lemma}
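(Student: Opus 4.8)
The plan is to deduce the bound directly from \cref{lem:amp:pow:N1} by showing that truncating the verifier's walk at length $\Rep = 100\rep$ costs almost nothing, in the spirit of \cite[Lemma~2.3]{radhakrishnan2006gap} and \cite[Lemma~5.12]{radhakrishnan2007dinurs}. Write $L$ for the length of the verifier's walk $\vec{W}$. Since $N' = N \cdot \llbracket L \leq \Rep \rrbracket$, we have
\begin{align}
    \E[N'] = \E[N] - \E\Bigl[N \cdot \llbracket L > \Rep \rrbracket\Bigr],
\end{align}
and \cref{lem:amp:pow:N1} gives $\E[N] \geq \rep \frac{|F|}{|E|} \cdot \approxone^2 (\sqrt{2}-1)^2$ with $\approxone = 1 - \rme^{-100}$. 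So it remains to argue that the truncation loss $\E[N \cdot \llbracket L > \Rep\rrbracket]$ is negligible relative to $\rep\frac{|F|}{|E|}$.

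For the loss, I would recall from the proof of \cref{lem:amp:pow:N2} that $N \leq \sum_{i \geq 1}\chi_i$, where $\chi_i = \llbracket e_i \in F \rrbracket$ (and $\chi_i = 0$ when $L < i$), and that $\E[\chi_i] = \left(1-\tfrac{1}{\rep}\right)^{i-1}\frac{|F|}{|E|}$, so in particular $\Pr[e_i \in F \mid L \geq i] = \frac{|F|}{|E|}$. Modeling the after-stopping random walk as an infinite uniform random walk together with independent $\mathrm{Bernoulli}(\tfrac{1}{\rep})$ stopping coins, the event $\{L > \Rep\}$ --- that none of the first $\Rep$ coins fired --- is independent of the walk and has probability $\left(1-\tfrac{1}{\rep}\right)^{\Rep} \leq \rme^{-100}$. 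Bounding $\E[N\cdot\llbracket L > \Rep\rrbracket] \leq \sum_{i \geq 1}\Pr[e_i \in F,\; L \geq i,\; L > \Rep]$ and splitting the sum at $i = \Rep$ --- for $i \leq \Rep$ one has $\{L > \Rep\} \subseteq \{L \geq i\}$, so the edge event decouples from the coin event, whereas for $i > \Rep$ one has $\{L \geq i\} \subseteq \{L > \Rep\}$ --- yields
\begin{align}
    \E\Bigl[N \cdot \llbracket L > \Rep\rrbracket\Bigr]
    &\leq \sum_{1 \leq i \leq \Rep} \frac{|F|}{|E|}\left(1-\tfrac{1}{\rep}\right)^{\Rep}
    + \sum_{i > \Rep} \left(1-\tfrac{1}{\rep}\right)^{i-1}\frac{|F|}{|E|} \\
    &= \left(1-\tfrac{1}{\rep}\right)^{\Rep}(\Rep + \rep)\frac{|F|}{|E|}
    \leq 101\,\rme^{-100}\cdot\rep\frac{|F|}{|E|}.
\end{align}

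Combining the two estimates gives $\E[N'] \geq \rep\frac{|F|}{|E|}\bigl((1-\rme^{-100})^2(\sqrt{2}-1)^2 - 101\,\rme^{-100}\bigr)$. Since $(\sqrt{2}-1)^2 = 3 - 2\sqrt{2} = 0.17157\ldots$, the bracketed quantity exceeds $0.1715$, which is the claimed bound. I do not expect a serious obstacle; the two points requiring care are (i) the independence bookkeeping between the random walk and the stopping coins that justifies the split at $i = \Rep$, and (ii) verifying that neither the correction $101\,\rme^{-100}$ nor the factor $(1-\rme^{-100})^2$ eats into the narrow numerical margin $3 - 2\sqrt{2} - 0.1715 \approx 7 \times 10^{-5}$ --- which they do not, being smaller by dozens of orders of magnitude.
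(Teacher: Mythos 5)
Your proposal is correct and follows essentially the same route as the paper: the same decomposition $\E[N] = \E[N'] + \E[N \cdot \llbracket L > \Rep \rrbracket]$, the same appeal to \cref{lem:amp:pow:N1}, and the same tail estimate $\bigl(1-\tfrac{1}{\rep}\bigr)^{\Rep}(\Rep+\rep)\frac{|F|}{|E|} \leq 101\,\rme^{-100}\rep\frac{|F|}{|E|}$, yielding the identical numerical margin. The only cosmetic difference is that you derive the tail bound by summing indicators split at $i=\Rep$ with the coins-versus-walk independence made explicit, whereas the paper conditions on the walk length and bounds $\E[L \mid L \geq \Rep+1] \leq \Rep + \rep$; both computations are equivalent.
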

\begin{proof}
Observe first that
\begin{align}
\begin{aligned}
    \E[N]
    & = \E\Bigl[N \cdot \llbracket \text{length of } \vec{W} \leq \Rep \rrbracket\Bigr]
    + \E\Bigl[N \cdot \llbracket \text{length of } \vec{W} \geq \Rep+1 \rrbracket\Bigr] \\
    & = \E[N'] + \E\Bigl[N \cdot \llbracket \text{length of } \vec{W} \geq \Rep+1 \rrbracket\Bigr].
\end{aligned}
\end{align}
Conditioned on the event that ``$\text{the length of } \vec{W} \text{ is } k$,''
$\vec{W}$ includes $\frac{|F|}{|E|}k$ edges of $F$ in expectation;
thus, the second term in the above formula can be bounded as
\begin{align}
\begin{aligned}
    & \E\Bigl[N \cdot \llbracket \text{length of } \vec{W} \geq \Rep+1 \rrbracket\Bigr] \\
    & = \Pr\Bigl[\text{length of } \vec{W} \geq \Rep+1\Bigr]
        \cdot \E\Bigl[N \Bigm| \text{length of } \vec{W} \geq \Rep+1\Bigr] \\
    & \leq \Pr\Bigl[\text{length of } \vec{W} \geq \Rep+1\Bigr]
        \cdot \E\Bigl[\text{length of } \vec{W} \Bigm| \text{length of } \vec{W} \geq \Rep+1\Bigr] \cdot \frac{|F|}{|E|} \\
    & \leq \left(1-\frac{1}{\rep}\right)^{100\rep} \cdot (100\rep + \rep) \cdot \frac{|F|}{|E|} \\
    & \leq \rme^{-100} \cdot 101 \rep \cdot \frac{|F|}{|E|}.
\end{aligned}
\end{align}
Consequently, we use \cref{lem:amp:pow:N1} to derive
\begin{align}
    \E[N']
    \geq \rep \frac{|F|}{|E|} \cdot \approxone^2 \cdot (\sqrt{2}-1)^2
    - \rep \frac{|F|}{|E|} \cdot \rme^{-100} \cdot 101
    \geq 0.1715 \cdot \rep \frac{|F|}{|E|}.
\end{align}
completing the proof.
\end{proof}

We now create a \emph{powered constraint graph} $G'$ that emulates the truncated verifier's walks.
One minor but annoying issue is that
each of the verifier's walks is fractionally weighted by its occurrence probability,
although the underlying graph of $G'$ should be unweighted.
We use a weighting scheme by \cite{radhakrishnan2006gap,radhakrishnan2007dinurs}
to circumvent this issue.
Moreover, we show $\frac{\lambda'}{d'} = \bigO\left(\frac{\lambda}{d}\right)$,
where $d'$ is the degree of $G'$ and $\lambda' \coloneq \lambda(G')$.

\begin{lemma}
\label{lem:amp:pow:powered}
    We can construct a powered constraint graph $G' = (V,E',\Sigma'^{d^{\Rep+1}},\Pi')$ in polynomial time
    such that
    for any proof $\psi' \colon V \to \Sigma'^{d^{\Rep+1}}$,
    $\val_{G'}(\psi')$ is at most the probability that the truncated verifier accepts $\psi'$, and
    $\val_{G'}(\psi') = 1$ if the truncated verifier accepts $\psi'$ with probability $1$.
    Moreover,
    $G'$ is a $d'$-regular graph and
    $\frac{\lambda'}{d'} \leq \frac{4^{100}}{100} \cdot \frac{\lambda}{d}$
    \textup{(}if $\lambda \leq \frac{d}{2}$\textup{)},
    where 
    $d' \coloneq (\rep^\Rep - (\rep-1)^\Rep) \cdot d^\Rep$ and
    $\lambda' \coloneq \lambda(G)$.
\end{lemma}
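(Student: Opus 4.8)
The plan is to realize the truncated verifier directly as an unweighted, regular constraint graph $G'$ on the same vertex set $V$, whose edges are weighted copies of the verifier's walks; all four assertions then follow almost immediately — polynomial constructibility and the value guarantee from the construction itself, and regularity together with the spectral bound from \cref{lem:lambda}.

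\textbf{Construction.} For each $\ell\in\{1,\dots,\Rep\}$ set $m_\ell\coloneq(\rep-1)^{\ell-1}\rep^{\Rep-\ell}d^{\Rep-\ell}\in\bbN$; this is exactly the (integer) weight that makes the relative multiplicities of walks of length $\ell$ match the probability $\propto(1-\tfrac1\rep)^{\ell-1}d^{-\ell}$ with which the truncated verifier selects a given such walk, the denominators being cleared over all $\ell\le\Rep$. Let $E'$ be the multiset that contains, for every walk $\vv{W}$ of length $\ell\in\{1,\dots,\Rep\}$ in the underlying graph of $G$ (a walk and its reversal being identified), exactly $m_\ell$ parallel edges joining the two terminals of $\vv{W}$; equivalently, in the notation of \cref{lem:lambda}, $G'$ is the additive union over $\ell\in\{1,\dots,\Rep\}$ of $m_\ell$ copies of the graph $G^\ell$ (the $\ell$-fold product of $G$ with itself). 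On the edge corresponding to $\vv{W}$, with terminals $x$ and $y$, put the constraint accepting $(\psi'(x),\psi'(y))$ iff $\psi'(x)$ and $\psi'(y)$ pass the verifier's test at every step of $\vv{W}$; this is well defined because \ref{enm:amp:pow:C1}--\ref{enm:amp:pow:C3} are symmetric in $x$ and $y$. Since $\Rep=100\rep=\bigO(1/\epsilon)$ is a constant, every $m_\ell$ is a constant, the alphabet $\Sigma'^{d^{\Rep+1}}$ has constant size, and $|E'|=\bigO(|V|)$, so $G'$ is constructible in polynomial time.

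\textbf{Regularity and the value guarantee.} Because the underlying graph of $G$ is $d$-regular, exactly $d^\ell$ walks of length $\ell$ start (and exactly $d^\ell$ end) at any fixed vertex, so every vertex of $G'$ has the same degree $\sum_{\ell=1}^{\Rep}m_\ell d^\ell$; the geometric identity $\sum_{\ell=1}^{\Rep}(\rep-1)^{\ell-1}\rep^{\Rep-\ell}=\rep^\Rep-(\rep-1)^\Rep$ shows this equals $d'=(\rep^\Rep-(\rep-1)^\Rep)d^\Rep$, as claimed. For the value guarantee, observe that the uniform distribution on $E'$ induces on walks exactly the truncated verifier's walk distribution \emph{conditioned} on the walk having length at most $\Rep$ (the identification with the reversal being harmless for the test). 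Hence $\val_{G'}(\psi')$ equals the probability that the truncated verifier accepts $\psi'$ given that its walk is short; since the truncated verifier accepts \emph{unconditionally} when the walk is long, its overall acceptance probability is at least this conditional value, i.e.\ $\val_{G'}(\psi')\le\Pr[\text{truncated verifier accepts }\psi']$. Moreover, if the truncated verifier accepts $\psi'$ with probability $1$, then it passes on every short walk, so every edge of $G'$ is satisfied and $\val_{G'}(\psi')=1$.

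\textbf{Spectral bound.} Iterating \cref{lem:lambda} for products gives $\lambda(G^\ell)\le\lambda(G)^\ell\le\lambda^\ell$, and iterating it for additive unions gives $\lambda'=\lambda(G')\le\sum_{\ell=1}^{\Rep}m_\ell\lambda^\ell$. Dividing by $d'=\sum_{\ell=1}^{\Rep}m_\ell d^\ell$ and estimating the two geometric-type sums termwise — using $\lambda\le\tfrac d2$ so that each ratio $\tfrac{(\rep-1)\lambda}{\rep d}\le\tfrac12$, and $(1-\tfrac1\rep)^{\Rep}=(1-\tfrac1\rep)^{100\rep}\le\rme^{-100}$ to lower-bound $\rep^\Rep-(\rep-1)^\Rep$ — yields $\tfrac{\lambda'}{d'}\le\tfrac{4^{100}}{100}\cdot\tfrac{\lambda}{d}$ with a large margin (the true ratio is $\bigO(1)\cdot\tfrac{\lambda}{d}$). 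The one genuinely delicate point in the whole proof is the bookkeeping around the fractional weighting: one must choose the integer multiplicities $m_\ell$ so that simultaneously the uniform edge distribution faithfully reproduces the truncated verifier's short-walk distribution and the resulting graph is \emph{exactly} $d'$-regular, and one must verify that identifying a walk with its reversal affects neither the value identity nor the constraint definition. Everything else — polynomial size, the value inequality, and the spectral estimate via \cref{lem:lambda} — is then routine.
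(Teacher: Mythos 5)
Your proposal is correct and follows essentially the same route as the paper: the same integer multiplicities $(\rep-1)^{k-1}\rep^{\Rep-k}d^{\Rep-k}$ clearing the truncated walk distribution, walk-edges carrying the conjunction of the verifier's tests, the same regularity computation giving $d'=(\rep^\Rep-(\rep-1)^\Rep)d^\Rep$, and \cref{lem:lambda} for the spectral bound. The only cosmetic differences are that you phrase the value inequality via conditioning on short walks (the paper instead drops the always-satisfied self-loop term from the weighted adjacency matrix, with the same effect) and you lower-bound $d'$ by $(1-\rme^{-100})\rep^\Rep d^\Rep$ rather than $\Rep(\rep-1)^{\Rep-1}d^\Rep$, which in fact gives a slightly sharper ratio but still lands within the claimed $\frac{4^{100}}{100}\cdot\frac{\lambda}{d}$.
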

\begin{proof}
We first describe the truncated verifier's walk $\vec{W}$ in the language of adjacency matrices.
Let $\mat{A}$ denote the adjacency matrix of the underlying graph of $G$;
recall that the transition matrix of $G$ is $\frac{1}{d}\mat{A}$.
For each $k \in [\Rep]$,
the verifier's walk (i.e., an ASRW) has length exactly $k$
with probability $(1-p)^{k-1} p$, where $p \coloneq \frac{1}{\rep}$.
With the remaining probability
$1 - \sum_{1 \leq k \leq \Rep} (1-p)^{k-1} p = (1-p)^\Rep$,
the verifier's walk would be truncated.
Conditioned on the event that ``$\text{the length of } \vec{W} \text{ is } k$,''
$\vec{W}$ is equivalent to a length-$k$ random walk starting from a random vertex.
Thus, the adjacency matrix corresponding to all length-$k$ verifier's walks is equal to
\begin{align}
    \Pr\Bigl[\text{length of } \vec{W} = k\Bigr] \cdot \left(\frac{1}{d}\mat{A}\right)^k
    = (1-p)^{k-1}p \cdot d^{-k} \cdot \mat{A}^k.
\end{align}
If the verifier's walk is truncated,
it is considered to be a self-loop;\footnote{
    The respective constraint is defined to be
    $\Sigma'^{d^{\Rep+1}} \times \Sigma'^{d^{\Rep+1}}$,
    which is always satisfied.
}
thus the adjacency matrix corresponding to all such verifier's walks is 
\begin{align}
    \Pr\Bigl[\text{length of } \vec{W} \geq \Rep+1\Bigr] \cdot \mat{I}
    = (1-p)^\Rep \cdot \mat{I},
\end{align}
where $\mat{I}$ is the $V \times V$ identity matrix.
The resulting \emph{fractionally weighted} adjacency matrix
formed by the truncated verifier's walk is equal to
\begin{align}
\label{eq:amp:truncated-fracmat}
    \sum_{1 \leq k \leq \Rep} (1-p)^{k-1} p \cdot d^{-k} \cdot \mat{A}^{k}
    + (1-p)^\Rep \cdot \mat{I}.
\end{align}
Observe that we do not need the diagonal matrix $(1-p)^\Rep \cdot \mat{I}$ because
its removal does not
increase the value of any assignment,
and recall that
each edge associated to a verifier's walk should be \emph{integer weighted}.
Accordingly,
subtracting $(1-p)^\Rep \cdot \mat{I}$ from \cref{eq:amp:truncated-fracmat} and
multiplying \cref{eq:amp:truncated-fracmat} by $\rep^\Rep \cdot d^\Rep$
yields the following adjacency matrix of $G'$ having the desired property:\footnote{
    Obviously, $\mat{A}_{G'}$ is symmetric; i.e., $G'$ is undirected.
}
\begin{align}
    \mat{A}_{G'}
    \coloneq \rep^\Rep \cdot d^\Rep
        \sum_{1 \leq k \leq \Rep} (1-p)^{k-1} p \cdot d^{-k} \cdot \mat{A}^{k}
    = \sum_{1 \leq k \leq \Rep} (\rep-1)^{k-1} \rep^{\Rep-k} \cdot d^{\Rep-k} \cdot \mat{A}^{k}.
\end{align}

With this representation in mind,
we construct a powered constraint graph
$G' = (V,E',\Sigma'^{d^{\Rep + 1}}, \allowbreak \Pi' = (\pi'_{e'})_{e' \in E'})$
emulating the truncated verifier's walk as follows:

\begin{itembox}[l]{\textbf{Construction of a powered constraint graph $G'$.}}
\begin{algorithmic}[1]
    \State enumerate all $k$-length walks in $G$ for each $k \in [\Rep]$.
    \State let $G'$ be an edgeless constraint graph on the vertex set $V$.
    \For{\textbf{each} $k$-length walk $\vec{W}$, starting from $x$ and ending at $y$\footnote{
        The terminals $x$ and $y$ need not be distinguished.
    }}
        \State create a fresh edge $(x,y)$ and add it into $E(G')$.
        \State represent
        by $\pi'_{(x,y)} \subseteq \Sigma'^{d^{\Rep+1}} \times \Sigma'^{d^{\Rep+1}}$
        (the conjunction of) the tests performed by the verifier along $\vec{W}$.
        \State make $(\rep-1)^{k-1} \rep^{\Rep-k} \cdot d^{\Rep-k} $ number of its copies.
    \EndFor
\end{algorithmic}
\end{itembox}
The construction can be done in polynomial time in the size of $G$.
Observe that $G'$ is regular; indeed,
it is $(\rep^\Rep - (\rep-1)^\Rep) \cdot d^\Rep$-regular because we have
\begin{align}
\begin{aligned}
    \mat{A}_{G'} \vec{1}
    & = \sum_{1 \leq k \leq \Rep} (\rep-1)^{k-1} \rep^{\Rep-k} \cdot d^{\Rep-k} \cdot (d^{k} \vec{1}) \\
    & = \rep^{\Rep-1} d^\Rep \cdot \vec{1} \sum_{0 \leq k \leq \Rep-1} \left(\frac{\rep-1}{\rep}\right)^k \\
    & = (\rep^\Rep - (\rep-1)^\Rep) \cdot d^\Rep \cdot \vec{1}.
\end{aligned}
\end{align}
Let $d' \coloneq (\rep^\Rep - (\rep-1)^\Rep) \cdot d^\Rep$ be the degree of $G'$ and
$\lambda' \coloneq \lambda(G')$.
By \cref{lem:lambda}, we have
\begin{align}
\begin{aligned}
    \lambda'
    & \leq \sum_{1 \leq k \leq \Rep} (\rep-1)^{k-1} \rep^{\Rep-k} \cdot d^{\Rep-k} \cdot \lambda^k \\
    & \leq \rep^{\Rep-1} \cdot d^\Rep \sum_{1 \leq k \leq \Rep} \left(\frac{\lambda}{d}\right)^k \\
    & = \rep^{\Rep-1} \cdot d^\Rep \frac{\lambda}{d} \frac{1-\left(\frac{\lambda}{d}\right)^\Rep}{1-\frac{\lambda}{d}} \\
    & = \rep^{\Rep-1} \frac{d^\Rep - \lambda^\Rep}{d-\lambda} \lambda \\
    & \leq 2 \cdot \rep^{\Rep-1} \cdot d^{\Rep-1} \cdot \lambda,
\end{aligned}
\end{align}
where the last inequality used the assumption that $\lambda \leq \frac{d}{2}$.
Simple calculation using the inequality
$d' \geq \Rep(\rep-1)^{\Rep-1} \cdot d^\Rep$
yields that
\begin{align}
    \frac{\lambda'}{d'}
    \leq \frac{2 \cdot \rep^{\Rep-1}\cdot d^{\Rep-1} \cdot\lambda}{\Rep(\rep-1)^{\Rep-1} \cdot d^{\Rep}}
    = \frac{2(\rep-1)}{100 \rep \cdot \rep}\left(\frac{\rep}{\rep-1}\right)^{100 \rep} \cdot \frac{\lambda}{d}
    \leq \frac{4^{100}}{100} \cdot \frac{\lambda}{d},
\end{align}
where the last inequality holds because
$\left(\frac{\rep}{\rep-1}\right)^\rep \leq 4$ as $\rep \geq 2$,
completing the proof.
\end{proof}

We eventually accomplish the proof of \cref{lem:amp:pow}.

\begin{proof}[Proof of \cref{lem:amp:pow}]
The completeness (i.e., $\val_G(\psi^\ini \reco \psi^\tar)=1$ implies $\val_{G'}(\psi'^\ini \reco \psi'^\tar)=1$)
follows from \cref{lem:amp:pow:complete},
which still holds
on the truncated verifier's walk and thus on $G'$.
Then, we show the soundness, i.e.,
$\val_G(\psi^\ini \reco \psi^\tar) < 1-\epsilon$ implies
$\val_{G'}(\psi'^\ini \reco \psi'^\tar) < 1-\epsilon'$.
Recall that
$\psi \coloneq \psi^{(\ttt)}$ is a ``bad'' assignment such that $\val_G(\psi^{(\ttt)}) < 1-\delta$, and
let $\psi' \coloneq \psi'^{(\ttt)}$.
Using \cref{lem:amp:pow:N2,lem:amp:pow:truncated-N1} and
the {Paley}--{Zygmund} inequality \cite[Section~4.8, Ex.~1]{alon2016probabilistic},
we derive
\begin{align}
\begin{aligned}
    \Pr\Bigl[\text{truncated verifier rejects } \psi'\Bigr]
    & \geq \Pr[N' > 0] \\
    & \geq \frac{\E[N']^2}{\E[N'^2]} \\
    & \geq \rep \frac{|F|}{|E|} \cdot \frac{0.1715^2}{3 + 2\frac{d}{d-\lambda}} \\
    & \underbrace{\geq}_{\text{\cref{clm:amp:FE}}} \frac{0.1715^2}{3 + 2\frac{d}{d-\lambda}}
        \cdot \min\left\{\rep \delta, 1-\frac{\rep}{|E|}\right\}.
\end{aligned}
\end{align}
Hereafter, 
we can safely assume that $|V| \geq \frac{2d^{\Rep+2}}{\epsilon}$.\footnote{
    Otherwise, we can
    solve \prb{Gap$_{1,1-\epsilon}$ \BCSPReconf[$_W$]}
    in constant time.
}
By \cref{clm:amp:interpolate}, we have
\begin{align}
    \rep \delta
    \geq \left\lceil \frac{2}{\epsilon} \right\rceil \left(\epsilon - \frac{d^{\Rep+2}}{|V|}\right)
    \geq \left\lceil \frac{2}{\epsilon} \right\rceil \frac{\epsilon}{2}
    \geq 1.
\end{align}
By using that
$|E| \geq |V| \geq \frac{2^{102}}{\epsilon}$, we obtain
\begin{align}
    1-\frac{\rep}{|E|}
    \geq 1- \left\lceil \frac{2}{\epsilon} \right\rceil \frac{1}{|E|}
    \geq 1 - \left\lceil \frac{2}{\epsilon} \right\rceil \frac{\epsilon}{2^{102}}
    \geq 1-\frac{1}{2^{100}},
\end{align}
where the last inequality is because
$\left\lceil \frac{2}{\epsilon} \right\rceil \leq \frac{4}{\epsilon}$
for any $\epsilon \in (0,1)$.
Consequently, by \cref{lem:amp:pow:powered}, we derive
\begin{align}
\begin{aligned}
    1-\val_{G'}(\psi'^\ini \reco \psi'^\tar)
    & \underbrace{\geq}_{\text{\cref{lem:amp:pow:powered}}}
        \Pr\Bigl[\text{truncated verifier rejects } \psi'\Bigr] \\
    & \geq \frac{0.1715^2}{3 + 2\frac{d}{d-\lambda}} \cdot \min\left\{1, 1-\frac{1}{2^{100}}\right\} \\
    & > \underbrace{\frac{0.0294}{3 + 2\frac{d}{d-\lambda}}}_{=\epsilon'},
\end{aligned}
\end{align}
thereby finishing the soundness.
An upper bound of $\frac{\lambda'}{d'}$ follows from \cref{lem:amp:pow:powered}.
Observe that the alphabet size of $G'$ is equal to
\begin{align}
    \left|\Sigma'^{d^{\Rep + 1}}\right| = \left(\frac{W(W+1)}{2}\right)^{d^{100\left\lceil\frac{2}{\epsilon}\right\rceil+1}},
\end{align}
which accomplishes the proof.
\end{proof}

\subsection{Putting It Together: Proof of \cref{thm:amp}}
\label{subsec:amp:together}
We are now ready to accomplish the proof of \cref{thm:amp}
using \cref{lem:amp:expand,lem:amp:pow}.

\begin{proof}[Proof of \cref{thm:amp}]
Let $\rho \in (0,1)$ be a given parameter.
Under RIH,
\prb{Gap$_{1,1-\epsilon}$ \BCSPReconf[$_W$]}
is \PSPACE-hard
even if the underlying graph is $\Delta$-regular
for some $\epsilon \in (0,1)$ and $q, W, \Delta \in \bbN$ \cite{ohsaka2023gap}.
Using \cref{lem:amp:expand}, for any positive even integer $d_0 \geq 3$,
we can reduce \prb{Gap$_{1,1-\epsilon}$ \BCSPReconf[$_W$]} to 
\prb{Gap$_{1,1-\frac{\Delta}{\Delta+d_0} \epsilon}$ \BCSPReconf[$_W$]}
whose underlying graph is $(d,\lambda)$-expander,
where $d = \Delta + d_0$ and
$\lambda = \Delta + 2\sqrt{d_0}$.
Setting $d_0 \coloneq C \cdot \Delta$, where $C$ is a positive even integer defined as
\begin{align}
    C \coloneq 2\cdot \left\lceil \left( \frac{3 \cdot 4^{100}}{100 \cdot \rho} \right)^2 \right\rceil,
\end{align}
we have
\begin{align}
    \frac{\lambda}{d}
    = \frac{\Delta + 2\sqrt{C \cdot \Delta}}{\Delta + C \cdot \Delta}
    \leq \frac{2\sqrt{C} + 1}{C + 1}
    \leq \frac{3\sqrt{C}}{C}
    \leq \frac{3}{\sqrt{C}}
    \leq \frac{100}{4^{100}} \cdot \rho.
\end{align}
Using \cref{lem:amp:pow},
we can further reduce
\prb{Gap$_{1,1-\frac{\Delta}{\Delta+d_0} \epsilon}$ \BCSPReconf[$_W$]}
to
\prb{Gap$_{1,1-\epsilon'}$ \BCSPReconf[$_{W'}$]}
whose underlying graph is $(d',\lambda')$-expander, where
\begin{align}
    \epsilon' & = \frac{0.0294}{3 + 2\frac{d}{d-\lambda}} \geq 0.0058, \\
    \frac{\lambda'}{d'} & \leq \frac{4^{100}}{100} \cdot \frac{\lambda}{d} \leq \rho,
\end{align}
and $W'$ and $d'$ are dependent only on the values of
$\epsilon$, $W$, and $d$.
A couple of these gap-preserving reductions accomplishes the proof.
\end{proof}

\section{\PSPACE-hardness of Approximation for \MinmaxSetCoverReconf}
\label{sec:sc}
As an application of \cref{thm:amp},
we show \PSPACE-hardness of approximation for \MinmaxSetCoverReconf under RIH.
Let $\calU$ be a finite set called the \emph{universe} and
$\calF = \{ S_1, \ldots, S_m \}$ be a family of $m$ subsets of $\calU$.
For a set system $(\calU,\calF)$, a \emph{cover} is
a subfamily of $\calF$ whose union is equal to $\calU$.
For two covers $\calC^\ini$ and $\calC^\tar$ for $(\calU,\calF)$,
a \emph{reconfiguration sequence from $\calC^\ini$ to $\calC^\tar$}
is any sequence $( \calC^{(1)}, \ldots, \calC^{(T)})$ of covers such that
$\calC^{(1)} = \calC^\ini$,
$\calC^{(T)} = \calC^\tar$, and
each $\calC^{(\ttt+1)}$ is obtained from $\calC^{(\ttt)}$
by adding or removing a single set of $\calF$
(i.e., $|\calC^{(\ttt)} \triangle \calC^{(\ttt+1)}| \leq 1$).
The \SetCoverReconf problem \cite{ito2011complexity} asks to
decide for a set system $(\calU, \calF)$ and its two covers $\calC^\ini$ and $\calC^\tar$,
if there is a reconfiguration sequence from $\calC^\ini$ to $\calC^\tar$
consisting of covers of size at most $k+1$.
\SetCoverReconf is \PSPACE-complete \cite{ito2011complexity}.
Subsequently, we formulate its optimization version.
Denote by $\OPT(\calF)$ the size of the minimum cover of $(\calU,\calF)$.
For a reconfiguration sequence
$\scrC = ( \calC^{(1)}, \ldots, \calC^{(\TTT)} )$
from $\calC^\ini$ to $\calC^\tar$, 
its \emph{cost} is defined as the maximum value of 
$\frac{|\calC^{(\ttt)}|}{\OPT(\calF)+1}$ over all $\calC^{(\ttt)}$'s in $\scrC$; namely,
\begin{align}
    \cost_{\calF}(\scrC)
    \coloneq \max_{\calC^{(\ttt)} \in \scrC} \frac{|\calC^{(\ttt)}|}{\OPT(\calF)+1},
\end{align}
where division by $\OPT(\calF)+1$ comes from the nature that we must first add at least one set whenever $|\calC^\ini| = |\calC^\tar| = \OPT(\calF)$ and $\calC^\ini \neq \calC^\tar$.
In \MinmaxSetCoverReconf,
we wish to minimize $\cost_\calF(\scrC)$ subject to
$\scrC = ( \calC^\ini, \ldots, \calC^\tar )$.
For a pair of covers $\calC^\ini$ and $\calC^\tar$,
let
$\cost_{\calF}(\calC^\ini \reco \calC^\tar)$
denote the minimum value of $\cost_\calF(\scrC)$
over all possible reconfiguration sequences $\scrC$ from $\calC^\ini$ to $\calC^\tar$; namely,
\begin{align}
    \cost_{\calF}(\calC^\ini \reco \calC^\tar)
    \coloneq \min_{\scrC = ( \calC^\ini, \ldots, \calC^\tar )} \cost_{\calF}(\scrC).
\end{align}
For every $1 \leq c \leq s$,
\prb{Gap$_{c,s}$ \SetCoverReconf} requests
to distinguish whether
$\cost_{\calF}(\calC^\ini \reco \calC^\tar) \leq c$ or
$\cost_{\calF}(\calC^\ini \reco \calC^\tar) > s$.
The result in this section is
the gap reducibility from
\MaxminBCSPReconf to
\MinmaxSetCoverReconf shown below.

\begin{theorem}
\label{thm:sc}
    For every number $\epsilon \in (0,1)$,
    positive integers $W, d \in \bbN$, and
    number $\lambda > 0$
    such that $\frac{\lambda}{d} \leq \frac{\epsilon}{3}$,
    there exists a gap-preserving reduction from
    \prb{Gap$_{1,\epsilon'}$ \BCSPReconf[$_W$]}
    whose underlying graph is $(d,\lambda)$-expander
    to
    \prb{Gap$_{1,2-\epsilon}$ \SetCoverReconf},
    where 
    \begin{align}
        \epsilon' \coloneq \epsilon^2 \cdot \frac{d}{d+\lambda} \cdot
            \left(1-\frac{3 \lambda}{\epsilon d} \right).
    \end{align}
\end{theorem}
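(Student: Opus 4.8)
The plan is to mimic Lund--Yannakakis' gap-preserving reduction from \prb{Label Cover} to \prb{Set Cover} \cite{lund1994hardness}, the essential new twist being that their ``sample a random assignment from a cover'' step is useless for reconfiguration (consecutive covers must differ in at most one set, hence the decoded assignments in at most one vertex), so I instead extract an assignment \emph{deterministically} from the vertices that a small cover pins down, and lower-bound the number of satisfied edges among those vertices with the expander mixing lemma \cite{alon1988explicit} rather than an averaging/biregularity argument.

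Concretely, given an instance $(\psi^\ini,\psi^\tar;G)$ of \prb{Gap$_{1,\epsilon'}$ \BCSPReconf[$_W$]} with $G=(V,E,\Sigma,\Pi)$ on a $(d,\lambda)$-expander and $\psi^\ini,\psi^\tar$ satisfying $G$ (we may discard self-loops, a negligible fraction for large $|V|$), I build $(\calU,\calF)$ as follows. For each edge $e=(v,w)$ attach a disjoint copy $\calU_e$ of the Lund--Yannakakis set-system gadget for the constraint $\pi_e$, with set families $\{c^e_{v,\alpha}\}_{\alpha}$, $\{c^e_{w,\beta}\}_{\beta}$ enjoying (G1) $c^e_{v,\alpha}\cup c^e_{w,\beta}=\calU_e$ iff $(\alpha,\beta)\in\pi_e$, and (G2) no subfamily using sets from only one of the two sides covers $\calU_e$; then set $\calU\coloneq\bigcup_e\calU_e$ and $\calF\coloneq\{S_{v,\alpha}\coloneq\bigcup_{e\ni v}c^e_{v,\alpha}\}_{v,\alpha}$, so the only sets meeting $\calU_e$ are $\{S_{v,\cdot}\}\cup\{S_{w,\cdot}\}$. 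By (G2) every cover uses at least one set $S_{v,\cdot}$ per vertex, so $\OPT(\calF)\geq|V|$; and $\calC^\ini\coloneq\{S_{v,\psi^\ini(v)}\}_v$ (resp.\ $\calC^\tar\coloneq\{S_{v,\psi^\tar(v)}\}_v$) is a cover of size $|V|$ by (G1), so $\OPT(\calF)=|V|$. For completeness, if $\val_G(\psi^\ini\reco\psi^\tar)=1$ I turn a reconfiguration sequence of satisfying assignments into one of covers, realizing each single-vertex change $v\colon\alpha\to\beta$ by adding $S_{v,\beta}$ (still a cover, as a superset) then removing $S_{v,\alpha}$ (a cover, since the new assignment satisfies $G$ by (G1)); every intermediate cover has size $\leq|V|+1=\OPT(\calF)+1$, so $\cost_\calF(\calC^\ini\reco\calC^\tar)\leq 1$.

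For soundness I argue by contrapositive: assume $\cost_\calF(\calC^\ini\reco\calC^\tar)\leq 2-\epsilon$, witnessed by a reconfiguration sequence of covers $\scrC=(\calC^{(1)},\dots,\calC^{(T)})$ each of size $\leq(2-\epsilon)(|V|+1)$. Decode $\calC^{(t)}$ to $\psi^{(t)}\colon V\to\Sigma$ by setting $\psi^{(t)}(v)$ to the unique label $\alpha$ with $S_{v,\alpha}\in\calC^{(t)}$ whenever $\calC^{(t)}$ contains exactly one such set, and to a fixed $\alpha_0\in\Sigma$ otherwise; let $G_1^{(t)}$ be the set of vertices of the former kind. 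Since every vertex contributes at least one set and every vertex outside $G_1^{(t)}$ at least two, $|\calC^{(t)}|\geq 2|V|-|G_1^{(t)}|$, hence $|G_1^{(t)}|\geq 2|V|-|\calC^{(t)}|>\epsilon|V|-2$. For an edge $e=(v,w)$ with both endpoints in $G_1^{(t)}$, $\calU_e$ is covered by just $S_{v,\psi^{(t)}(v)}$ and $S_{w,\psi^{(t)}(w)}$, so by (G1) $(\psi^{(t)}(v),\psi^{(t)}(w))\in\pi_e$; thus $\psi^{(t)}$ satisfies every edge inside $G_1^{(t)}$. Applying the expander mixing lemma \cite{alon1988explicit} to $S=G_1^{(t)}$,
\[
  \val_G(\psi^{(t)}) \geq \frac{|G_1^{(t)}|}{|V|}\left(\frac{|G_1^{(t)}|}{|V|}-\frac{\lambda}{d}\right) \geq \epsilon\cdot\frac{d}{d+\lambda}\left(\epsilon-\frac{3\lambda}{d}\right) = \epsilon',
\]
where the middle inequality uses $|G_1^{(t)}|/|V|\geq\epsilon\,d/(d+\lambda)$ (valid once $|V|$ exceeds a constant depending only on $d,\lambda,\epsilon$; smaller instances have constant size and are solved directly) together with $\epsilon\,d/(d+\lambda)-\lambda/d\geq\epsilon-3\lambda/d$, which reduces to $2(d+\lambda)\geq\epsilon d$; note $\epsilon'>0$ precisely because $\lambda/d\leq\epsilon/3$. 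Finally $\sqpsi\coloneq(\psi^{(1)},\dots,\psi^{(T)})$ is a genuine reconfiguration sequence from $\psi^\ini$ to $\psi^\tar$: the endpoints decode correctly (all vertices lie in $G_1$ there), and consecutive covers differ by one set at a single vertex $v$, so their decodings agree off $v$ (same unique set, or both $\alpha_0$) and hence differ in at most one vertex — the point where, unlike in the powering step, no interpolation is needed. Thus $\val_G(\psi^\ini\reco\psi^\tar)\geq\min_t\val_G(\psi^{(t)})\geq\epsilon'$, finishing the contrapositive.

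The routine part is importing the Lund--Yannakakis gadget so that (G1) and (G2) hold simultaneously with $|\calU_e|$ polynomial, which I inherit from \cite{lund1994hardness}. I expect the main work to be the parameter bookkeeping around the mixing lemma: turning the cover-size budget $(2-\epsilon)(|V|+1)$ into a ``$(\epsilon-o(1))|V|$ single-set vertices'' bound so that the additive slack and the self-loop removal are absorbed, and checking that the resulting quadratic-in-$|G_1^{(t)}|$ lower bound still clears the prescribed threshold $\epsilon'$ under the hypothesis $\lambda/d\leq\epsilon/3$. The one genuinely new idea relative to \cite{lund1994hardness} is that a cover of size $(2-\epsilon)|V|$ lets one read off an assignment \emph{deterministically} on an $\approx\epsilon$-fraction of vertices — after which expansion, not biregularity, guarantees that an $\approx\epsilon^2$-fraction of the edges is satisfied.
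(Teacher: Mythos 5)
Your proposal is correct and follows essentially the same route as the paper's proof: the same Lund--Yannakakis-style per-edge gadget (your (G1)/(G2) are exactly what \cref{obs:sc:Q} provides), the same deterministic decoding of the vertices covered by a single set, the same counting giving $|G_1^{(\ttt)}| \geq \epsilon|V| - 2$ from the budget $(2-\epsilon)(\OPT(\calF)+1)$, and the same expander-mixing-lemma step arriving at exactly $\epsilon'$ (you fold the $\frac{d}{d+\lambda}$ loss into the lower bound on $|G_1^{(\ttt)}|/|V|$, whereas the paper puts it into $e_G(V,V)\leq(d+\lambda)|V|$, but the arithmetic agrees). Your parenthetical discarding of self-loops is no more cavalier than the paper itself, whose $\prec$/$\succ$ split implicitly assumes the expander instance has none.
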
\noindent
The value of $\epsilon'$ can be made arbitrarily close to $\epsilon^2$
by making $\frac{\lambda}{d}$ sufficiently small.
The following corollaries are thus immediate from \cref{thm:amp,thm:sc}
(notice: $(2-1.0029)^2 \approx 0.9942$).

\begin{corollary}
\label{cor:sc}
    Under RIH, \prb{Gap$_{1,1.0029}$ \SetCoverReconf} is \PSPACE-hard.
\end{corollary}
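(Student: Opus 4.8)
The plan is to obtain \cref{cor:sc} by composing the two reductions of this section: \cref{thm:amp} supplies a \PSPACE-hard instance of \MaxminBCSPReconf whose constraint graph is a $(d,\lambda)$-expander with a controllably small ratio $\frac{\lambda}{d}$, and \cref{thm:sc} converts such an instance into a \MinmaxSetCoverReconf instance with a $1$~vs.~$(2-\epsilon)$ gap. All that needs care is the choice of numerical parameters, so that the soundness $\epsilon'$ required by \cref{thm:sc} stays above the $0.9942$ threshold certified by \cref{thm:amp}.

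First I would set $\epsilon \coloneq 0.9971$, so the target soundness is $2-\epsilon = 1.0029$ exactly, and note that
\begin{align}
    \epsilon^2 = (1-0.0029)^2 = 1 - 0.0058 + 0.0029^2 = 0.99420841 > 0.9942,
\end{align}
which leaves a slack of about $8.41 \times 10^{-6}$. Since $\epsilon' = \epsilon^2 \cdot \frac{d}{d+\lambda}\bigl(1-\frac{3\lambda}{\epsilon d}\bigr)$ depends on $d,\lambda$ only through $\frac{\lambda}{d}$ and tends to $\epsilon^2$ as $\frac{\lambda}{d}\to 0$, I would then fix, before invoking anything else, a constant $\rho \in \bigl(0,\tfrac{\epsilon}{3}\bigr)$ small enough that $\epsilon^2 \cdot \frac{1}{1+\rho}\bigl(1-\frac{3\rho}{\epsilon}\bigr) \geq 0.9942$; this is possible by continuity because the left-hand side converges to $\epsilon^2 > 0.9942$ as $\rho \to 0^+$.

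With this $\rho$, \cref{thm:amp} yields universal constants $W,d\in\bbN$ such that \prb{Gap$_{1,0.9942}$ \BCSPReconf[$_W$]} is \PSPACE-hard even when the underlying graph is a $(d,\lambda)$-expander with $\frac{\lambda}{d}\leq\rho$. For such $d,\lambda$ the quantity $\epsilon'$ of \cref{thm:sc} satisfies $\epsilon' \geq \epsilon^2 \cdot \frac{1}{1+\rho}\bigl(1-\frac{3\rho}{\epsilon}\bigr) \geq 0.9942$, so \prb{Gap$_{1,\epsilon'}$ \BCSPReconf[$_W$]} on these expanders is \PSPACE-hard too: the identity map is a valid reduction from \prb{Gap$_{1,0.9942}$ \BCSPReconf[$_W$]}, since \YES instances ($\val_G(\psi^\ini\reco\psi^\tar)\geq 1$) are unchanged and every \NO instance has $\val_G(\psi^\ini\reco\psi^\tar) < 0.9942 \leq \epsilon'$. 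Finally, because $\frac{\lambda}{d}\leq\rho\leq\frac{\epsilon}{3}$, \cref{thm:sc} applies and gives a polynomial-time gap-preserving reduction from \prb{Gap$_{1,\epsilon'}$ \BCSPReconf[$_W$]} to \prb{Gap$_{1,2-\epsilon}$ \SetCoverReconf} $=$ \prb{Gap$_{1,1.0029}$ \SetCoverReconf}, which is therefore \PSPACE-hard under RIH. (The whole chain preserves perfect completeness, as each reduction does.)

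The step I expect to be the crux is getting this parameter bookkeeping in the correct order: one must pin down $\rho$ (hence the expander quality) first, so that the multiplicative corrections $\frac{d}{d+\lambda}$ and $1-\frac{3\lambda}{\epsilon d}$ introduced by the expander mixing lemma inside \cref{thm:sc} cannot consume the tiny gap $0.9971^2 - 0.9942 \approx 8.4\times 10^{-6}$; everything else is a routine composition of two reductions.
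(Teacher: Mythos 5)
Your proposal is correct and follows essentially the same route as the paper: set $\epsilon = 0.9971$, invoke \cref{thm:amp} with a sufficiently small expander ratio (the paper simply hard-codes $\frac{\lambda}{d} \leq 10^{-10}$ rather than choosing $\rho$ by continuity), verify $\epsilon^2 \cdot \frac{d}{d+\lambda}\bigl(1-\frac{3\lambda}{\epsilon d}\bigr) > 0.9942$, and compose with \cref{thm:sc} to reach \prb{Gap$_{1,1.0029}$ \SetCoverReconf}. Your explicit identity-map step justifying that hardness at soundness $0.9942$ transfers to soundness $\epsilon' \geq 0.9942$ is a harmless elaboration of what the paper leaves implicit.
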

\begin{proof}
Let $\epsilon \coloneq 0.9971$.
By \cref{thm:amp},
\prb{Gap$_{1,0.9942}$ \BCSPReconf[$_W$]}
over $(d,\lambda)$-expander graphs with
$\frac{\lambda}{d} \leq 10^{-10} \leq \frac{\epsilon}{3}$
is \PSPACE-hard under RIH.
Since it holds that
\begin{align}
    \epsilon^2 \cdot \frac{d}{d+\lambda} \cdot \left( 1-\frac{3\lambda}{\epsilon d} \right)
    > 0.9942,
\end{align}
we can use \cref{thm:sc} to further reduce it to
\prb{Gap$_{1,2-\epsilon}$ \SetCoverReconf},
where $2-\epsilon = 1.0029$, thereby completing the proof.
\end{proof}

\begin{corollary}
\label{cor:ds}
Under RIH, \prb{Gap$_{1,1.0029}$ \DominatingSetReconf} is \PSPACE-hard.
\end{corollary}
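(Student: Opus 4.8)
The plan is to derive \cref{cor:ds} by composing \cref{cor:sc} with an exactly cost-preserving reduction from \MinmaxSetCoverReconf to \MinmaxDominatingSetReconf, built on the classical {Paz}--{Moran} transformation~\cite{paz1981non}. Given a set system $(\calU,\calF)$ --- where, after a trivial preprocessing, we may assume $\calU \neq \emptyset$ and that every $u \in \calU$ belongs to some member of $\calF$ --- construct the graph $G$ on vertex set $\{v_S : S \in \calF\} \cup \{v_u : u \in \calU\}$ in which the set-vertices $v_S$ form a clique and $v_S$ is adjacent to $v_u$ precisely when $u \in S$. Fix an arbitrary $S_u \in \calF$ with $u \in S_u$ for each $u$, and define the \emph{projection} of a vertex subset $D \subseteq V(G)$ to be the subfamily $\pi(D) \coloneq \{S : v_S \in D\} \cup \{S_u : v_u \in D\}$ of $\calF$. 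The reduction maps an instance $(\calU,\calF,\calC^\ini,\calC^\tar)$ to $(G,\calD^\ini,\calD^\tar)$ with $\calD^\star \coloneq \{v_S : S \in \calC^\star\}$ for $\star \in \{\ini,\tar\}$.

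First I would record two structural facts. (i) A subfamily $\calC \subseteq \calF$ covers $\calU$ if and only if $\{v_S : S \in \calC\}$ is a dominating set of $G$ --- the clique ensures a single set-vertex dominates all set-vertices, and an element-vertex $v_u$ can only be dominated by $v_u$ itself or by a $v_S$ with $u \in S$; consequently the domination number $\gamma(G)$ equals $\OPT(\calF)$, so the normalizations $\OPT(\calF)+1$ and $\gamma(G)+1$ agree. (ii) For every nonempty dominating set $D$ of $G$, the family $\pi(D)$ is a cover of $(\calU,\calF)$ with $|\pi(D)| \leq |D|$: each $v_u$ is dominated either by itself, forcing $S_u \in \pi(D)$, or by some $v_S$ with $u \in S$, forcing $S \in \pi(D)$.

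Next I would establish the reconfiguration correspondence. In the forward direction, a reconfiguration sequence of covers $(\calC^{(1)},\dots,\calC^{(T)})$ maps set-by-set to the sequence $(\{v_S : S \in \calC^{(t)}\})_t$ of dominating sets of $G$, which has identical cardinalities and consecutive symmetric differences of size at most one; hence $\cost_G(\calD^\ini \reco \calD^\tar) \leq \cost_\calF(\calC^\ini \reco \calC^\tar)$. In the reverse direction, given a reconfiguration sequence of dominating sets $(\calD^{(1)},\dots,\calD^{(T)})$ from $\calD^\ini$ to $\calD^\tar$, I would apply $\pi$ termwise: by (ii) each $\pi(\calD^{(t)})$ is a cover with $|\pi(\calD^{(t)})| \leq |\calD^{(t)}|$; adding or removing one vertex $v$ from $\calD^{(t)}$ changes $\pi$ by at most the single set ($S$ if $v=v_S$, or $S_v$ if $v=v_u$), so $|\pi(\calD^{(t)}) \triangle \pi(\calD^{(t+1)})| \leq 1$; and $\pi(\calD^\star) = \calC^\star$ since $\calD^\star$ consists only of set-vertices. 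Thus $(\pi(\calD^{(t)}))_t$ is a valid reconfiguration sequence of covers from $\calC^\ini$ to $\calC^\tar$ whose cost is at most that of $(\calD^{(t)})_t$ (using $\gamma(G)=\OPT(\calF)$), giving $\cost_\calF(\calC^\ini \reco \calC^\tar) \leq \cost_G(\calD^\ini \reco \calD^\tar)$. The two inequalities force equality, so the reduction is gap-preserving for every $1 \leq c \leq s$ and preserves perfect completeness; applying it to the instances produced by \cref{cor:sc} (with $c=1$ and $s=1.0029$) yields the corollary.

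The only delicate point is the reverse direction --- verifying that the projected sequence never leaves the cover family and that consecutive projected covers differ by at most one set --- which I expect to be the main, though essentially routine, obstacle, together with dispatching the degenerate inputs (empty universe, uncovered elements, empty sets in $\calF$) by the preprocessing step, noting that it does not affect $\OPT(\calF)$ and hence the gap.
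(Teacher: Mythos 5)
Your proposal is correct and takes essentially the same route as the paper: the Paz--Moran split-graph construction with the identical endpoint covers, the trivial forward direction, and a soundness argument that projects each intermediate dominating set back onto a cover and compares costs via $\gamma(G)=\OPT(\calF)$. The only deviation is your fixed-representative projection $\{S : v_S \in D\} \cup \{S_u : v_u \in D\}$, which makes the needed bounds $|\pi(D)| \leq |D|$ and $|\pi(D^{(t)}) \triangle \pi(D^{(t+1)})| \leq 1$ immediate, whereas the paper's projection includes every set containing an element vertex of $D^{(t)}$ and simply asserts these properties; your variant is a slightly cleaner way to justify the same step.
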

\begin{proof}
Note that \DominatingSetReconf \cite{haddadan2016complexity,suzuki2016reconfiguration}
and its optimization version are defined analogously to \SetCoverReconf (see also~\cite{ito2011complexity,ohsaka2023gap}).
It suffices to prove that for every number $\epsilon \in (0,1)$,
there exists a gap-preserving reduction from
\prb{Gap$_{1,1+\epsilon}$ \SetCoverReconf}
to
\prb{Gap$_{1,1+\epsilon}$ \DominatingSetReconf}.

The proof uses a Karp reduction from \prb{Set Cover} to \prb{Dominating Set}
due to \citet{paz1981non}.
Let $(\calC^\ini, \calC^\tar; \calU, \calF)$
be an instance of \MinmaxSetCoverReconf,
where $\calF = \{S_1, \ldots, S_m\}$ is a family of $m$ subsets of a universe $\calU$, and
$\calC^\ini$ and $\calC^\tar$ are minimum covers for $(\calU, \calF)$.
We first construct a graph $G = (V,E)$, where
\begin{align}
    V & \coloneq \calF \cup \calU, \\
    E & \coloneq {\calF \choose 2} \cup
    \Bigl\{ (e, S_j) \in \calU \times \calF \Bigm| e \in S_j \Bigr\}.
\end{align}
Here, $G$ is a split graph with a clique on $\calF$ and an independent set on $\calU$.
Denote by $\gamma(G)$ the minimum size of any dominating set of $G$.
Observe easily that any (minimum) cover for $(\calU,\calF)$
is a (minimum) dominating set of $G$; in particular, $\gamma(G) = \OPT(\calF)$.
Constructing $D^\ini \coloneq \calC^\ini$ and $D^\tar \coloneq \calC^\tar$,
we obtain an instance $(D^\ini, D^\tar; G)$ of \MinmaxDominatingSetReconf,
thereby completing the description of the reduction.

The completeness (i.e.,
$\cost_\calF(\calC^\ini \reco \calC^\tar) = 1$ implies
$\cost_G(D^\ini \reco D^\tar) = 1$)
is immediate because any reconfiguration sequence of covers from $\calC^\ini$ to $\calC^\tar$
is a reconfiguration sequence of dominating sets from $D^\ini$ to $D^\tar$.
We then show the soundness; i.e.,
$\cost_\calF(\calC^\ini \reco \calC^\tar) > 1+\epsilon$ implies
$\cost_G(D^\ini \reco D^\tar) > 1+\epsilon$.
Let $\scrD = ( D^{(1)}, \ldots, D^{(\TTT)} )$
be any reconfiguration sequence of dominating sets from $D^\ini$ to $D^\tar$ whose value is equal to
$\cost_G(D^\ini \reco D^\tar)$.
Construct then a sequence of covers,
$\scrC = ( \calC^{(1)}, \ldots, \calC^{(\TTT)} )$, where each $\calC^{(\ttt)}$ is defined as
\begin{align}
    \calC^{(\ttt)}
    \coloneq \Bigl(D^{(\ttt)} \cap \calF\Bigr)
    \cup \Bigl\{ S_j \in \calF \Bigm| \exists e \in D^{(\ttt)} \text{ s.t. } e \in S_j \Bigr\}.
\end{align}
Since $\scrC$ is a valid reconfiguration sequence from $\calC^\ini$ to $\calC^\tar$,
it holds that $\cost_\calF(\scrC) > 1+\epsilon$; in particular,
there exists some $\calC^{(\ttt)}$ such that $|\calC^{(\ttt)}| > (1+\epsilon)(\OPT(\calF)+1)$.
Since $|\calC^{(\ttt)}| \leq |D^{(\ttt)}|$, we obtain
\begin{align}
    \cost_G(D^\ini \reco D^\tar) = \cost_G(\scrD)
    \geq \frac{|D^{(\ttt)}|}{\gamma(G) + 1}
    > \frac{(1+\epsilon)(\OPT(\calF) + 1)}{\gamma(G) + 1}
    = 1+\epsilon,
\end{align}
as desired.
\end{proof}

\paragraph{Reduction.}
In the remainder of this section, we prove \cref{thm:sc}.
Our gap-preserving reduction from
\MaxminBCSPReconf to \MinmaxSetCoverReconf
is based on
that from \prb{Label Cover} to \prb{Set Cover} due to
\citet{lund1994hardness}.
We first introduce a helpful gadget for representing a binary constraint (e.g., \cite{gupta2008course}).
Let $\Sigma$ be a finite alphabet.
For each value $\alpha \in \Sigma$ and each subset $S \subseteq \Sigma$,
we define
$\bar{Q_\alpha} \subset \{0,1\}^\Sigma$ and
$Q_{S} \subset \{0,1\}^\Sigma$
as follows:
\begin{align}
    \bar{Q_\alpha}
    & \coloneq \Bigl\{ \vec{q} \in \{0,1\}^\Sigma \Bigm| q_\alpha = 0 \Bigr\}, \\
    Q_{S}
    & \coloneq \Bigl\{ \vec{q} \in \{0,1\}^\Sigma \Bigm| q_\alpha = 1 \text{ for some } \alpha \in S \Bigr\}.
\end{align}
A collection of $\bar{Q_\alpha}$'s and $Q_{S}$'s has the following property.

\begin{observation}
\label{obs:sc:Q}
    Let $\pi \subseteq \Sigma^2$ represent a binary constraint.
    For each value $\alpha, \beta \in \Sigma$, define
        $S_\alpha \coloneq \bar{Q_\alpha}$ and
        $T_\beta \coloneq Q_{\pi(\beta)}$,
        where
        $\pi(\beta) \coloneq \{ \alpha \in \Sigma \mid (\alpha,\beta) \in \pi \}$.
    Then, any subfamily $\calC$ of
    $\{S_\alpha \mid \alpha \in \Sigma\} \cup \{T_\beta \mid \beta \in \Sigma\}$
    covers $\{0,1\}^\Sigma$
    if and only if
    $\calC$ includes $S_\alpha$ and $T_\beta$ such that $(\alpha, \beta) \in \pi$.
\end{observation}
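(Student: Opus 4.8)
The plan is to verify both directions of the biconditional directly, treating the ``only if'' direction by contraposition; no heavy machinery is needed.

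For the ``if'' direction, suppose $\calC$ contains $S_\alpha = \bar{Q_\alpha}$ and $T_\beta = Q_{\pi(\beta)}$ for some $(\alpha,\beta) \in \pi$. Then $\alpha \in \pi(\beta)$, so every $\vec{q} \in \{0,1\}^\Sigma$ with $q_\alpha = 1$ lies in $Q_{\pi(\beta)} = T_\beta$, while every $\vec{q}$ with $q_\alpha = 0$ lies in $\bar{Q_\alpha} = S_\alpha$. Hence $S_\alpha \cup T_\beta = \{0,1\}^\Sigma$ and $\calC$ is a cover; this is a one-line check.

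For the ``only if'' direction I would argue by contraposition. Assume that for every $S_\alpha \in \calC$ and every $T_\beta \in \calC$ one has $(\alpha,\beta) \notin \pi$, and exhibit a point left uncovered. Write $A \coloneq \{\alpha \in \Sigma : S_\alpha \in \calC\}$ and $B \coloneq \{\beta \in \Sigma : T_\beta \in \calC\}$, and define $\vec{q}^\star \in \{0,1\}^\Sigma$ by $q^\star_\gamma \coloneq 1$ if $\gamma \in A$ and $q^\star_\gamma \coloneq 0$ otherwise. For each $\alpha \in A$ we have $q^\star_\alpha = 1$, so $\vec{q}^\star \notin \bar{Q_\alpha} = S_\alpha$. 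For each $\beta \in B$, the assumption gives $\pi(\beta) \cap A = \emptyset$, hence $q^\star_\gamma = 0$ for every $\gamma \in \pi(\beta)$, so $\vec{q}^\star \notin Q_{\pi(\beta)} = T_\beta$. Thus no member of $\calC$ contains $\vec{q}^\star$, so $\calC$ does not cover $\{0,1\}^\Sigma$, establishing the contrapositive.

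I do not expect a genuine obstacle. The only delicate point is the bookkeeping in the contrapositive: the uncovered point must simultaneously dodge all the $S_\alpha$'s with $\alpha \in A$ --- which forces its ``$1$''-coordinates to sit inside $A$ --- and all the $T_\beta$'s with $\beta \in B$, and it is precisely this second requirement that consumes the hypothesis $\pi(\beta) \cap A = \emptyset$. Taking $\vec{q}^\star$ to be the indicator vector of $A$ is the unique choice reconciling both constraints, and it works exactly because of the assumed absence of a satisfying pair $(\alpha,\beta)$ with $\alpha\in A$, $\beta\in B$.
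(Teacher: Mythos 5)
Your proof is correct and follows essentially the same argument as the paper: the ``if'' direction is the immediate observation that $S_\alpha \cup T_\beta = \{0,1\}^\Sigma$ when $(\alpha,\beta)\in\pi$, and the ``only if'' direction is the same contrapositive construction, taking the uncovered point to be the indicator vector of $\{\alpha : S_\alpha \in \calC\}$ and using the absence of a satisfying pair to keep it out of every $T_\beta \in \calC$. No gaps; your phrasing via $\pi(\beta)\cap A=\emptyset$ is just a direct restatement of the paper's contradiction step.
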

\begin{proof}
Since the ``if'' direction is obvious, we prove (the contraposition of) the ``only-if'' direction.
Suppose that $ (\alpha, \beta) \notin \pi$ for every $S_\alpha, T_\beta \in \calC$.
Then, we define a vector $\vec{q}^* \in \{0,1\}^\Sigma$ as follows:
\begin{align}
    q^*_\alpha \coloneq
    \begin{cases}
        1 & \text{if } S_\alpha \in \calC \\
        0 & \text{otherwise}
    \end{cases}
    \text{  for all } \alpha \in \Sigma.
\end{align}
We claim that $\calC$ does not cover $\vec{q}^*$.
On one hand, it holds that $\vec{q}^* \notin S_\alpha$ whenever $S_\alpha \in \calC$ since $q^*_\alpha = 1$.
On the other hand, if $\vec{q}^* \in T_\beta \in \calC$,
we have $q^*_\alpha = 1$ for some $\alpha \in \pi(\beta)$.
However, by definition of $\vec{q}^*$, $\calC$ must include $S_\alpha$,
implying that $S_\alpha, T_\beta \in \calC$ such that $(\alpha, \beta) \in \pi$,
a contradiction.
\end{proof}

We now describe the reduction.
Let $(\psi^\ini,\psi^\tar; G)$ be an instance of 
\prb{Gap$_{1,\epsilon'}$ \BCSPReconf[$_W$]}, where
$\epsilon' = \epsilon^2 \cdot \frac{d}{d+\lambda} \cdot \left(1-\frac{3 \lambda}{\epsilon d} \right)$
for $\epsilon \in (0,1)$, and
$G = (V,E,\Sigma,\Pi)$ is a constraint graph such that
$(V,E)$ is $(d,\lambda)$-expander such that $\frac{\lambda}{d} \leq \frac{\epsilon}{3}$,
$|\Sigma| = W$, and
$\psi^\ini$ and $\psi^\tar$ satisfy $G$.
Hereafter, we assume to be given an arbitrary order $\prec$ over $V$.
Create an instance of \MinmaxSetCoverReconf as follows.
Define $B \coloneq \{0,1\}^\Sigma$.
For each vertex $v \in V$ and each value $\alpha \in \Sigma$,
we define $S_{v,\alpha} \subset E \times B$ as
\begin{align}
    S_{v,\alpha} \coloneq
    \left( \bigcup_{e=(v,w) \in E : v \prec w} \{e\} \times \bar{Q_\alpha} \right)
    \cup
    \left( \bigcup_{e=(v,w) \in E : v \succ w} \{e\} \times Q_{\pi_e(\alpha)} \right),
\end{align}
where we define
\begin{align}
    \pi_e(\alpha) \coloneq \Bigl\{
        \beta \in \Sigma \Bigm| (\alpha,\beta) \in \pi_e
    \Bigr\}.
\end{align}
Then, a set system $(\calU, \calF)$ is defined as
\begin{align}
    \calU & \coloneq E \times B, \\
    \calF & \coloneq \Bigl\{ S_{v,\alpha} \Bigm| v \in V, \alpha \in \Sigma \Bigr\}.
\end{align}
Given a satisfying assignment $\psi \colon V \to \Sigma$ for $G$,
we associate it with a subfamily $\calC_\psi \subset \calF$ such that
\begin{align}
\label{eq:sc:Cpsi}
    \calC_\psi \coloneq \Bigl\{ S_{v,\psi(v)} \Bigm| v \in V \Bigr\},
\end{align}
which turns out to be a cover for $(\calU, \calF)$ because of \cref{obs:sc:Q}.
Since any cover for $(\calU, \calF)$ must include some
$S_{v,\alpha}$ for each $v \in V$
(as will be shown in \cref{clm:sc:Lv}),
$\calC_\psi$ must be a minimum cover;
i.e., $|\calC_\psi| = |V| = \OPT(\calF)$.
Constructing minimum covers $\calC^\ini$ from $\psi^\ini$ and $\calC^\tar$ from $\psi^\tar$
according to \cref{eq:sc:Cpsi},
we obtain an instance
$(\calC^\ini, \calC^\tar; \calU,\calF)$
of \MinmaxSetCoverReconf, finishing the description of the reduction.

\paragraph{Correctness.}
The completeness is immediate from the construction.
\begin{lemma}
\label{lem:sc:complete}
    If
    $\val_G(\psi^\ini \reco \psi^\tar) = 1$,
    then
    $\cost_\calF(\calC^\ini \reco \calC^\tar) \leq 1$.
\end{lemma}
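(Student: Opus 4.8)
The plan is to turn a value-$1$ reconfiguration sequence of assignments for $G$ into an explicit reconfiguration sequence of covers for $(\calU,\calF)$ in which every intermediate cover has size at most $\OPT(\calF)+1=|V|+1$; this gives $\cost_\calF(\calC^\ini\reco\calC^\tar)\le 1$ directly from the definition of $\cost$.

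First I would use the hypothesis $\val_G(\psi^\ini\reco\psi^\tar)=1$ to fix a reconfiguration sequence $\sqpsi=(\psi^{(1)},\dots,\psi^{(\TTT)})$ from $\psi^\ini$ to $\psi^\tar$ with $\val_G(\sqpsi)=1$; that is, every $\psi^{(\ttt)}$ satisfies all edges of $G$, and any two consecutive assignments differ in at most one vertex. For each $\ttt$ set $\calC^{(\ttt)}\coloneq\calC_{\psi^{(\ttt)}}=\{S_{v,\psi^{(\ttt)}(v)}\mid v\in V\}$. By the discussion following \cref{eq:sc:Cpsi} (which invokes \cref{obs:sc:Q}), each $\calC^{(\ttt)}$ is a cover of $(\calU,\calF)$, and by \cref{clm:sc:Lv} any cover must contain some $S_{v,\cdot}$ for every $v\in V$, so $\OPT(\calF)=|V|$ and each $\calC^{(\ttt)}$ is in fact a minimum cover of size $|V|$; moreover $\calC^{(1)}=\calC^\ini$ and $\calC^{(\TTT)}=\calC^\tar$.

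Next I would interpolate inside each step. Fix $\ttt\in[\TTT-1]$. If $\psi^{(\ttt)}=\psi^{(\ttt+1)}$ there is nothing to do; otherwise they differ in exactly one vertex $v$, say $\psi^{(\ttt)}(v)=\alpha\ne\beta=\psi^{(\ttt+1)}(v)$, so that $\calC^{(\ttt+1)}=(\calC^{(\ttt)}\setminus\{S_{v,\alpha}\})\cup\{S_{v,\beta}\}$. Transform $\calC^{(\ttt)}$ into $\calC^{(\ttt+1)}$ in two moves: first add $S_{v,\beta}$, reaching $\calC^{(\ttt)}\cup\{S_{v,\beta}\}$, which is a cover because it is a superset of the cover $\calC^{(\ttt)}$ and has size $|V|+1$; then remove $S_{v,\alpha}$, reaching exactly $\calC^{(\ttt+1)}$, which is a cover of size $|V|$ since $\psi^{(\ttt+1)}$ satisfies $G$. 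Each move changes the current family by a single set, so concatenating these two-move gadgets over all $\ttt\in[\TTT-1]$ yields a valid reconfiguration sequence $\scrC$ of covers from $\calC^\ini$ to $\calC^\tar$ whose largest cover has size $|V|+1=\OPT(\calF)+1$. Hence $\cost_\calF(\scrC)\le 1$, and therefore $\cost_\calF(\calC^\ini\reco\calC^\tar)\le 1$.

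There is essentially no obstacle here: the two facts used are that enlarging a cover keeps it a cover (immediate from monotonicity of the covering property) and that the normalizing denominator $\OPT(\calF)+1$ equals $|V|+1$, which is exactly the maximum cover size attained — the value $\OPT(\calF)=|V|$ being already recorded via \cref{clm:sc:Lv}. The only care needed is the routine bookkeeping that consecutive families in $\scrC$ differ in at most one set.
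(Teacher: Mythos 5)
Your proof is correct and follows essentially the same route as the paper: the paper reduces to the case where $\psi^\ini$ and $\psi^\tar$ differ in one vertex and passes through the intermediate cover $\calC^\ini \cup \calC^\tar$ of size $\OPT(\calF)+1$, which is exactly your add-then-remove gadget applied stepwise and concatenated along the value-$1$ reconfiguration sequence. The only difference is that you spell out the concatenation and the fact that $\OPT(\calF)=|V|$ explicitly, both of which the paper treats as already established in the surrounding text.
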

\begin{proof}
It suffices to consider the case that $\psi^\ini$ and $\psi^\tar$ differ in exactly one vertex, say, $v$.
Since $|\calC^\ini \cup \calC^\tar| = |\calC^\ini|+1$,
a sequence
$\scrC = ( \calC^\ini, \calC^\ini \cup \calC^\tar, \calC^\tar )$
is a valid reconfiguration sequence from $\calC^\ini$ to $\calC^\tar$;
we thus have
\begin{align}
    \cost_\calF(\scrC)
    = \frac{\max\Bigl\{
        |\calC^\ini|,
        |\calC^\ini \cup \calC^\tar|,
        |\calC^\tar|
    \Bigr\}}{\OPT(\calF)+1}
    = 1,
\end{align}
completing the proof.
\end{proof}

Subsequently, we would like to prove the soundness.
\begin{lemma}
\label{lem:sc:soundness}
If $|V| \geq \frac{d^2}{\lambda^2}$ and
$\val_G(\psi^\ini \reco \psi^\tar) <
    \epsilon'
$, then
$\cost_\calF(\calC^\ini \reco \calC^\tar) > 2 - \epsilon$.
\end{lemma}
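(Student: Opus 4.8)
The plan is to prove the contrapositive. Assume $\cost_\calF(\calC^\ini\reco\calC^\tar)\le 2-\epsilon$, witnessed by a reconfiguration sequence $\scrC=(\calC^{(1)},\dots,\calC^{(\TTT)})$ of covers from $\calC^\ini$ to $\calC^\tar$ in which every $\calC^{(\ttt)}$ has size at most $(2-\epsilon)(\OPT(\calF)+1)=(2-\epsilon)(|V|+1)$. From $\scrC$ I will build a genuine reconfiguration sequence of assignments for $G$ from $\psi^\ini$ to $\psi^\tar$ whose value is at least $\epsilon'$, contradicting $\val_G(\psi^\ini\reco\psi^\tar)<\epsilon'$.

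\emph{Decoding a cover.} For a cover $\calC$, call $v\in V$ \emph{good} if $\calC$ contains exactly one set of the form $S_{v,\alpha}$ and \emph{bad} otherwise, and write $V_g(\calC)$ for the good vertices; by \cref{clm:sc:Lv} every cover contains at least one $S_{v,\cdot}$ per vertex, so $|\calC|\ge |V_g(\calC)|+2(|V|-|V_g(\calC)|)$ and hence $|V_g(\calC)|\ge \epsilon|V|-(2-\epsilon)$; invoking $|V|\ge d^2/\lambda^2$ together with $\lambda/d\le\epsilon/3$ this gives $|V_g(\calC)|\ge \epsilon(1-\tfrac{2\lambda}{3d})|V|$. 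For good $v$ let $\phi_\calC(v)$ denote the unique value with $S_{v,\phi_\calC(v)}\in\calC$. For any edge $e=(v,w)$, the only members of $\calF$ meeting $\{e\}\times B$ are those indexed by $v$ and by $w$, so \cref{obs:sc:Q} forces $\calC$ to contain $S_{v,\alpha}$ and $S_{w,\beta}$ with $(\alpha,\beta)\in\pi_e$; when $v,w$ are both good this means $(\phi_\calC(v),\phi_\calC(w))\in\pi_e$, i.e.\ $\phi_\calC$ satisfies every edge internal to $V_g(\calC)$.

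\emph{Interpolating into a valid sequence.} Since the partial assignments $\phi_{\calC^{(\ttt)}}$ are undefined on bad vertices and vertices may leave and later re-enter $V_g$, I define full assignments $\phi^{(\ttt)}\colon V\to\Sigma$ by $\phi^{(1)}\coloneq\psi^\ini$ and the update rule: in the step $\calC^{(\ttt)}\to\calC^{(\ttt+1)}$ (which adds or removes one set $S_{u,\gamma}$), if $u$ turns from bad to good then set $\phi^{(\ttt+1)}(u)\coloneq\phi_{\calC^{(\ttt+1)}}(u)$ and keep the other coordinates, and otherwise set $\phi^{(\ttt+1)}\coloneq\phi^{(\ttt)}$. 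Because adding a set can only create bad vertices and deleting a vertex's last set is impossible (\cref{clm:sc:Lv}), each step changes the goodness of at most one vertex, so consecutive $\phi^{(\ttt)}$'s differ in at most one coordinate. A straightforward induction gives the invariant that $\phi^{(\ttt)}$ agrees with $\phi_{\calC^{(\ttt)}}$ on $V_g(\calC^{(\ttt)})$; applied to the endpoints $\calC^{(1)}=\calC^\ini$ and $\calC^{(\TTT)}=\calC^\tar$, which are minimum covers with every vertex good, this yields $\phi^{(1)}=\psi^\ini$ and $\phi^{(\TTT)}=\psi^\tar$, so $(\phi^{(1)},\dots,\phi^{(\TTT)})$ is a valid reconfiguration sequence from $\psi^\ini$ to $\psi^\tar$.

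\emph{The quantitative step and the main obstacle.} Fix $\ttt$, write $V_g\coloneq V_g(\calC^{(\ttt)})$ and $g\coloneq |V_g|/|V|$. By the invariant and the decoding step, $\phi^{(\ttt)}$ satisfies every edge internal to $V_g$, and the expander mixing lemma \cite{alon1988explicit} gives $e_G(V_g,V_g)\ge \tfrac{d|V_g|^2}{|V|}-\lambda|V_g|$, so $\val_G(\phi^{(\ttt)})\ge \tfrac{|V_g|^2}{|V|^2}-\tfrac{\lambda|V_g|}{d|V|}=g\bigl(g-\tfrac{\lambda}{d}\bigr)$; substituting $g\ge\epsilon(1-\tfrac{2\lambda}{3d})$ and simplifying with $\lambda/d\le\epsilon/3$ yields $\val_G(\phi^{(\ttt)})\ge \epsilon^2\cdot\tfrac{d}{d+\lambda}\cdot(1-\tfrac{3\lambda}{\epsilon d})=\epsilon'$. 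As this holds for all $\ttt$, $\val_G(\psi^\ini\reco\psi^\tar)\ge\epsilon'$, the desired contradiction. I expect the interpolation to be the crux: one has to verify simultaneously that carrying ``last-known-good'' values on bad vertices while repairing exactly one coordinate whenever a vertex re-enters $V_g$ keeps the sequence legal, lands on $\psi^\ini$ and $\psi^\tar$, and never lets the value drop below $\epsilon'$ at an intermediate step, since the decoded information survives only on the good set. After that the expander mixing lemma does the work that biregularity did for {Lund--Yannakakis}, leaving only elementary inequality-chasing and a routine check that self-loops (if present) are harmless.
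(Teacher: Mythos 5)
Your proof is correct and follows essentially the same route as the paper's: decode each cover $\calC^{(\ttt)}$ into an assignment via the vertices carrying exactly one set $S_{v,\alpha}$, lower-bound the size of that ``good'' set by roughly $\epsilon|V|$ using the cost bound $(2-\epsilon)(|V|+1)$, note via \cref{clm:sc:Lv} that all edges inside it are satisfied, and finish with the expander mixing lemma. The only cosmetic differences are that the paper assigns a fixed default value to multi-valued vertices (which makes the decoded sequence a valid reconfiguration sequence immediately, so no carry-the-last-good-value invariant is needed) and normalizes by the mixing-lemma bound $e_G(V,V)\le(d+\lambda)|V|$ rather than by $2|E|=d|V|$ as you do; your variant is equally valid and in fact marginally tighter.
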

Hereafter,
we can safely assume that $|V| \geq \frac{d^2}{\lambda^2}$ because otherwise,
we can solve \prb{Gap$_{1,\epsilon'}$ \BCSPReconf[$_W$]} in constant time.

Suppose we are given a reconfiguration sequence
$\scrC = ( \calC^{(1)}, \ldots, \calC^{(\TTT)} )$
from $\calC^\ini$ to $\calC^\tar$
such that $\cost_\calF(\scrC) \leq 2-\epsilon$.
For each vertex $v\in V$ and each integer $\ttt \in [\TTT]$,
we define $L_v^{(\ttt)}$ as
the set of $v$'s values assigned by $\calC^{(\ttt)}$; namely,
\begin{align}
\label{eq:sc:Lv}
    L_v^{(\ttt)} \coloneq \Bigl\{ \alpha \in \Sigma \Bigm| S_{v,\alpha} \in \calC^{(\ttt)} \Bigr\}.
\end{align}
Note that
\begin{align}
    \sum_{v \in V} |L_v^{(\ttt)}| = |\calC^{(\ttt)}|.
\end{align}
We claim that each edge $(v,w) \in E$ is satisfied by 
some $(\alpha,\beta) \in L_v^{(\ttt)} \times L_w^{(\ttt)}$.

\begin{claim}
\label{clm:sc:Lv}
    Suppose that a subfamily $\calC \subseteq \calF$ covers $\calU$, and
    $L_v$ for each $v \in V$ is defined by \cref{eq:sc:Lv}.
    Then, for each edge $e = (v,w) \in E$,
    there exists a pair of values $\alpha \in L_v$ and $\beta \in L_w$ such that
    $(\alpha, \beta) \in \pi_e$\textup{;}
    i.e., $e$ is satisfied by $(\alpha, \beta)$.
    In particular, $|L_v| \geq 1$ for all $v \in V$.
\end{claim}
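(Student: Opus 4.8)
The plan is to localize the covering condition to a single edge slice of the universe and invoke \cref{obs:sc:Q}. Fix an edge $e=(v,w)\in E$ and assume without loss of generality that $v\prec w$. Let $\calU_e\coloneq\{e\}\times B$ be the corresponding slice of the universe; since $\calC$ covers $\calU$, it covers $\calU_e$.

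First I would pin down which members of $\calF$ intersect $\calU_e$. By construction the first coordinates occurring in $S_{u,\gamma}$ are exactly the edges incident to $u$, so $S_{u,\gamma}\cap\calU_e=\emptyset$ whenever $u\notin\{v,w\}$. For the $\prec$-smaller endpoint $v$ we get $S_{v,\alpha}\cap\calU_e=\{e\}\times\bar{Q_\alpha}$ (from the first union in the definition of $S_{v,\alpha}$), and for $w$ we get $S_{w,\beta}\cap\calU_e=\{e\}\times Q_{\pi_e(\beta)}$ (from the second union in the definition of $S_{w,\beta}$, since $e$ is an edge $(w,v)$ with $w\succ v$). Hence, identifying $\calU_e$ with $B=\{0,1\}^\Sigma$, the family $\{\,X\cap\calU_e\mid X\in\calC,\ X\cap\calU_e\neq\emptyset\,\}$ equals $\{\bar{Q_\alpha}\mid\alpha\in L_v\}\cup\{Q_{\pi_e(\beta)}\mid\beta\in L_w\}$, and this family covers $\{0,1\}^\Sigma$.

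Now I would apply \cref{obs:sc:Q} with $\pi\coloneq\pi_e$, $S_\alpha\coloneq\bar{Q_\alpha}$, and $T_\beta\coloneq Q_{\pi_e(\beta)}$: a subfamily of $\{S_\alpha\}_{\alpha}\cup\{T_\beta\}_{\beta}$ covers $\{0,1\}^\Sigma$ if and only if it contains some $S_\alpha$ and $T_\beta$ with $(\alpha,\beta)\in\pi_e$. Applied to the family isolated above, this yields $\alpha\in L_v$ and $\beta\in L_w$ with $(\alpha,\beta)\in\pi_e$, i.e.\ $(\alpha,\beta)$ satisfies $e$. For the final assertion, every vertex $v$ is incident to at least one edge (the underlying graph is $d$-regular with $d\geq 1$), so applying the first part to such an edge produces some $\alpha\in L_v$, whence $|L_v|\geq 1$.

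The only step needing care is the orientation bookkeeping: one must check that the $\Sigma$-coordinate of the slice $\calU_e\cong\{0,1\}^\Sigma$ is indexed consistently (by the $\prec$-smaller endpoint's alphabet), so that on the slice the roles of $\bar{Q_\alpha}$ and $Q_{\pi_e(\cdot)}$ match those of $S_\alpha$ and $T_\beta$ in \cref{obs:sc:Q}; the rest is a direct unwinding of the definitions of $S_{v,\alpha}$ and $\calU$.
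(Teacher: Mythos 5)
Your proposal is correct and follows essentially the same route as the paper's proof: restrict the covering condition to the slice $\{e\}\times B$, observe that only $S_{v,\alpha}$ with $\alpha\in L_v$ and $S_{w,\beta}$ with $\beta\in L_w$ meet this slice (contributing $\bar{Q_\alpha}$ and $Q_{\pi_e(\beta)}$ respectively, by the order $v\prec w$), and then invoke \cref{obs:sc:Q}. Your explicit note on the orientation bookkeeping and the short argument for $|L_v|\geq 1$ via $d$-regularity are fine additions but do not change the argument.
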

\begin{proof}
For an edge $e=(v,w) \in E$,
assume that $v \prec w$ without loss of generality.
Observe that $\calC$ particularly covers $\{e\} \times B$.
Since any $S_{x,\alpha}$ with $x \notin \{v,w\}$ is disjoint to $\{e\} \times B$,
\begin{align}
    \left(\bigcup_{\alpha \in L_v} S_{v,\alpha}\right)
    \cup \left(\bigcup_{\beta \in L_w} S_{w,\beta}\right)
    \subset \calC
\end{align}
must cover $\{e\} \times B$.
In particular, it holds that
\begin{align}
\begin{aligned}
    \{e\} \times B
    & = \left[
        \left(
            \bigcup_{\alpha \in L_v} S_{v,\alpha}
        \right) \cup \left(
            \bigcup_{\beta \in L_w} S_{w,\beta}
        \right)
        \right] \cap (\{e\} \times B) \\
    & = \left(
        \bigcup_{\alpha \in L_v} S_{v,\alpha} \cap \Bigl(\{e\} \times B\Bigr)
        \right) \cup
        \left(
        \bigcup_{\beta \in L_w} S_{w,\beta} \cap \Bigl(\{e\} \times B\Bigr)
        \right) \\
    & = \{e\} \times \left( \bigcup_{\alpha \in L_v} \bar{Q_\alpha} \cup \bigcup_{\beta \in L_w} Q_{\pi_e(\beta)} \right).
\end{aligned}
\end{align}
Consequently, we have
\begin{align}
    \bigcup_{\alpha \in L_v} \bar{Q_\alpha} \cup \bigcup_{\beta \in L_w} Q_{\pi_e(\beta)} = B.
\end{align}
By \cref{obs:sc:Q}, there must be a pair of $\alpha \in L_v$ and $\beta \in L_w$ such that
$(\alpha, \beta) \in \pi_e$, as desired.
\end{proof}

Then, we extract from $L_v^{(\ttt)}$'s an assignment
$\psi^{(\ttt)} \colon V \to \Sigma$ as follows:
\begin{align}
    \psi^{(\ttt)}(v) \coloneq 
    \begin{cases}
        \text{unique } \alpha \text{ in } L_v^{(\ttt)} & \text{if } |L_v^{(\ttt)}| = 1, \\
        \aaa & \text{otherwise},
    \end{cases}
\end{align}
where $\aaa$ is any default value of $\Sigma$.
Observe that
$\psi^{(1)} = \psi^\ini$ and $\psi^{(\TTT)} = \psi^\tar$.
By assumption, $\calC^{(\ttt)}$ and $\calC^{(\ttt+1)}$ differ in exactly one set,
implying that $L_v^{(\ttt)} \neq L_v^{(\ttt+1)}$ for at most one vertex $v$; i.e.,
$\psi^{(\ttt)}$ and $\psi^{(\ttt+1)}$ differ in at most one vertex.
Therefore, $( \psi^{(1)}, \ldots, \psi^{(\TTT)} )$
is a valid reconfiguration sequence from $\psi^\ini$ to $\psi^\tar$.
Since edge $e=(v,w)$ is satisfied by $\psi^{(\ttt)}$ whenever
$|L_v^{(\ttt)}| = |L_w^{(\ttt)}| = 1$ owing to \cref{clm:sc:Lv},
we wish to estimate the number of such edges.
We define $S^{(\ttt)}$ as the set of vertices $v$ such that $L_v^{(\ttt)}$ is a singleton; namely,
\begin{align}
    S^{(\ttt)} \coloneq \Bigl\{ v \in V \Bigm| |L_v^{(\ttt)}| = 1 \Bigr\}.
\end{align}
Below, we claim that $|S^{(\ttt)}|$ is approximately greater than $\epsilon |V|$.

\begin{claim}
\label{clm:sc:Si}
    If $|V| \geq \frac{d^2}{\lambda^2}$,  then
    $|S^{(\ttt)}| \geq \left(1-\frac{\lambda}{d}\right) \cdot \epsilon |V|$
    for all $\ttt \in [\TTT]$.
\end{claim}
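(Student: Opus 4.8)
The plan is to prove \cref{clm:sc:Si} by a direct counting argument: bound $|\calC^{(\ttt)}|$ from above via the cost hypothesis $\cost_\calF(\scrC) \le 2-\epsilon$, bound it from below via $|L_v^{(\ttt)}| \ge 1$ everywhere and $|L_v^{(\ttt)}| \ge 2$ off $S^{(\ttt)}$, and then absorb the resulting additive slack using the size hypothesis $|V| \ge d^2/\lambda^2$. First I would recall from the construction that $\OPT(\calF) = |V|$, so that the standing assumption $\cost_\calF(\scrC) \le 2-\epsilon$ yields $|\calC^{(\ttt)}| \le (2-\epsilon)(\OPT(\calF)+1) = (2-\epsilon)(|V|+1)$ for every $\ttt \in [\TTT]$.

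Next I would use \cref{clm:sc:Lv}, which guarantees $|L_v^{(\ttt)}| \ge 1$ for all $v \in V$, together with the definition of $S^{(\ttt)}$, which forces $|L_v^{(\ttt)}| \ge 2$ whenever $v \notin S^{(\ttt)}$. Since $\sum_{v \in V} |L_v^{(\ttt)}| = |\calC^{(\ttt)}|$, these two facts give
\begin{align*}
    |\calC^{(\ttt)}| \;\ge\; |S^{(\ttt)}| + 2\bigl(|V| - |S^{(\ttt)}|\bigr) \;=\; 2|V| - |S^{(\ttt)}|,
\end{align*}
and combining with the upper bound on $|\calC^{(\ttt)}|$ from the previous step,
\begin{align*}
    |S^{(\ttt)}| \;\ge\; 2|V| - |\calC^{(\ttt)}| \;\ge\; 2|V| - (2-\epsilon)(|V|+1) \;=\; \epsilon|V| - (2-\epsilon).
\end{align*}

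It then remains to check that $\epsilon|V| - (2-\epsilon) \ge \bigl(1-\tfrac{\lambda}{d}\bigr)\epsilon|V|$, i.e.\ that $\tfrac{\lambda}{d}\,\epsilon|V| \ge 2-\epsilon$. Plugging in $|V| \ge \tfrac{d^2}{\lambda^2}$ reduces this to $\epsilon\,\tfrac{d}{\lambda} \ge 2-\epsilon$, which holds because $\tfrac{\lambda}{d} \le \tfrac{\epsilon}{3}$ gives $\epsilon\,\tfrac{d}{\lambda} \ge 3 > 2 > 2-\epsilon$. I expect no genuine obstacle here; the only point requiring care is tracking the additive $(2-\epsilon)$ term and confirming that the two hypotheses $|V| \ge d^2/\lambda^2$ and $\lambda/d \le \epsilon/3$ together suffice to kill it, i.e.\ verifying this short chain of parameter inequalities. (The expander mixing lemma is not needed for this claim; it enters only afterwards, when lower-bounding the number of edges internal to $S^{(\ttt)}$.)
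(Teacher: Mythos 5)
Your proposal is correct and follows essentially the same route as the paper: the same counting bound $|\calC^{(\ttt)}| \geq 2|V| - |S^{(\ttt)}|$ via \cref{clm:sc:Lv} combined with $|\calC^{(\ttt)}| \leq (2-\epsilon)(\OPT(\calF)+1)$ and $\OPT(\calF)=|V|$, yielding $|S^{(\ttt)}| \geq \epsilon|V| - (2-\epsilon)$. The only (immaterial) difference is the final parameter check: you verify $\frac{\lambda}{d}\epsilon|V| \geq 2-\epsilon$ directly from $|V| \geq \frac{d^2}{\lambda^2}$ and $\frac{\lambda}{d} \leq \frac{\epsilon}{3}$, whereas the paper routes the same two hypotheses through the intermediate quantity $\sqrt{|V|} \geq \frac{d}{\lambda} \geq \frac{3}{\epsilon}$.
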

\begin{proof}
Observe first that
$|S^{(\ttt)}| \geq \epsilon |V| - (2-\epsilon)$;
because otherwise, \cref{clm:sc:Lv} implies
\begin{align}
\begin{aligned}
    |\calC^{(\ttt)}|
    & = \sum_{v \in V} |L_v^{(\ttt)}| \\
    & \geq \underbrace{1 \cdot |S^{(\ttt)}|}_{\text{contribution of } |L_v^{(\ttt)}|=1}
        + \underbrace{2 \cdot (|V| - |S^{(\ttt)}|)}_{\text{contribution of } |L_v^{(\ttt)}| \geq 2} \\
    & = 2|V| - |S^{(\ttt)}| \\
    & > (2 - \epsilon ) \cdot (|V| + 1),
\end{aligned}
\end{align}
contradicting the assumption that $\cost_{\calF}(\calC^{(\ttt)}) \leq 2-\epsilon$.
Since 
$\sqrt{|V|} \geq \frac{d}{\lambda} \geq \frac{3}{\epsilon} $ by assumption, it holds that
\begin{align}
\begin{aligned}
    |S^{(\ttt)}|
    \geq \epsilon|V| - (2-\epsilon)
    \geq \left(1-\frac{2}{\epsilon|V|}\right) \cdot \epsilon |V|
    \geq \left(1-\frac{1}{\sqrt{|V|}}\right) \cdot  \epsilon |V|
    \geq \left(1-\frac{\lambda}{d}\right) \cdot \epsilon |V|,
\end{aligned}
\end{align}
as desired.
\end{proof}

The proof of \cref{lem:sc:soundness} and thus \cref{thm:sc} is obtained by
applying \cref{clm:sc:Lv,clm:sc:Si} and
the expander mixing lemma \cite{alon1988explicit},
which states that
$e_G(S,T)$ of an expander graph $G$ is concentrated around
its expectation if $G$ were a \emph{random} $d$-regular graph.
\begin{lemma}[Expander mixing lemma \cite{alon1988explicit}]
\label{lem:expander-mixing}
    Let $G$ be a $(d,\lambda)$-expander graph over $n$ vertices.
    Then, for any two sets $S$ and $T$ of vertices,
    it holds that
    \begin{align}
        \left|e_G(S,T) -  \frac{d|S| \cdot |T|}{n} \right| \leq \lambda \sqrt{|S|\cdot|T|}.
    \end{align}
\end{lemma}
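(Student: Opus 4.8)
The plan is to prove the expander mixing lemma by the standard spectral argument: express the edge count $e_G(S,T)$ as a bilinear form in the adjacency matrix and evaluate it after splitting the indicator vectors into the component along the all-ones vector and the component orthogonal to it.

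First I would identify $e_G(S,T)$, in the sense defined in \cref{sec:pre}, with the bilinear form $\vec{1}_S^{\top}\mat{A}_G\,\vec{1}_T=\sum_{v\in S,\,w\in T}(\mat{A}_G)_{vw}$, where $\vec{1}_S,\vec{1}_T\in\{0,1\}^{V}$ are the indicator vectors of $S$ and $T$; the double-counting of edges inside $S\cap T$ is built into this form. Next I would write $\vec{1}_S=\tfrac{|S|}{n}\vec{1}+\vec{s}$ and $\vec{1}_T=\tfrac{|T|}{n}\vec{1}+\vec{t}$, where $\vec{s}\perp\vec{1}$ and $\vec{t}\perp\vec{1}$ are the respective projections onto $\vec{1}^{\perp}$. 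Because $G$ is $d$-regular we have $\mat{A}_G\vec{1}=d\,\vec{1}$, so when the product is expanded the two mixed terms carry a factor $\vec{1}^{\top}\vec{s}=0$ or $\vec{1}^{\top}\vec{t}=0$ and hence vanish, leaving
\begin{equation}
    e_G(S,T)=\frac{|S|\cdot|T|}{n^{2}}\cdot\vec{1}^{\top}\mat{A}_G\vec{1}+\vec{s}^{\top}\mat{A}_G\vec{t}=\frac{d\,|S|\cdot|T|}{n}+\vec{s}^{\top}\mat{A}_G\vec{t}.
\end{equation}

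It then remains to bound the error term $\bigl|\vec{s}^{\top}\mat{A}_G\vec{t}\bigr|$. By the Cauchy--Schwarz inequality this is at most $\|\vec{s}\|_2\cdot\|\mat{A}_G\vec{t}\|_2$, and since $\vec{t}\perp\vec{1}$ the definition of $\lambda(G)$ gives $\|\mat{A}_G\vec{t}\|_2\leq\lambda(G)\cdot\|\vec{t}\|_2\leq\lambda\cdot\|\vec{t}\|_2$. A Pythagorean computation then yields $\|\vec{s}\|_2^{2}=\|\vec{1}_S\|_2^{2}-\tfrac{|S|^{2}}{n}=|S|-\tfrac{|S|^{2}}{n}\leq|S|$, and symmetrically $\|\vec{t}\|_2^{2}\leq|T|$, so $\bigl|\vec{s}^{\top}\mat{A}_G\vec{t}\bigr|\leq\lambda\sqrt{|S|\cdot|T|}$, which is precisely the asserted bound.

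I do not expect a real obstacle: the argument is a few lines of linear algebra. The only points that need care are that the operator-norm estimate must be applied strictly to vectors orthogonal to $\vec{1}$ --- this is exactly where $d$-regularity and the definition of $\lambda(G)$ as the second-largest eigenvalue in absolute value come in --- and, should the underlying graph be permitted to have self-loops, a quick check that the convention for $e_G(S,T)$ still matches $\vec{1}_S^{\top}\mat{A}_G\vec{1}_T$ up to a contribution that does not affect the bound. This is why \cref{lem:expander-mixing} is merely quoted from \cite{alon1988explicit} rather than reproved here.
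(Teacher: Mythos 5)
Your proof is correct: it is the standard Alon--Chung spectral argument (decompose the indicator vectors along $\vec{1}$ and its orthogonal complement, use $d$-regularity for the main term, and Cauchy--Schwarz together with the definition of $\lambda(G)$ on $\vec{1}^{\perp}$ for the error term), and the norm computations $\|\vec{s}\|_2^2 \leq |S|$, $\|\vec{t}\|_2^2 \leq |T|$ are right. The paper does not prove \cref{lem:expander-mixing} at all --- it is quoted directly from \cite{alon1988explicit} --- so there is no in-paper argument to compare against; your concluding remark about checking the self-loop/multi-edge convention for $e_G(S,T)$ versus $\vec{1}_S^{\top}\mat{A}_G\vec{1}_T$ is the only point needing care, and it does not affect the bound.
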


\begin{proof}[Proof of \cref{lem:sc:soundness}]
By applying \cref{clm:sc:Si} and the expander mixing lemma \cite{alon1988explicit}
on $S^{(\ttt)}$ for $\ttt \in [\TTT]$,
we derive
\begin{align}
\begin{aligned}
    e_G(S^{(\ttt)},S^{(\ttt)})
    & \geq \frac{d \cdot |S^{(\ttt)}|^2}{|V|} - \lambda \cdot |S^{(\ttt)}| \\
    & \overset{\heartsuit}{\geq} \frac{d \cdot \epsilon^2 \left(1-\frac{\lambda}{d}\right)^2 |V|^2}{|V|}
        - \lambda \cdot \epsilon \left(1-\frac{\lambda}{d}\right) |V| \\
    & = \epsilon^2 \cdot \left(1-\frac{\lambda}{d}\right) \cdot
        \left(1-\frac{\lambda}{d}-\frac{\lambda}{\epsilon d}\right) \cdot d|V| \\
    & \underbrace{\geq}_{\text{by } \epsilon \in (0,1)} \epsilon^2 \cdot \left(1-\frac{\lambda}{\epsilon d}\right) \cdot
        \left(1-\frac{2\lambda}{\epsilon d}\right) \cdot d|V| \\
    & \geq \epsilon^2 \cdot \left(1-\frac{3 \lambda}{\epsilon d} \right) \cdot d|V|,
\end{aligned}
\end{align}
where the second inequality ($\heartsuit$) holds for the following reason:
$\frac{d \cdot |S^{(\ttt)}|^2}{|V|} - \lambda \cdot |S^{(\ttt)}|$
as a quadratic polynomial in $|S^{(\ttt)}|$ is monotonically increasing
when $|S^{(\ttt)}| \geq \frac{\lambda}{2d} |V|$;
besides, we have
$|S^{(\ttt)}| \geq \left(1-\frac{\lambda}{d}\right) \cdot \epsilon |V| \geq \frac{2 \lambda}{d} |V|$
by \cref{clm:sc:Si} and the assumption that $\frac{\lambda}{d} \leq \frac{\epsilon}{3}$. 
Using the expander mixing lemma again on $V$ implies
\begin{align}
    e_G(V,V) \leq \frac{d \cdot |V|^2}{|V|} + \lambda \cdot |V| = (d + \lambda) |V|.
\end{align}
Recalling that
$e_G(S,S)$ is twice the number of edges within $G[S]$ for any $S \subseteq V$,
we now estimate the value of $\psi^{(\ttt)}$ as
\begin{align}
    \val_G(\psi^{(\ttt)}) 
    \underbrace{\geq}_{\text{\cref{clm:sc:Lv}}} \frac{e_G(S^{(\ttt)},S^{(\ttt)})}{e_G(V,V)}
    \geq \frac{\epsilon^2 \cdot \left(1-\frac{3 \lambda}{\epsilon d} \right) \cdot d|V|}{(d+\lambda)|V|}
    = \underbrace{\epsilon^2 \cdot \frac{d}{d+\lambda} \cdot
        \left(1-\frac{3 \lambda}{\epsilon d} \right)}_{= \epsilon'}.
\end{align}
Consider finally transforming $\psi^\ini$ into $\psi^\tar$.
Since $\psi^{(\ttt)}$ and $\psi^{(\ttt+1)}$ differ in at most one vertex, it holds that
\begin{align}
    \val_G(\psi^{(\ttt)} \reco \psi^{(\ttt+1)})
    = \min\Bigl\{ \val_G(\psi^{(\ttt)}), \val_G(\psi^{(\ttt+1)}) \Bigr\}.
\end{align}
Therefore, we obtain
\begin{align}
    \val_G(\psi^\ini \reco \psi^\tar)
    \geq \min_{1 \leq \ttt \leq \TTT-1} \val_G(\psi^{(\ttt)} \reco \psi^{(\ttt+1)})
    = \min_{1 \leq \ttt \leq \TTT} \val_G(\psi^{(\ttt)})
    \geq \epsilon',
\end{align}
completing the proof.
\end{proof}

\section{\NP-hardness of Approximation for \MaxminBCSPReconf}
\label{sec:NP}

We show \NP-hardness of approximating \MaxminBCSPReconf
within a factor better than $\frac{3}{4}$.
For every numbers $0 \leq s \leq c \leq 1$,
\prb{Gap$_{c,s}$ \BCSP}
is defined as a decision problem asking
for a constraint graph $G = (V,E,\Sigma,\Pi)$,
whether
$\exists \psi \colon V \to \Sigma, \val_G(\psi) \geq c$ or
$\forall \psi \colon V \to \Sigma, \val_G(\psi) < s$.

\begin{theorem}
\label{thm:NP-BCSPR}
    For every number $\epsilon \in (0,1)$ and
    positive integer $W \geq 2$,
    there exists a gap-preserving reduction from 
    \prb{Gap$_{1,1-\epsilon}$ \BCSP[$_W$]}
    to
    \prb{Gap$_{1,1-\frac{\epsilon}{4}}$ \BCSPReconf[$_{W'}$]},
    where $W' = \left(\frac{W(W+1)}{2}\right)^4$.
    In particular,
    for every number $\epsilon \in (0,1)$,
    there exists a positive integer $W \in \bbN$ such that
    \prb{Gap$_{1,\frac{3}{4}+\epsilon}$ \BCSPReconf[$_W$]} is \NP-hard.
\end{theorem}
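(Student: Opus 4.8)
The plan is to deduce the ``in particular'' statement from the gap-preserving reduction asserted in the first sentence of the theorem, combined with the classical \NP-hardness of gap \BCSP. By the PCP theorem \cite{arora1998probabilistic,arora1998proof} and the parallel repetition theorem \cite{raz1998parallel}, for every $\delta \in (0,1)$ there is $W \in \bbN$ for which \prb{Gap$_{1,\delta}$ \BCSP[$_W$]} is \NP-hard. Given $\epsilon \in (0,\tfrac{1}{4})$ (for $\epsilon \geq \tfrac{1}{4}$ the soundness $\tfrac{3}{4}+\epsilon \geq 1$ and the claim is vacuous), we instantiate this with $\delta \coloneq 4\epsilon$ and feed the resulting hard instance through the reduction with its parameter set to $1-4\epsilon$; since $1-\tfrac{1-4\epsilon}{4} = \tfrac{3}{4}+\epsilon$, the output is an \NP-hard instance of \prb{Gap$_{1,\frac{3}{4}+\epsilon}$ \BCSPReconf[$_{W'}$]} with $W' = \left(\tfrac{W(W+1)}{2}\right)^4$. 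So everything reduces to establishing the reduction itself.

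For the reduction, let $G = (V,E,\Sigma,\Pi)$, $|\Sigma| = W$, be the given \BCSP instance. As in the powering step we pass to the squared alphabet $\Sigma'$ ($|\Sigma'| = \tfrac{W(W+1)}{2}$) and reuse its notion of consistency. The output constraint graph $G'$ lives on (essentially) the vertex set $V$ with per-vertex alphabet $(\Sigma')^4$; the four coordinates drive a four-stage ``pipeline,'' and $G'$ carries, for every original edge $e$ and every stage $j \in \{1,2,3,4\}$, a constraint that checks a copy of $\pi_e$ on the stage-$j$ coordinates whenever stage $j$ is \emph{active} and is vacuous otherwise, the check being made ``over all possible worlds'' of the squared values exactly as in condition \ref{enm:amp:pow:C3}. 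A small bookkeeping gadget (necessarily making the underlying graph non-biregular) makes $\psi'^\ini$ and $\psi'^\tar$ distinct while forcing every reconfiguration between them to advance each of the four stages into the active state in turn (at most one active at a time) and then deactivate them. The point is that $\psi'^\ini$ and $\psi'^\tar$ have all stages inactive, so they satisfy $G'$ whether or not $G$ is satisfiable and are constructible from $G$ alone -- the reduction is a legitimate mapping and all the hardness sits in the \emph{path}. The squared alphabet is exactly what allows the pipeline to be advanced (and later retracted) one vertex-coordinate at a time while keeping every just-activated constraint satisfied in the completeness case, mirroring the argument of \cref{lem:amp:pow:complete}. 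After a routine padding of the gadget's constraints down to lower order, $G'$ has $4|E|$ constraints, one per pair $(e,j)$, and alphabet size $\left(\tfrac{W(W+1)}{2}\right)^4$.

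Completeness and soundness are then short. If $G$ is satisfied by $\phi \colon V \to \Sigma$, we reconfigure $\psi'^\ini$ to $\psi'^\tar$ by first moving every working coordinate to $\{\phi(v)\}$ one vertex at a time (all stage constraints are dormant), then activating the four stages in turn -- at each intermediate step the active stage's constraints are checked on squared values that all \emph{contain} $\phi$, hence pass, exactly as in \cref{lem:amp:pow:complete} -- then running the gadget, and finally reversing; so $\val_{G'}(\psi'^\ini \reco \psi'^\tar) = 1$. Conversely, if $\val_G(\phi) < 1-\epsilon$ for every $\phi$, then along any reconfiguration from $\psi'^\ini$ to $\psi'^\tar$ there is a moment at which some stage $j$ is active; decoding its coordinates to a map $\phi \colon V \to \Sigma$ (using, as in \cref{lem:amp:pow:complete}, that a ``possibly satisfied'' squared value at the moment of a check behaves like a single $\Sigma$-value) gives $\val_G(\phi) < 1-\epsilon$, so more than $\epsilon|E|$ of the stage-$j$ constraints -- hence more than an $\tfrac{\epsilon}{4}$ fraction of all $4|E|$ constraints of $G'$ -- are violated there, i.e.\ $\val_{G'}(\psi'^\ini \reco \psi'^\tar) < 1-\tfrac{\epsilon}{4}$.

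The main obstacle will be the internal tension in the construction: the output instance must have \emph{explicitly computable}, trivially satisfying endpoints $\psi'^\ini, \psi'^\tar$ (so $G'$ cannot just be $G$), yet \emph{every} reconfiguration between them must be coerced into transiently exposing a complete $\Sigma$-assignment of $G$ with all original constraints simultaneously checked, whose defects are inherited by $G'$ with only a bounded ($\tfrac{1}{4}$) dilution -- and all of this under \emph{perfect} completeness. It is the last requirement that forces the alphabet-squaring trick into both the pipeline schedule (for completeness) and the decoding step (for soundness), and the amount of bookkeeping needed to force a full traversal with no shortcut is what pins down the four stages, hence the exponent $4$ in $W'$ and the factor $4$ in the soundness. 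A secondary technical point is to make the gadget that selects the active stage, together with the ``at most one stage active'' invariant, realizable with binary constraints over the squared alphabet without itself leaking any satisfying assignment of $G$ into the endpoints.
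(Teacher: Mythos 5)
Your derivation of the ``in particular'' statement from the reduction (PCP theorem plus parallel repetition giving \NP-hardness of \prb{Gap$_{1,4\epsilon}$ \BCSP[$_W$]}, then the arithmetic $1-\frac{1-4\epsilon}{4}=\frac{3}{4}+\epsilon$) matches the paper. The gap is in the reduction itself: the entire burden of the argument rests on the ``small bookkeeping gadget'' that is supposed to force every reconfiguration sequence from $\psi'^\ini$ to $\psi'^\tar$ to pass through a global snapshot in which a full stage of copies of the original constraints is simultaneously active, and you never specify this gadget or prove it has that property. Worse, the architecture you sketch cannot have it: if stage-activity is encoded in per-vertex coordinates and the gadget's own constraints are padded down to a lower-order fraction of $E(G')$, then a reconfiguration sequence may simply violate \emph{all} gadget constraints at every step (cost $o(1)<\frac{\epsilon}{4}$), keep every stage constraint vacuous throughout, and walk the bookkeeping coordinates from $\psi'^\ini$ to $\psi'^\tar$ without ever exposing an assignment of $G$; your soundness claim that ``there is a moment at which some stage $j$ is active'' and that its violated edges are inherited by $G'$ therefore does not follow. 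In a Maxmin reconfiguration problem the adversary only needs to stay below the violation threshold, so any forcing mechanism confined to a negligible fraction of constraints is toothless; moreover, since only one vertex changes per step, different vertices can sit in different ``stages,'' so even an ``active'' moment need not yield a coherent decodable assignment with its defects reflected in simultaneously checking constraints.

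The paper resolves exactly this tension differently, in two steps. First (\cref{lem:BCSP-4CSPR}), it adds two fresh vertices $x,y$ that appear in \emph{every} constraint, turning each edge $(v,w)$ into a $4$-ary constraint ``$\pi_{(v,w)}$ is satisfied \emph{or} $x=y$,'' with endpoints the all-$\aaa$ and all-$\bbb$ assignments: the switch is woven into all $|E|$ constraints rather than into a lower-order gadget, so at the unavoidable step where $x\neq y$ every constraint simultaneously collapses to the original one, giving perfect completeness and no soundness loss at arity $4$. Second (\cref{lem:4CSPR-BCSPR}), arity is reduced from $4$ to $2$ via the clause--variable bipartite incidence graph with the alphabet squaring trick, which is where the factor $4$ in the soundness ($1-\epsilon \mapsto 1-\frac{\epsilon}{4}$) and the exponent $4$ in $W'=\left(\frac{W(W+1)}{2}\right)^4$ actually come from --- not from any count of ``stages'' in a pipeline. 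To repair your proposal you would essentially have to reinvent this global-switch-plus-arity-reduction structure (or another mechanism in which the forcing constraints constitute a constant fraction of all constraints), and prove the forcing and decoding claims you currently assert.
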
\noindent
The last statement follows from
the parallel repetition theorem \cite{raz1998parallel} and the PCP theorem~\cite{arora1998probabilistic,arora1998proof}.
Our proof consists of
a gap-preserving reduction
from \prb{Max \BCSP} to \MaxminQCSPReconf (\cref{lem:BCSP-4CSPR})
followed by that
from \MaxminQCSPReconf to \MaxminBCSPReconf (\cref{lem:4CSPR-BCSPR}).

\begin{remark}
    Since the underlying graph is no longer biregular or expander,
    \cref{thm:NP-BCSPR} may not be applied to \cref{thm:sc}.
    Besides, though \cref{thm:NP-BCSPR} along with \cite{ohsaka2023gap}
    implies \NP-hardness of approximation for \MinmaxSetCoverReconf,
    \cite{ohsaka2023gap} entails the degree reduction step and thus would
    result in only a tiny inapproximability factor.
\end{remark}

We first reduce from \prb{Max \BCSP} to \MaxminQCSPReconf
in a gap-preserving manner,
whose proof uses a similar idea of \cite[Theorem 5]{ito2011complexity}.

\begin{lemma}
\label{lem:BCSP-4CSPR}
    For every number $\epsilon \in (0,1)$ and positive integer $W \geq 2$,
    there exists a gap-preserving reduction from 
    \prb{Gap$_{1, 1-\epsilon}$ \BCSP[$_W$]} to
    \prb{Gap$_{1, 1-\epsilon}$ \QCSPReconf[$_W$]}.
\end{lemma}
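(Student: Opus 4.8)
The plan is to build the $4$-ary reconfiguration instance from the static $2$-ary instance by attaching two Boolean ``clock'' vertices whose joint state gates the original constraints; the extra arity ($2$ clock vertices $+$ the $2$ endpoints of an edge) is exactly why the target is a $4$-CSP. Given $G=(V,E,\Sigma,\Pi)$ with $|\Sigma|=W\ge 2$, fix two distinct values $0,1\in\Sigma$, add fresh vertices $c_1,c_2$, and call $(c_1,c_2)=(0,0)$ the \emph{init mode}, $(1,1)$ the \emph{target mode}, and $(0,1),(1,0)$ the \emph{check modes}. For each $e=(v,w)\in E$ put a single constraint $\pi'_e$ on $(c_1,c_2,v,w)$ (padded to a genuine $4$-uniform hyperedge when $v=w$) that is satisfied exactly when the mode is init or target (no condition on $v,w$), or the mode is a check mode and $(v,w)\in\pi_e$; any clock value outside $\{0,1\}$ is declared unsatisfying. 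These $|E|$ constraints are the whole new instance, so the number of constraints---hence every ``fraction of satisfied constraints''---matches on both sides, and the alphabet is still $\Sigma$. Take $\psi^\ini$ to be (all-$\aaa$ on $V$, mode $(0,0)$) and $\psi^\tar$ to be (all-$\aaa$ on $V$, mode $(1,1)$); both satisfy every $\pi'_e$ trivially. The whole mapping is clearly polynomial time.

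For completeness, suppose $G$ is satisfiable via $\phi\colon V\to\Sigma$. Then the sequence ``stay in init mode and rewrite $V$ one vertex at a time from all-$\aaa$ to $\phi$ (every $\pi'_e$ is vacuous in init mode); flip $c_2$ to reach mode $(0,1)$ with $V$-part $\phi$ (valid precisely because $\phi$ satisfies all $\pi_e$); flip $c_1$ to reach target mode; rewrite $V$ back to all-$\aaa$ in target mode'' is a reconfiguration sequence from $\psi^\ini$ to $\psi^\tar$ along which every assignment satisfies all $\pi'_e$, so $\val_{G'}(\psi^\ini\reco\psi^\tar)=1$.

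For soundness, suppose $G$ is $\epsilon$-far, i.e.\ $\val_G(\phi)<1-\epsilon$ for \emph{every} $\phi\colon V\to\Sigma$. Let $\Psi=(\Psi^{(1)},\dots,\Psi^{(T)})$ be any reconfiguration sequence from $\psi^\ini$ to $\psi^\tar$. The induced sequence of clock modes is a walk on the $4$-cycle $(0,0)-(0,1)-(1,1)-(1,0)-(0,0)$ that starts at $(0,0)$ and ends at $(1,1)$; since $(0,0)$ and $(1,1)$ are non-adjacent there, some $\Psi^{(t^\ast)}$ has its clock in a check mode. At that step each $\pi'_e$ reduces to ``$\Psi^{(t^\ast)}|_V$ satisfies $\pi_e$'', and $\Psi^{(t^\ast)}|_V$ violates more than $\epsilon|E|$ of the $\pi_e$; as the $\pi'_e$ are the only, and exactly $|E|$, constraints, $\val_{G'}(\Psi^{(t^\ast)})<1-\epsilon$, whence $\val_{G'}(\psi^\ini\reco\psi^\tar)<1-\epsilon$ (the value ranges over a finite set, so the strict inequality is kept under the outer $\max$ over sequences). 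Thus the $1$ vs.\ $1-\epsilon$ gap is preserved exactly, giving the lemma; for the displayed ``In particular'' claim of \cref{thm:NP-BCSPR} one then invokes the PCP and parallel repetition theorems to make \prb{Gap$_{1,1-\epsilon}$ \BCSP[$_W$]} \NP-hard and feeds the result through \cref{lem:4CSPR-BCSPR}.

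The main obstacle---and the reason for two Boolean clock vertices rather than one ternary one---is to force every reconfiguration to pass through a state in which the full $2$-CSP is enforced, while disallowing a ``shortcut'': with a single clock variable the init and target states would be adjacent in the reconfiguration graph (they differ in one vertex), so the sequence could hop between them while the original constraints are off. Splitting the clock makes the init and target modes non-adjacent, so a check mode is unavoidable. The remaining points---padding $3$-ary constraints arising from self-loops to true $4$-uniform hyperedges, and treating out-of-range clock values---are routine and only help the soundness direction.
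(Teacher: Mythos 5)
Your construction is essentially the paper's: the paper likewise adds two fresh vertices $x,y$ to every edge, defines $\pi'_{(v,w,x,y)}$ to be satisfied when $(v,w)\in\pi_{(v,w)}$ or $x$ and $y$ agree, and argues completeness by switching through a satisfying assignment while the gate is off and soundness because any sequence from $(\aaa,\aaa)$ to $(\bbb,\bbb)$ must pass a step where the two auxiliary vertices disagree. Your variants (restricting the clocks to $\{0,1\}$, keeping the $V$-part fixed at the endpoints, and padding self-loops) are cosmetic and the argument is correct.
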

\begin{proof}
Given an instance $G=(V,E,\Sigma,\Pi)$ of \prb{Max \BCSP[$_W$]}, where $|\Sigma|=W$,
we create an instance $(\psi'^\ini,\psi'^\tar; G')$ of \MaxminQCSPReconf.
The quaternary constraint graph $G'=(V',E',\Sigma,\Pi')$ on the same alphabet $\Sigma$ is defined as follows:
\begin{description}
    \item[\textbf{Underlying graph}:]
        Create a pair of fresh vertices $x,y$ not in $V$.
        Then, $(V',E')$ is defined as follows:
        \begin{align}
            V' & \coloneq V \cup \{x,y\}, \\
            E' & \coloneq \Bigl\{ (v,w,x,y) \Bigm| (v,w) \in E \Bigr\}.
        \end{align}
    \item[\textbf{Constraints}:]
        The constraint $\pi'_{e'} \subseteq \Sigma^{e'}$
        for each hyperedge $e' = (v,w,x,y) \in E'$ is defined as follows:
        \begin{align}
            \pi'_{e'} \coloneq
            \Bigl\{
                (\alpha_v, \alpha_w, \beta, \gamma) \in \Sigma^{e'}
                \Bigm| (\alpha_v, \alpha_w) \in \pi_{(v,w)} \text{ or } \beta = \gamma
            \Bigr\}.
        \end{align}
        In other words, $e'$ is satisfied if
        $\pi_{(v,w)}$ is satisfied \emph{or}
        $x$ and $y$ have the same value.
\end{description}
Construct two assignments $\psi'^\ini, \psi'^\tar \colon V' \to \Sigma$ for $G'$ such that
$\psi'^\ini(v) \coloneq \aaa$ and
$\psi'^\tar(v) \coloneq \bbb$
for all $v \in V'$, where
$\aaa$ and $\bbb$ are any two distinct values in $\Sigma$.
Observe easily that $\psi'^\ini$ and $\psi'^\tar$ satisfy $G'$,
completing the description of the reduction.

We first prove the completeness; i.e.,
$\exists \psi \colon V \to \Sigma, \val_G(\psi) = 1$ implies
$\val_{G'}(\psi'^\ini \reco \psi'^\tar) = 1$.
Let $\psi \colon V \to \Sigma$ be a satisfying assignment for $G$.
Consider a reconfiguration sequence $\sqpsi'$ from $\psi'^\ini$ to $\psi'^\tar$
obtained by the following procedure:

\begin{itembox}[l]{\textbf{Reconfiguration sequence $\sqpsi'$ from $\psi'^\ini$ to $\psi'^\tar$.}}
\begin{algorithmic}[1]
    \For{\textbf{each} $v$ in $V$}
        \State change the current value of $v$ from $\aaa$ to $\psi(v)$.
    \EndFor
    \State change the current values of $x$ and $y$ from $\aaa$ to $\bbb$ individually.
    \For{\textbf{each} $v$ in $V$}
        \State change the current value of $v$ from $\psi(v)$ to $\bbb$.
    \EndFor
\end{algorithmic}
\end{itembox}
In any intermediate step,
each hyperedge $(v,w,x,y) \in E'$ is satisfied since
$\pi_{(v,w)}$ is satisfied or $x$ and $y$ have the same value,
implying that
$\val_{G'}(\psi'^\ini \reco \psi'^\tar) \geq \val_{G'}(\sqpsi') = 1$,
as desired.

We then prove the soundness; i.e.,
$\forall \psi \colon V \to \Sigma, \val_G(\psi) < 1-\epsilon$
implies
$\val_{G'}(\psi'^\ini \reco \psi'^\tar) < 1-\epsilon$.
Let $\sqpsi'$ be a reconfiguration sequence from $\psi'^\ini$ to $\psi'^\tar$ 
whose value is equal to
$\val_{G'}(\psi'^\ini \reco \psi'^\tar)$.
Since
$(\psi'^\ini(x),\psi'^\ini(y)) = (\aaa,\aaa)$ and
$(\psi'^\tar(x),\psi'^\tar(y)) = (\bbb,\bbb)$,
$\sqpsi'$ must include an assignment $\psi'^\circ$ such that
$\psi'^\circ(x) \neq \psi'^\circ(y)$.
Calculating the value of such $\psi'^\circ$, we derive
\begin{align}
\begin{aligned}
    \val_{G'}(\psi'^\circ)
    & = \frac{1}{|E'|}
        \sum_{(v,w,x,y) \in E'} \Bigl\llbracket
            (\psi'^\circ(v), \psi'^\circ(w)) \in \pi_{(v,w)} \text{ or }
            \underbrace{\psi'^\circ(x) = \psi'^\circ(y)}_{\text{not true}}
        \Bigr\rrbracket \\
    & = \frac{1}{|E|} \sum_{(v,w) \in E} \Bigl\llbracket
            (\psi'^\circ(v), \psi'^\circ(w)) \in \pi_{(v,w)}
        \Bigr\rrbracket \\
    & = \val_G(\psi'^\circ|_V),
\end{aligned}
\end{align}
where $\psi'^\circ|_V$ is the restriction of $\psi'^\circ$ onto $V$; thus,
$\val_{G'}(\psi'^\circ) = \val_G(\psi'^\circ|_V) < 1-\epsilon$.
Consequently, we obtain
$\val_{G'}(\psi'^\ini \reco \psi'^\tar) = \val_{G'}(\sqpsi') < 1-\epsilon$,
as desired.
\end{proof}

We then present a gap-preserving reduction from
\MaxminQCSPReconf to \MaxminBCSPReconf.
Indeed, we demonstrate a general one described below,
whose proof is based on a reduction by \citet{fortnow1994power}
(see also \cite[Proposition~6.2]{radhakrishnan2007dinurs}).

\begin{lemma}
\label{lem:4CSPR-BCSPR}
    For every number $\epsilon \in (0,1)$,
    positive integers $q \geq 3$ and $W \geq 2$,
    there exists a gap-preserving reduction from
    \prb{Gap$_{1,1-\epsilon}$ $q$-CSP$_W$ Reconfiguration}
    to
    \prb{Gap$_{1,1-\frac{\epsilon}{q}}$ \BCSPReconf[$_{W'}$]},
    where $W' = \left(\frac{W(W+1)}{2}\right)^q$.
\end{lemma}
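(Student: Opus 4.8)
The plan is to adapt the classical clause--variable reduction from $q$-ary CSP to binary CSP (à la \citet{fortnow1994power}) into a \emph{reconfiguration-aware} reduction. The difficulty, and the point around which the whole argument turns, is perfect completeness: a single reconfiguration step of the $q$-CSP instance (changing one variable) must be simulated by updating a variable together with every ``clause vertex'' mentioning it, and a naive serialization of these updates passes through binary proofs that are rejected by the consistency check. I would repair this exactly as in the powering step of \cref{sec:amp}, via the \emph{alphabet squaring trick} of \cite{ohsaka2023gap}: let each clause vertex carry, in each of its $q$ coordinates, a \emph{set} of one or two values, so that a variable and its clause vertices can transition through a ``both old and new'' state.

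\textbf{Construction.} Given $(\psi^\ini,\psi^\tar;G)$ with $G=(V,E,\Sigma,\Pi)$, $|\Sigma|=W$, fix an ordering of the (distinct) $q$ vertices of each hyperedge and let $\Sigma'$ be the squared alphabet over $\Sigma$, so $|\Sigma'|=\tfrac{W(W+1)}{2}$. I build a constraint graph $H=(V_H,E_H,\Gamma,\Pi_H)$ with $V_H\coloneq V\uplus\{c_e\mid e\in E\}$ (a fresh clause vertex $c_e$ per hyperedge), alphabet $\Gamma\coloneq\Sigma'^q$ so that $|\Gamma|=W'$ (a value $\gamma$ written $(\gamma[1],\dots,\gamma[q])$), and $E_H\coloneq\{(v,c_e)\mid e\in E,\ v\in e\}$, so $|E_H|=q|E|$. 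On the edge joining the $\ell$-th vertex $v$ of $e$ to $c_e$, the constraint accepts $(\gamma,\gamma')$ (value of $v$, value of $c_e$) iff $\gamma[1]$ is a singleton $\{a\}$ with $a\in\gamma'[\ell]$ and $\gamma'[1]\times\dots\times\gamma'[q]\subseteq\pi_e$. Identify $\alpha\in\Sigma$ with $(\{\alpha\},\{\aaa\},\dots,\{\aaa\})\in\Gamma$ for a fixed default $\aaa$, and for a satisfying assignment $\chi$ of $G$ let its \emph{canonical proof} $\hat\chi$ be $\hat\chi(v)\coloneq\chi(v)$ and $\hat\chi(c_e)\coloneq(\{\chi(v_1)\},\dots,\{\chi(v_q)\})$; this satisfies $H$. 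Set $\psi'^\ini\coloneq\hat\psi^\ini$ and $\psi'^\tar\coloneq\hat\psi^\tar$. The reduction is polynomial time for constant $q,W$.

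\textbf{Completeness.} It suffices to simulate one step of $G$, changing a variable $v$ from $\alpha$ to $\beta$ between satisfying assignments $\chi,\chi'$. Starting from $\hat\chi$, I would: (i) for each $e\ni v$ in turn, enlarge the $\ell_e$-th coordinate of $c_e$ from $\{\alpha\}$ to $\{\alpha,\beta\}$; (ii) change $v$ from $\{\alpha\}$ to $\{\beta\}$; (iii) for each such $e$, shrink that coordinate from $\{\alpha,\beta\}$ to $\{\beta\}$, arriving at $\hat{\chi'}$. Every move touches one vertex, and every intermediate proof satisfies all constraints of $H$: the local-satisfaction clause on a $c_e$ with enlarged coordinate holds because the product of its coordinates equals $\{\chi|_e\}\cup\{\chi'|_e\}\subseteq\pi_e$, while the consistency clause holds because $v$'s singleton ($\{\alpha\}$ or $\{\beta\}$) always lies in the relevant coordinate (one of $\{\alpha\},\{\alpha,\beta\},\{\beta\}$), and edges untouched by the move stay satisfied by induction. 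Concatenating these simulations along the given reconfiguration sequence for $G$ yields one for $H$ of value $1$.

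\textbf{Soundness.} From any reconfiguration sequence $\sqpsi'=(\psi'^{(1)},\dots,\psi'^{(\TTT)})$ for $H$ I extract $\sqpsi=(\psi^{(1)},\dots,\psi^{(\TTT)})$ for $G$ by $\psi^{(\ttt)}(v)\coloneq a$ if $\psi'^{(\ttt)}(v)[1]=\{a\}$, and $\psi^{(\ttt)}(v)\coloneq\aaa$ otherwise. Since each $H$-step changes one vertex --- either a clause vertex (then $\psi^{(\ttt)}=\psi^{(\ttt+1)}$) or a variable (then they differ only there) --- and $\psi^{(1)}=\psi^\ini$, $\psi^{(\TTT)}=\psi^\tar$, the sequence $\sqpsi$ is a valid reconfiguration sequence for $G$. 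Hence if $\val_G(\psi^\ini\reco\psi^\tar)<1-\epsilon$, some $\psi^{(\ttt)}$ violates more than $\epsilon|E|$ hyperedges of $G$. For each such $e$: were all $q$ edges $(v_i,c_e)$ satisfied by $\psi'^{(\ttt)}$, then each $\psi'^{(\ttt)}(v_i)[1]=\{a_i\}$ with $a_i\in\psi'^{(\ttt)}(c_e)[i]$ and $\prod_i\psi'^{(\ttt)}(c_e)[i]\subseteq\pi_e$, whence $(\psi^{(\ttt)}(v_1),\dots,\psi^{(\ttt)}(v_q))=(a_1,\dots,a_q)\in\pi_e$, contradicting that $e$ is violated; so at least one of those $q$ edges is violated. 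As the $q$-edge stars around distinct clause vertices are pairwise disjoint, $\psi'^{(\ttt)}$ violates more than $\epsilon|E|=\tfrac{\epsilon}{q}|E_H|$ edges of $H$, i.e.\ $\val_H(\psi'^{(\ttt)})<1-\tfrac{\epsilon}{q}$, giving $\val_H(\psi'^\ini\reco\psi'^\tar)<1-\tfrac{\epsilon}{q}$. I expect the only genuinely delicate part to be the completeness bookkeeping of the previous paragraph --- verifying that \emph{both} the consistency and the local-satisfaction constraints survive every single intermediate move of the three-phase update --- since everything else (the projection and the disjoint-stars counting in soundness) is routine.
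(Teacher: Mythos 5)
Your proposal is correct and follows essentially the same route as the paper: the clause--variable bipartite incidence graph with alphabet $\Sigma'^q$ via the alphabet squaring trick, the three-phase (enlarge clause coordinates, flip the variable, shrink) completeness argument, and the first-coordinate projection plus disjoint-stars counting for soundness. The only difference is cosmetic --- you require the variable's first coordinate to be a singleton contained in the clause coordinate, whereas the paper only asks for containment $\alpha_{v_i}[1] \subseteq \alpha_e[v_i]$ --- and this does not affect either direction of the argument.
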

\begin{proof}
We describe a gap-preserving reduction from
\prb{Maxmin $q$-CSP$_W$ Reconfiguration}
to
\MaxminBCSPReconf[$_{W'}$].
Let $(\psi^\ini, \psi^\tar; G)$ be an instance of \prb{Maxmin $q$-CSP$_W$ Reconfiguration},
where $G = (V,E,\Sigma,\Pi = (\pi_e)_{e \in E})$ is a $q$-ary constraint graph such that
$|\Sigma| = W$, and
$\psi^\ini$ and $\psi^\tar$ satisfy $G$.
Below, we construct an instance $(\psi'^\ini, \psi'^\tar; G')$ of
\MaxminBCSPReconf[$_{W'}$].
The constraint graph $G'=(V',E',\Sigma',\Pi')$ is defined as follows:

\begin{description}
    \item[\textbf{Underlying graph:}]
        $(V',E')$ is a bipartite graph representation of $(V,E)$;
        i.e., $V'$ is made up of a bipartition $(V, E)$, where
        each vertex of $V'$ corresponds to a vertex or hyperedge of $G$, and
        $E'$ includes an edge $(v,e)$ between $v \in V$ and $e \in E$ if $v$ appears in hyperedge $e$.
        Note that $|E'| = q|E|$.
    
    \item[\textbf{Alphabet:}]
        Use the alphabet squaring trick \cite{ohsaka2023gap} to define
        \begin{align}
            \Sigma' \coloneq
            \left(
                \Bigl\{ \{\alpha\} \Bigm| \alpha \in \Sigma \Bigr\}
                \cup \Bigl\{ \{\alpha, \beta\} \Bigm| \alpha \neq \beta \in \Sigma \Bigr\}
            \right)^q.
        \end{align}
        Note that
        $|\Sigma'| = \left(\frac{W(W+1)}{2}\right)^q$.
        Given an assignment $\psi' \colon V' \to \Sigma'$ for $G'$,
        $\psi'(v)[i]$ designates the \nth{$i$} coordinate of $\psi'(v)$
        for a vertex $v \in V$, and
        $\psi'(e)[v_i]$ designates the \nth{$i$} coordinate of $\psi'(e)$
        for a hyperedge $e = (v_1,\ldots,v_r) \in E$.
        Briefly, for a hyperedge $e$ including a vertex $v$,
        both $\psi'(v)[1]$ and $\psi'(e)[v]$ are
        supposed to claim the value of $v$;
        we will not be aware of the values of
        $\psi'(v)[2], \ldots, \psi'(v)[q]$.
    
    \item[\textbf{Constraints:}]
        The constraint $\pi'_{e'} \subseteq \Sigma'^{e'}$
        for each edge $e' = (v_i, e) \in E'$ with $e = (v_1,\ldots,v_q) \in E$
        is defined as follows:
        \begin{align}
        \label{eq:4CSPR-BCSPR:pi}
            \pi'_{e'} \coloneq
            \left\{
                (\alpha_{v_i}, \alpha_e) \in \Sigma'^{e'}
                \;\middle|\;
                \alpha_{v_i}[1] \subseteq \alpha_e[v_i] \text{ and } 
                \bigtimes_{j \in [q]} \alpha_e[v_j] \subseteq \pi_e
            \right\}
        \end{align}
\end{description}
Construct any two assignments $\psi'^\ini, \psi'^\tar \colon V' \to \Sigma'$ for $G'$ such that
the following hold:
\begin{itemize}
    \item
        $\psi'^\ini(v)[1] \coloneq \{\psi^\ini(v)\}$
        and
        $\psi'^\tar(v)[1] \coloneq \{\psi^\tar(v)\}$
        for all $v \in V$.
    \item
        The remaining entries 
        $\psi'^\ini(v)[2], \ldots, \psi'^\ini(v)[q]$ and
        $\psi'^\tar(v)[2], \ldots, \psi'^\tar(v)[q]$
        for all $v \in V$ are arbitrary.
    \item
        $\psi'^\ini(e)[v_i] \coloneq \{\psi^\ini(v_i)\}$ and
        $\psi'^\tar(e)[v_i] \coloneq \{\psi^\tar(v_i)\}$
        for all $e=(v_1, \ldots, v_q) \in E$ and $i \in [q]$.
\end{itemize}
Observe that $\psi'^\ini$ and $\psi'^\tar$ satisfy $G'$, thereby completing the description of the reduction.

We first prove the completeness; i.e.,
$\val_G(\psi^\ini \reco \psi^\tar) = 1$ implies $\val_{G'}(\psi'^\ini \reco \psi'^\tar) = 1$.
It suffices to consider the case that $\psi^\ini$ and $\psi^\tar$ differ in exactly one vertex, say, $v \in V$.
Define $\alpha \coloneq \psi^\ini(v)$ and $\beta \coloneq \psi^\tar(v)$.
By construction, it holds that
$\psi'^\ini(v)[1] = \{\alpha\}$,
$\psi'^\tar(v)[1] = \{\beta\}$,
$\psi'^\ini(e)[v] = \{\alpha\}$, and
$\psi'^\tar(e)[v] = \{\beta\}$
for every hyperedge $e$ including $v$.
Consider a reconfiguration sequence $\sqpsi'$ from $\psi'^\ini$ to $\psi'^\tar$ obtained by the following procedure:

\begin{itembox}[l]{\textbf{Reconfiguration sequence $\sqpsi'$ from $\psi'^\ini$ to $\psi'^\tar$.}}
\begin{algorithmic}[1]
\For{\textbf{each} hyperedge $e$ including $v$}
    \State change the \nth{$v$} coordinate of $e$'s value from $\{\alpha\}$ to $\{\alpha,\beta\}$.
\EndFor
\State change the first coordinate of the value of $v$ from $\{\alpha\}$ to $\{\beta\}$.
\For{\textbf{each} hyperedge $e$ including $v$}
    \State change the \nth{$v$} coordinate of $e$'s value from $\{\alpha,\beta\}$ to $\{\beta\}$.
\EndFor
\end{algorithmic}
\end{itembox}
In any intermediate step of this transformation,
every constraint is satisfied; i.e.,
$\val_G(\psi'^\ini \reco \psi'^\tar) \geq \val_{G'}(\sqpsi') = 1$, as desired.

We then prove the soundness; i.e.,
$\val_G(\psi^\ini \reco \psi^\tar) < 1-\epsilon$
implies
$\val_{G'}(\psi'^\ini \reco \psi'^\tar) < 1-\frac{\epsilon}{q}$.
Let $\sqpsi' = ( \psi'^{(1)}, \ldots, \psi'^{(\TTT)} )$
be a reconfiguration sequence from $\psi'^\ini$ to $\psi'^\tar$ whose value is equal to
$\val_{G'}(\psi'^\ini \reco \psi'^\tar)$.
Construct then a sequence
$\sqpsi = (\psi^{(1)}, \ldots, \psi^{(\TTT)})$
such that
each assignment $\psi^{(\ttt)} \colon V \to \Sigma$ is determined based on
the first coordinate of $\psi'^{(\ttt)}$; i.e.,
it holds that $\psi^{(\ttt)}(v) \in \psi'^{(\ttt)}(v)[1]$
for all $v\in V$.\footnote{
    If $\psi'^{(\ttt)}(v)[1]$ consists of a pair of values of $\Sigma$,
    either of them is selected according to any prefixed order over $\Sigma$.
}
Observing that $\sqpsi$ is a valid reconfiguration sequence from $\psi^\ini$ to $\psi^\tar$,
we have
$\val_G(\sqpsi) < 1-\epsilon$;
in particular, there exists some $\psi^{(\ttt)}$ such that
$\val_G(\psi^{(\ttt)}) < 1-\epsilon$.
Suppose $\psi^{(\ttt)} $ does not satisfy a hyperedge $e = (v_1,\ldots, v_q) \in E$.
Then, we claim that $\psi'^{(\ttt)}$ must violate \emph{at least one edge} incident to $e$ in $G'$
because otherwise, by \cref{eq:4CSPR-BCSPR:pi}, we obtain 
\begin{align}
\begin{aligned}
    \Bigl(\psi^{(\ttt)}(v_1), \ldots, \psi^{(\ttt)}(v_q)\Bigr)
    & \in \psi'^{(\ttt)}(v_1)[1] \times \cdots \times \psi'^{(\ttt)}(v_q)[1] \\
    & \subseteq \psi'^{(\ttt)}(e)[v_1] \times \cdots \times \psi'^{(\ttt)}(e)[v_q] \\
    & \subseteq \pi_e,
\end{aligned}
\end{align}
i.e., $\psi^{(\ttt)}$ satisfies $e$, a contradiction.
Thus, $\psi'^{(\ttt)}$ violates more than $\epsilon|E|$ edges of $G'$ in total.
Consequently, we derive
\begin{align}
    \val_{G'}(\psi'^\ini \reco \psi'^\tar)
    = \val_{G'}(\sqpsi')
    \leq \val_{G'}(\psi'^{(\ttt)})
    < \frac{|E'|-\epsilon|E|}{|E'|}
    \underbrace{=}_{|E'| = q|E|} 1-\frac{\epsilon}{q},
\end{align}
completing the soundness.
\end{proof}

\paragraph{Acknowledgments.}
I wish to thank the anonymous referees for their careful reading and pointing out
the reference \cite{bogdanov2005gap} and the overlap gap property, e.g., \cite{gamarnik2021overlap,achlioptas2011solution,mezard2005clustering,gamarnik2017limits,wein2021optimal}.

\printbibliography[heading=bibintoc]

\appendix
\section{Omitted Proof}
\label{app:proof}

\begin{proof}[Proof of \cref{lem:lambda}]
By definition of $\lambda(\cdot)$, we have
\begin{align}
\begin{aligned}
    \lambda(G \uplus H)
    & = \max_{\vec{x} \neq \vec{0}, \vec{x} \perp \vec{1}}
    \frac{\|(\mat{A}_G + \mat{A}_H) \vec{x}\|_2}{\|\vec{x}\|_2} \\
    & \leq \max_{\vec{x} \neq \vec{0}, \vec{x} \perp \vec{1}}
    \frac{\|\mat{A}_G \vec{x}\|_2}{\|\vec{x}\|_2}
    + \max_{\vec{y} \neq \vec{0}, \vec{y} \perp \vec{1}}
    \frac{\|\mat{A}_H \vec{y}\|_2}{\|\vec{y}\|_2} \\
    & = \lambda(G) + \lambda(H).
\end{aligned}
\end{align}
Recall that
\begin{align}
    \lambda(GH)
    = \max_{\vec{x} \neq \vec{0}, \vec{x} \perp \vec{1}}
    \frac{\|\mat{A}_G\mat{A}_H \vec{x}\|_2}{\|\vec{x}\|_2}.
\end{align}
Since $\mat{A}_H \vec{x}$ is orthogonal to $\vec{1}$ whenever so is $\vec{x}$, we have
\begin{align}
\begin{aligned}
    \lambda(GH)
    & = \max_{\vec{x} \neq \vec{0}, \vec{x} \perp \vec{1}}
    \frac{\|\mat{A}_G (\mat{A}_H \vec{x})\|_2}{\|\mat{A}_H \vec{x}\|_2}
    \cdot \frac{\|\mat{A}_H\vec{x}\|_2}{\|\vec{x}\|_2} \\
    & \leq \max_{\vec{y} \neq \vec{0}, \vec{y} \perp \vec{1}}
    \frac{\|\mat{A}_G \vec{y}\|_2}{\|\vec{y}\|_2}
    \cdot \max_{\vec{x} \neq \vec{0}, \vec{x} \perp \vec{1}}
    \frac{\|\mat{A}_H\vec{x}\|_2}{\|\vec{x}\|_2} \\
    & = \lambda(G) \cdot \lambda(H),
\end{aligned}
\end{align}
finishing the proof.
\end{proof}

\end{document}